\renewcommand{\emph}[1]{{\color{blue}\textbf{#1}}}
\newcommand{\N}[1]{\ensuremath{{\color{blue}{#1}}}}
\newcommand{\M}[1]{\ensuremath{\mathrm{#1}}}
\renewenvironment{enumerate}{\compactenum}{\endcompactenum}
\renewenvironment{itemize}{\compactitem}{\endcompactitem}
\newlength{\baseindent}
\newcommand{\nomathmargin}{\setlength{\@mathmargin}{0em}}
\theoremstyle{coqtheorem}
\newtheorem{lemma}{Lemma}
\newtheorem{fact}[lemma]{Fact}
\newtheorem{theorem}[lemma]{Theorem}
\newtheorem{corollary}[lemma]{Corollary}
\theoremstyle{nonumberplain}
\newtheorem{proof}{Proof}
\renewcommand{\emph}[1]{{\color{blue}\textbf{#1}}}
\renewcommand\N[1]{{\color{blue}#1}}
\renewcommand\M[1]{\ensuremath{\mathsf{#1}}}
\renewcommand{\nvDash}{\not\models}
\newcommand{\bigand}{\bigwedge}
\newcommand{\ol}[1]{\overline{#1}}
\newcommand{\lam}[2]{\lambda#1{.}\hskip.7pt#2}
\newcommand{\toot}{\leftrightarrow}
\newcommand{\set}[1]{\ensuremath{\{#1\}}}
\newcommand{\mset}[2]{\set{\,#1\mid#2\,}}
\newcommand{\incl}{\ensuremath{\subseteq}}
\newcommand\nat{\M{N}}
\newcommand\oneM{\oldstylenums{1}}
\newcommand\one{\M{1}}
\newcommand\bool{\M{2}}
\newcommand{\btrue}{\mathsf{true}}
\newcommand{\bfalse}{\mathsf{false}}
\renewcommand\phi\varphi
\newcommand\var{\M{V}}
\newcommand\xm{\M{xm}}
\newcommand\XM{\M{XM}}
\newcommand\FX{\M{FX}}
\newcommand\RF{\M{RF}}
\newcommand\RA{\M{RA}}
\newcommand\RP{\M{RP}}
\newcommand{\RPc}[0]{\RP^\mathsf{c}}
\newcommand{\RPs}[1]{\RP_{\!#1}}
\newcommand{\RPsc}[1]{\RP^{\mathsf{c}}_{\!#1}}
\newcommand\IP{\M{IP}}
\newcommand\MP{\M{MP}}
\newcommand\AC{\M{AC}}
\newcommand\AU{\M{AU}}
\newcommand\AX{\M{AX}}
\newcommand\SOSO{S1S$_0$}
\newcommand\modelsUP{\models_{\M{UP}}}
\newcommand\lts{\lhd}
\newcommand\ex[2]{\exists_{#1}#2}
\newcommand\trans[3]{#1\Rightarrow^{#2}_A#3}
\newcommand\transa[3]{#1\Rrightarrow^{#2}_A#3}
\newcommand\col{\M{C}}
\newcommand\equivUP{\equiv_{\M{UP}}}
\newcommand{\substr}[3]{#1_{#2}^{#3}}
\newcommand{\sing}[1]{\M{sing}~#1}
\newcommand{\sgsum}[0]{\col}
\newcommand{\merge}[3]{#2 \sim_{#1} #3}
\newcommand{\notmerge}[3]{#2 \not\sim_{#1} #3}
\newcommand{\fullsat}[2]{#1 \models #2}
\numberwithin{lemma}{section}
\newcommand{\oset}[3][0ex]{%
  \mathrel{\mathop{#3}\limits^{
    \vbox to#1{\kern-2\ex@
    \hbox{$\scriptstyle#2$}\vss}}}}
\newcommand{\transition}[3]{#1 \oset[-0.1ex]{#2}{\to} #3}
\newcommand{\pred}[1]{\ensuremath{p_{#1}}}
\newcommand\dropi[1]{#1_{i..}}
\begin{document}
\title{Constructive Analysis of\\ S1S and Büchi Automata}
\author{Moritz Lichter and Gert Smolka\\
  Saarland University\\
  \normalsize\today}
\date{}
\maketitle

\begin{abstract}
  \noindent
  We study S1S and Büchi automata 
  in the constructive type theory 
  of the Coq proof assistant.  
  For UP semantics (ultimately periodic sequences), 
  we verify Büchi's translation of formulas to automata 
  and thereby establish decidability of S1S constructively.  
  For AS semantics (all sequences), 
  we verify Büchi's translation assuming that 
  sequences over finite semigroups have Ramseyan factorisations (RF).  
  Assuming RF, UP semantics and AS semantics agree.  
  RF is a consequence of Ramsey's theorem 
  and implies the infinite pigeonhole principle, 
  which is known to be unprovable constructively.  
  We show that each of the following properties holds for UP semantics 
  but is equivalent to RF for AS semantics: 
  excluded middle of formula satisfaction, 
  excluded middle of automaton acceptance, and 
  existence of complement automata.
  % We also show that RF is equivalent to
  % AU (UP equivalence implies CS equivalence of automata).
\end{abstract}

\section{Introduction}

S1S is the monadic second-order logic of order
with first-order variables 
ranging over natural numbers and
second-order variables 
ranging over possibly infinite sets of numbers.
Following Büchi~\cite{buechi1962},
decidability of S1S can be shown with 
a compositional translation of formulas to automata
realizing constructions of formulas
with operations on automata.
The automata employed by the translation
are NFAs accepting infinite sequences.
One speaks of Büchi automata
to indicate a particular acceptance
condition for infinite sequences formulated by Büchi.
The reduction of formulas to automata
works well for S1S and various other logics,
including temporal 
logics~\cite{DemriGoronkoLange,Perrin04,thomas97}.

We study S1S and 
Büchi's translation to automata
in the constructive type theory 
of the Coq proof assistant~\cite{coq}.
Coq's type theory extends Martin-Löf type theory 
with an impredicative universe of propositions
such that excluded middle
can be assumed consistently
for all propositions.
% Thus Coq's type theory
% provides for the constructive analysis
% of nonconstructive properties
% following with excluded middle.
This matters for our purposes
since several aspects of S1S
and Büchi's translation
cannot be verified constructively.

We represent sequences over
a type $A$ as functions $\nat\to A$ from natural numbers to $A$
and sets of numbers as boolean sequences
(functions $\nat\to\bool$).
An automaton accepts a sequence
if there is a run on the sequence 
that passes through accepting states infinitely often.

When we verify the operations on automata
needed for the translation of formulas, 
all operations but Büchi's complement operation
can be verified constructively.
The verification of Büchi's complement operation
requires a restricted form of Ramsey's theorem
known as additive Ramsey theorem~\cite{Kolodziejczyk2016,riba12}.
We refer to the property asserted by the theorem as RA.  
If we assume RA,
we can verify the translation of formulas into automata
and thereby show that S1S is decidable.

We will mostly work with a property RF 
that is constructively equivalent to RA.
The verification of Büchi's complement operation
is constructive except a single spot
where an instance of RF is needed.
RF says that every sequence 
over a finite semigroup of colors
has a factorisation $u_0,u_1,u_2,\dots$
such that all strings $u_1,u_2,u_3\dots$
have the same color.
Following Blumensath~\cite{AB15} we call factorisations with this property 
Ramseyan factorisations.
Variants of Ramseyan factorisations appear in the
literature~\cite{Perrin04,Shelah75}.

We show that RF implies the infinite pigeonhole principle
(for every sequence over a finite type there is an element occurring infinitely often),
which is known to be unprovable constructively~\cite{VeldmanBezem92}.
It follows that RF and RA
are unprovable constructively, too.

Let FX be the property that 
the satisfaction relation $I\models\phi$
between interpretations and formulas of S1S
satisfies XM (excluded middle):
$\forall I\phi.~I\models\phi\lor I\nvDash\phi$.
Note that FX is a special instance
of general excluded middle.
A~main result of this paper is a proof 
that FX and RA are equivalent constructively.
As a consequence, 
we know that RA is necessary and sufficient
to correctly formalise S1S in constructive type theory.

We provide two further characterisations of RF
by showing that RF is constructively equivalent
to AC (complement automata exist) and
AX (acceptance by automata satisfies XM).
AC is interesting since it implies that no
complement operation can be verified constructively.

We refer to the standard semantics 
of automata and formulas introduced so far
as \textit{AS semantics} (for all sequences) 
to distinguish it from 
an alternative semantics we call 
\textit{UP semantics}.
UP semantics~\cite{BresolinMP09,calbrix1993,calbrix1994}
is based on ultimately periodic sequences $xy^\omega$,
finitely specified with two strings $x$ and $y$.
Since UP sequences over finite semigroups
obviously have  Ramseyan factorisations,
correctness of Büchi's complement operation 
for UP sequences can be verified constructively.

We show that in constructive type theory
S1S with UP semantics is decidable and classical 
(i.e., UP satisfaction of formulas satisfies XM).
This shows that UP semantics provides a
purely constructive formalisation of S1S.
This is in contrast to AS semantics,
which requires RF to adequately formalise S1S.
We show that, given RF,
UP semantics agrees with AS semantics
as it comes to satisfiability of formulas.

We provide one further 
constructive characterisation of RF
we call AU.  AU says that two automata 
accept the same sequences if
they accept the same UP sequences.
AU is known to hold 
classically~\cite{calbrix1993,calbrix1994}.

There is a remarkable coincidence between
our work and the work of 
Ko{\l}odziejczyk et al.~\cite{Kolodziejczyk2016}
who study the translation of S1S formulas
to automata in RCA$_0$, a system 
of weak second-order arithmetic
(a classical logic satisfying XM).
They show that the following properties
are pairwise equivalent in RCA$_0$:
correctness of Büchi complementation,
decidability of S1S, and the additive Ramsey theorem.

We spend considerable effort on proving 
that both FX and AX imply RF.
For this we establish 
a further constructive characterisation of RF
we call RP for Ramseyan pigeonhole principle.
RP has a straightforward classical proof and
can be related to satisfaction of S1S formulas
and Büchi acceptance of automata.
RP is based on a relation for sequences
over finite semigroups appearing as merging relation in the 
literature~\cite{buechi1973,mcNaughton66,Perrin04,riba12,Shelah75}.
The relation is used in the literature
to prove Ramseyan properties similar to RA and RF
using excluded middle.

\paragraph{Organisation of the paper}
The paper is written at a level of abstraction 
that does not require detailed knowledge of
constructive type theory.
We start with preliminaries concerning
type theory, sequences, and Ramseyan factorisations
and show that RF and RA are equivalent.
We postpone the definition of full S1S
and start with minimal S1S providing  
the basis for the translation to automata.
We review Büchi automata
and show that all operations but complement
can be verified constructively.
In particular,
we verify the correctness
of the operation for existential quantification
for UP semantics.
We then show correctness of Büchi's complement operation,
both for UP sequences (no assumption needed)
and all sequences (RF needed).
We now show that RF, AC, and AU 
are constructively equivalent
and obtain the decidability results
for minimal S1S for both
AS semantics and UP semantics.

We then define full S1S and reduce it to minimal S1S.
What is missing at this point are proofs that
FX and AX imply RF.
For this purpose we introduce RP and show
that it is  equivalent to RF.
We then show that both FX and AX both entail RP.

\paragraph{Coq development}
There is a Coq development 
proving all results of the paper.
Instead of defining AS and UP semantics of S1S separately,
we work with a generalisation,
which can be instantiated for AS and UP semantics.
The Coq development is available at
\url{http://www.ps.uni-saarland.de/extras/S1S}.
The definitions and statements in this
paper are hyperlinked with our Coq development
available for browsing on our project
web page.
\section{Preliminaries}

In constructive type theory, 
the law of excluded middle (XM)
is not built-in and thus propositions 
like $P\lor\neg P$ and $\neg\neg P\to P$
are not trivially provable.
Moreover, there are no native sets,
and functions must be total and
can only be defined computationally.

We write \emph{$\one$} and \emph{$\bool$} 
for the inductive types providing 
the single value~\emph{$\oneM$}
and the boolean values \emph{\M{true}} and \emph{\M{false}},
respectively.
We write \emph{$\nat$} for the inductive type
providing the numbers ${0,\,S0,\,S(S0),\,\ldots\,\,}$.
Note that $Sn=n+1$.  
The letters~$i$, $j$, $k$, $l$, $m$, and $n$
will range over numbers.

We write \emph{$\exists^\omega n.~pn$}
for $\forall k\,\exists n.\,n\ge k\land pn$
and say that
\emph{$p$ holds infinitely often}.
Moreover, 
we write \emph{$\exists n\ge k.\,pn$}
for $\exists n.\,n\ge k\land pn$,
and \emph{$\forall n\ge k.\,pn$}
for $\forall n.\,n\ge k\to pn$.

Let $p$ be a unary predicate on a type $A$.
We say that \emph{$p$ satisfies XM} 
if ${\forall a.~pa\lor\neg pa}$,
and that \emph{$p$ is decidable}
if we have a function $f:A\to\bool$
such that ${\forall a.~pa\toot fa=\btrue}$.
For propositions and predicates with $n\ge2$ arguments
satisfaction of XM and decidability
are defined analogously.
Note that decidable propositions and predicates satisfy XM.  
Since functions definable in constructive type theory
are computable,
decidable predicates are computationally decidable.

We will make use of the fact that
the predicates $\lam{ij}{i<j}$ and $\lam{ij}{i\le j}$ 
are decidable (${i,j:\nat}$).

% A proposition $P$ is \emph{stable}
% if $P\toot\neg\neg P$ holds.
% It is easy to see that 
% a proposition is stable if it satisfies XM.

We will occasionally use the proposition
\begin{align*}
  \N{\XM}&~:=~\forall P.~P\lor\neg P
\end{align*}
which states
that every proposition satisfies XM.
Note that every predicate satisfies XM
if we assume \XM.
Assuming \XM\ in Coq is consistent and
does not change our notion of decidability,
as functions on non propositional types 
stay computable.
Given a proposition $P$, 
we write $\N{\xm(P)}:=P\lor\neg P$.

\begin{fact}
  \label{fact-xm-rules}~
  For all propositions $P$, $Q$ and all predicates $p$:
  \begin{enumerate}
  \item $\neg(P\land Q)\toot(P\to \neg Q)$.
  \item $\neg(\exists x.px)\toot\forall x.\neg px$. 
  \item $\xm(P)\to(P\toot\neg\neg P)$. 
  \item $\xm(\exists x.px)\to((\exists x.px)\toot\neg\forall x.\neg px)$.
  \end{enumerate}
\end{fact}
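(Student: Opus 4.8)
The plan is to prove the four equivalences one by one, noting in advance that only the last two need the $\xm$ hypothesis: the first two are purely constructive rearrangements, and so are the forward implications throughout.

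For claim~1 I would argue both directions directly. From $\neg(P\land Q)$ and an assumed $P$, any $Q$ would give $P\land Q$ and hence a contradiction, so $P\to\neg Q$ follows; conversely, from $P\to\neg Q$ and an assumed $P\land Q$, project to $P$, obtain $\neg Q$, and apply it to the second projection. Claim~2 is handled the same way: from $\neg(\exists x.px)$, a pair consisting of $x$ and $px$ contradicts the hypothesis, yielding $\forall x.\neg px$; conversely, destruct $\exists x.px$ into $x$ and $px$ and refute it with $\forall x.\neg px$.

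For claim~3, the implication $P\to\neg\neg P$ is the constructive triviality (apply the negation to the proof of $P$). For $\neg\neg P\to P$ I would use the hypothesis $\xm(P)$, i.e.\ $P\lor\neg P$, and case split: in the left case return $P$; in the right case $\neg P$ together with $\neg\neg P$ gives a contradiction, so $P$ follows ex falso.

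Claim~4 I would then deduce from claims~2 and~3 rather than reprove it: claim~2 gives $\neg(\exists x.px)\toot\forall x.\neg px$, whose negation is $\neg\neg(\exists x.px)\toot\neg\forall x.\neg px$; claim~3 applied to $P:=\exists x.px$ gives $(\exists x.px)\toot\neg\neg(\exists x.px)$ from $\xm(\exists x.px)$; composing the two equivalences yields the claim. A direct proof also works: the forward direction is constructive as in claim~2, and the backward direction case splits on $\xm(\exists x.px)$, invoking claim~2 in the negative branch. None of this is hard; the only conceptual point to flag is that double-negation elimination is genuinely needed in the backward directions of claims~3 and~4, which is exactly why the $\xm$ hypothesis cannot be dropped there.
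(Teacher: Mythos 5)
Your proof is correct, and the paper itself gives no proof of this fact, treating it as a routine exercise in intuitionistic reasoning. Your derivations are the standard ones, and your observation that claim~4 follows by composing claims~2 and~3 (using that $\toot$ is preserved under negation) is a clean way to avoid repetition; the remark that the $\xm$ hypothesis is genuinely needed only in the backward directions of claims~3 and~4 is accurate.
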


A \emph{sequence} over a type $A$ 
is a function $\sigma:\nat\to A$.
If $\sigma n=a$, 
we say that \emph{$\sigma$ is~$a$ at position $n$}.
We will use the notation $\N{A^\omega}:=\nat\to A$
for the type of sequences over $A$.

Two sequences $\sigma$ and $\tau$ over $A$ \emph{agree}
if $\sigma n=\tau n$ for all $n$.
We write \emph{$\sigma\equiv\tau$} 
to say that $\sigma$ and $\tau$ agree.
We also say that 
two sequences are \emph{equivalent}
if they agree.
The notion of agreement is needed since we work
in a non-extensional type theory.

A \emph{boolean sequence} is a sequence over $\bool$.
The letter $\beta$ will range over boolean sequences.
A boolean sequence may be seen as a decidable set of numbers.
Following this interpretation, 
we write \emph{$n\in\beta$} for $\beta n=\btrue$.
A boolean sequence is called \emph{infinite} if
it is infinitely often $\btrue$ 
(i.e., $\exists^\omega n.\,n\in\beta$),
and \emph{nonempty} if $\exists n.\,n\in\beta$.

In Coq's constructive type theory one can
(computationally) obtain
an element for a nonempty sequence.
This fact is known as \emph{constructive choice}.
We will repeatedly make use of constructive choice
when we construct functions.
Here is a more precise formulation of this fact.

\begin{fact}[Constructive Choice]\label{fact-cc}
  One can define a function that given 
  a boolean sequence~$\beta$
  and a proof of\, $\exists n.~n\in\beta$ 
  yields an $n\in\beta$.
\end{fact}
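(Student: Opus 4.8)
The plan is to realise the function by \textbf{linear search}: starting from $0$, test the positions $0,1,2,\dots$ one after another — each test ``$n\in\beta$'' is decidable because $\beta n$ is itself a boolean — and return the first position that turns out to be $\btrue$. Classically this search obviously terminates, since the hypothesis $\exists n.\,n\in\beta$ provides a bound. The only real difficulty is that this hypothesis lives in the universe of propositions, and in our type theory a proof of it cannot be eliminated directly to produce the number we are supposed to return; so the termination argument has to be packaged with more care.

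Concretely, I would fix $\beta$ and introduce an inductive predicate $T$ on $\nat$ — read $Tk$ as ``the search started at $k$ succeeds'' — generated by two rules: from $k\in\beta$ conclude $Tk$; and from $T(Sk)$ conclude $Tk$. From a proof of $\exists n.\,n\in\beta$ one obtains $T0$ entirely inside the universe of propositions: pick $n$ with $n\in\beta$, which gives $Tn$ by the first rule, and then establish $\forall k.\,Tk\to T0$ by induction on $k$ (the step invokes the second rule) and instantiate it at $k:=n$. Eliminating the existential here is legitimate precisely because its conclusion, $T0$, is again a proposition.

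Next I would define, by structural recursion on a proof of $Tk$, a function that returns a number $n$ together with a proof that $n\in\beta$: given $Tk$, decide $k\in\beta$; if it holds, return $k$ with that proof; if not, the first generating rule is impossible, so the proof of $Tk$ comes from the second rule and hence exposes a proof of $T(Sk)$, on which we recurse. In Coq this inversion is best phrased as a separate transparent function $Tk\to k\notin\beta\to T(Sk)$, so that the guardedness check sees the recursive call is on a structural subterm. Feeding the proof of $T0$ obtained above into this function and projecting out the number yields the desired $n\in\beta$. This is the standard construction of constructive indefinite description for decidable predicates on $\nat$; the entire content is the choice of the auxiliary predicate $T$.

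The step I expect to be the genuine obstacle — and the reason the fact is worth stating explicitly — is exactly this interplay: one must not try to read the witness off the bare existential, but route the computation through $T$, which is designed so that the informative recursion producing the witness is accepted by the kernel while the derivation of $T0$ from the existential stays within the universe of propositions. Everything else — decidability of ``$n\in\beta$'', the downward induction giving $T0$, and the guardedness of the search — is then routine.
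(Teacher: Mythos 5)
Your construction is correct. The paper states this fact without proof, appealing to it as the known principle of constructive choice, and your argument is exactly the standard one it implicitly relies on (Coq's constructive indefinite description for decidable predicates on $\nat$): derive the Prop-level search predicate $T0$ from the existential, then extract the witness by structural recursion on proofs of $T$, with the inversion $Tk\to k\notin\beta\to T(Sk)$ kept transparent so the recursion is accepted.
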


\noindent A function $f:\nat\to\nat$ \emph{enumerates}
a boolean sequence $\beta$
if ${\forall n.~n\in\beta\toot\exists k.\,fk=n}$.

\begin{fact}\label{fact-bseq-enumeration}
  For every infinite boolean sequence $\beta$
  one can obtain a strictly monotone function
  enumerating $\beta$.
  Moreover, 
  for every strictly monotone function $f:\nat\to\nat$
  one can obtain an infinite boolean sequence $\beta$
  such that $f$ enumerates $\beta$.
\end{fact}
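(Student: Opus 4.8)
The plan is to prove the two directions separately, reducing every search step to constructive choice (Fact~\ref{fact-cc}) combined with decidability of $\le$.

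For the first direction, let $\beta$ be infinite, that is $\exists^\omega n.~n\in\beta$. I would construct the enumerating function $f$ by recursion: $f0$ is the least $n$ with $n\in\beta$, and $f(k+1)$ is the least $n$ such that $n>fk$ and $n\in\beta$. These least elements exist and are computable: infiniteness of $\beta$, applied with bound $0$ for $f0$ and with a bound exceeding $fk$ for $f(k+1)$, provides \textit{some} witness via constructive choice, and a bounded search below that witness (decidability of $\le$) then selects the least one. By construction $f$ is strictly monotone, and each $fk$ satisfies $fk\in\beta$, which already yields the implication $(\exists k.~fk=n)\to n\in\beta$.

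For the converse $n\in\beta\to\exists k.~fk=n$ I would use the elementary fact that a strictly monotone function satisfies $fk\ge k$ for all $k$ (induction on $k$). Fix $n$ with $n\in\beta$. The set $\mset{k}{fk\le n}$ is then decidable and bounded by $n$, and it is nonempty: since $f0$ is the least element of $\beta$ and $n\in\beta$, we have $f0\le n$. Hence this set has a greatest element $k_0$, so $fk_0\le n$ while $f(k_0+1)>n$. If $fk_0<n$ held, then $n$ itself would be a witness of ``$n>fk_0$ and $n\in\beta$'', forcing $f(k_0+1)\le n$ and contradicting $f(k_0+1)>n$; therefore $fk_0=n$.

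For the ``moreover'' part, given a strictly monotone $f$ I would take $\beta$ to be the boolean sequence whose value at $n$ is $\btrue$ exactly when $\exists k\le n.~fk=n$; this predicate is decidable, hence representable by a boolean sequence. Using $fk\ge k$ once more, $\exists k\le n.~fk=n$ is equivalent to $\exists k.~fk=n$, so $f$ enumerates $\beta$ directly by definition, and $\beta$ is infinite because for every bound $k$ the number $fk$ lies in $\beta$ (witnessed by $k$ itself) and satisfies $fk\ge k$. The only genuinely non-routine point in all of this is the constructive extraction of least and greatest witnesses of decidable predicates from mere existence proofs, together with the bookkeeping needed to carry the membership and minimality proofs along the recursion that defines $f$; everything else is elementary arithmetic of strictly monotone functions.
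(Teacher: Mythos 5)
Your proposal is correct and follows essentially the same route as the paper: the first direction is the standard least-witness recursion obtained from constructive choice (Fact~\ref{fact-cc}) together with decidability of membership in a boolean sequence, and the second direction rests on exactly the observation the paper uses, namely that $\exists k.\,fk=n$ is equivalent to the bounded, decidable $\exists k\le n.\,fk=n$ for strictly monotone $f$. The extra details you supply (minimality of each $fk$, the greatest $k$ with $fk\le n$, and $fk\ge k$) are just the bookkeeping the paper's terse proof leaves implicit.
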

\begin{proof}
  The first claim follows with constructive choice.
  The second claim follows since 
  $\exists k.\,fk=n$ is equivalent to 
  $\exists k\le n.\,fk=n$ for strictly monotone $f$.
\end{proof}

\emph{Strings} over a type $A$
are provided with an inductive type \emph{$A^+$} 
defined with two constructors:
\begin{align*}
  x:A^+&~::=~a\mid ax\qquad (a:A)
\end{align*}
Our definition does not provide an empty string,
since this is advantageous for the purposes of this paper.
When we say \textit{string over~$A$}
we will always mean an element of $A^+$.
We write $xy$ for the \emph{concatenation}
of two strings $x$ and $y$.

Given two strings  $x$ and $y$ over $A$,
we write \emph{$xy^\omega$} for a sequence
agreeing with the infinite concatenation $xyyy\cdots$.
Formally, we define the sequence $xy^\omega:A^\omega$ 
by recursion on numbers:
\begin{align*}
  ay^\omega(0)&~=~a
  &ay^\omega(Sn)&~=~yy^\omega(n)\\
  (ax)y^\omega(0)&~=~a
  &(ax)y^\omega(Sn)&~=~xy^\omega(n)
\end{align*}
We call a pair $(x,y)$ of two strings over~$A$ 
a \emph{UP sequence} over $A$.
Notationally, we will identify 
the pair $(x,y)$ with the sequence $xy^\omega$.

A \emph{discrete type} is a type $X$
together with a boolean function
deciding equality on~$X$.
%(i.e., $\forall xy.~x=y\toot fxy=\btrue$).
%
A \emph{finite type} is a discrete type $X$
together with a duplicate-free list containing
all elements of $X$.  
The letters \emph{$\Sigma$} 
and~\emph{$Q$} will range
over finite types.
The type $\one$ and $\bool$ 
can be accommodated as finite types
and $\nat$ can be accommodated as discrete type.
Moreover, finite types are closed under taking
\emph{products $\Sigma_1\times\Sigma_2$}
and \emph{sums $\Sigma_1+\Sigma_2$}.
Given a finite type $\Sigma$, 
there is a finite type \emph{$2^\Sigma$}
containing exactly the sets over $\Sigma$.
The sets in $2^\Sigma$ have decidable membership.

\section{Ramseyan Factorisations}
\setCoqFilename{Buechi.FiniteSemigroups}

A \emph{factorisation} of a sequence $\sigma$ over $A$
is a sequence $\tau$ over $A^+$
such that $\sigma$ agrees with 
the infinite concatenation 
$(\tau0)(\tau1)(\tau2)\cdots$.

A \coqlink[FiniteSemigroup]{\emph{finite semigroup}} 
is a finite type $\Gamma$ 
together with an associative operation~$+$.
We will call the elements of finite semigroups \emph{colors}
and define the \emph{color of a string} over $\Gamma$ as follows:
\begin{align*}
  \col(a_0\cdots a_n)&~:=~a_0+\cdots+a_n
\end{align*}
We have $\col(xy)=\col x+\col y$ 
for all strings $x$ and $y$ over $\Gamma$.
Note that string concatenation 
is a semigroup operation for $\Gamma^+$
and that $\col$ is a semigroup morphism
$\Gamma^+\to\Gamma$.
The letter $\Gamma$ will range over finite semigroups.

\setCoqFilename{Buechi.RamseyanFactorizations}
A \coqlink[is_ramseyan_factorization]{\emph{Ramseyan factorisation}} of 
a sequence $\sigma$ over $\Gamma$ 
is a factorisation~$\tau$ of $\sigma$
such that all strings $\tau1,\tau2,\tau3,\ldots$
have the same color.
We call $\col(\tau1)$ the \emph{color of the factorisation}.
Note that $\tau0$ and $\tau1$ may have different colors.
We define the proposition \RF\ as follows:
\begin{itemize}
\item[\emph{\coqlink[RF]{\RF}} $:=$] 
  Every sequence over a finite semigroup 
  has a Ramseyan factorisation.
\end{itemize}

\begin{fact}[][up_admits_ramseyan_fac]
  \label{fact-UP-RF}
  Every UP sequence over a finite semigroup 
  has a Ramseyan factorisation.
\end{fact}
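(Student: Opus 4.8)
The plan is to exhibit an explicit Ramseyan factorisation of $xy^\omega$, namely the one whose first block is $x$ and all of whose later blocks are $y$. Concretely, define $\tau:(\Gamma^+)^\omega$ by $\tau0:=x$ and $\tau(Sn):=y$.

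First I would check that $\tau$ is a factorisation of $xy^\omega$, i.e.\ that $xy^\omega$ agrees with the infinite concatenation $(\tau0)(\tau1)(\tau2)\cdots=x\,y\,y\,y\cdots$. This is a routine computation: one shows that $xy^\omega(n)$ equals the $n$-th symbol of $x\,y\,y\,y\cdots$ by distinguishing whether position $n$ falls inside the prefix contributed by $x$ or in the repeated $y$-part, matching this against the defining recursion of $xy^\omega$ (an inner induction on the string $x$ for the part inside $x$, and stepping through the $y$-blocks by induction on $n$ otherwise). This is precisely the bridge lemma identifying the generic string-concatenation reading of $xy^\omega$ with its recursive definition given in the preliminaries.

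Second, since $\tau(Sn)=y$ for every $n$, all of $\tau1,\tau2,\tau3,\ldots$ are literally the string $y$ and hence share the color $\col y$. Therefore $\tau$ is a Ramseyan factorisation of $xy^\omega$ of color $\col y$, which proves the fact. There is essentially no obstacle here; the only point that is not immediate is the bookkeeping that the two descriptions of $xy^\omega$ coincide. Note in particular that we never need $x$ and $y$ to have the same color, since the definition of Ramseyan factorisation constrains only the blocks $\tau1,\tau2,\ldots$ and explicitly allows $\tau0$ to differ.
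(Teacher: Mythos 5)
Your proposal is correct and is exactly the paper's proof: the factorisation $x,y,y,y,\ldots$ with all later blocks equal to $y$ (hence of common color $\col y$), plus the routine check that this concatenation agrees with the recursively defined $xy^\omega$. No differences worth noting.
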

\begin{proof}
  The sequence $x,y,y,y,\ldots\,$ 
  is a Ramseyan factorisation of~$xy^\omega$.
\end{proof}

We will now show that \RF\ implies the infinite pigeonhole principle 
and Markov's principle.
The infinite pigeonhole principle is unprovable constructively~\cite{VeldmanBezem92}
and Markov's priniciple is unprovable in CIC~\cite{coquand16,pedrot18},
a type theory similar to the one of Coq. Hence \RF\ is unprovable, too.

We define two propositions expressing 
the \textit{infinite pigeonhole principle}
and \textit{Markov's principle}:
\begin{itemize}
\item[\emph{\coqlink[IP]{\IP}} $:=$] 
  For every finite type $\Sigma$
  and every sequence $\sigma$ over $\Sigma$
  there exists a value 
  $a:\Sigma$ such that $\exists^\omega n.~\sigma n=a$.
\item[\emph{\coqlink[MP]{\MP}} $:=$] 
  If a boolean sequence is not constantly false,
  then there exists a position where it is true.
\end{itemize}

\begin{fact}\label{fact-IP-MP}
  \coqlink[RF_implies_IP]{\RF\ implies \IP\ } and \coqlink[IP_implies_MP]{\IP\ implies \MP}.
\end{fact}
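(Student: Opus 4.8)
The plan is to prove the two implications separately; in each case the idea is to instantiate the stronger principle with a cleverly chosen structure and then read off a value that recurs at unboundedly many positions.

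To show that \RF\ implies \IP, I would take a finite type $\Sigma$ and a sequence $\sigma$ over $\Sigma$ and turn $\Sigma$ into a finite semigroup by equipping it with the left-projection operation $a+b:=a$, which is trivially associative; under this operation the color of a string is simply its first letter, since $a_0+\cdots+a_n=a_0$. Applying \RF\ to $\sigma$ yields a Ramseyan factorisation $\tau$, so $\sigma$ agrees with the concatenation $(\tau0)(\tau1)(\tau2)\cdots$ and all the strings $\tau1,\tau2,\ldots$ share one color $a:=\col(\tau1)$. Let $b_k$ denote the index in $\sigma$ at which the $k$-th block begins; then $b_0=0$, and $b_{k+1}>b_k$ because every block is a nonempty string, so $b_k\ge k$. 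For $k\ge1$ the value $\sigma(b_k)$ is the first letter of $\tau k$, which by the choice of operation equals $\col(\tau k)=a$. Since the indices $b_1<b_2<\cdots$ are unbounded, this gives $\exists^\omega n.~\sigma n=a$, which is what \IP\ demands.

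To show that \IP\ implies \MP, let $\beta$ be a boolean sequence that is not constantly false, i.e.\ $\neg\forall n.~\beta n=\bfalse$. The key move is to replace $\beta$ by the monotone boolean sequence $\beta'$ defined by $\beta'n=\btrue$ iff $\exists k\le n.~k\in\beta$, which is computable by bounded search, is monotone (once $\btrue$ it stays $\btrue$), and satisfies $\beta'n=\btrue$ iff $\beta$ is $\btrue$ at some position $\le n$. Instantiating \IP\ with $\beta'$ over the finite type $\bool$ yields a boolean $c$ with $\exists^\omega n.~\beta'n=c$. If $c=\btrue$, any $n$ with $\beta'n=\btrue$ witnesses a position where $\beta$ is $\btrue$. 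If $c=\bfalse$, monotonicity forces $\beta'm=\bfalse$ for every $m$ --- otherwise $\beta'$ would be constantly $\btrue$ from some point on, contradicting $\exists^\omega n.~\beta'n=\bfalse$ --- hence $\beta$ is constantly false, contradicting the hypothesis, and the goal follows ex falso.

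I expect the second implication to be the only subtle point: applying \IP\ directly to $\beta$ fails, since ``$\bfalse$ occurs infinitely often in $\beta$'' is perfectly consistent with $\beta$ not being constantly false and provides no true position, so passing to the monotone accumulator $\beta'$ is precisely what repairs this, because for a monotone boolean sequence ``infinitely often $\bfalse$'' strengthens to ``always $\bfalse$''. The first implication is routine once one notices that \RF\ may be applied with any semigroup structure on $\Sigma$, and left-projection is the one that isolates a single value occurring at unboundedly many positions.
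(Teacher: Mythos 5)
Your proposal is correct and follows essentially the same route as the paper: the left-projection semigroup $\lam{ab}{a}$ for $\RF\to\IP$, and the running-disjunction sequence $\beta'$ (with the case analysis on the boolean value delivered by $\IP$) for $\IP\to\MP$. The extra details you supply (block start indices, monotonicity of $\beta'$) are just elaborations of the paper's argument.
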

\begin{proof}
  $\RF\to\IP$.
  Assume \RF\ and 
  let $\sigma$ be a sequence over a finite type $\Sigma$.
  We fix $\lam{ab}a$ as semigroup operation on $\Sigma$.
  By \RF\ we have a Ramseyan factorisation $\tau$ of~$\sigma$.
  Thus there is a color $a:\Sigma$
  such that the first symbol of $\tau n$ is $a$
  for all $n\ge1$.  
  Hence $\exists^\omega n.~\sigma n=a$.

  $\IP\to\MP$.
  Assume \IP\ and
  let $\beta$ be a boolean sequence such that 
  ${\neg\forall n.~\beta n=\bfalse}$.
  Let $\beta'$ be the boolean sequence such that
  $\beta' n$ is the boolean disjunction of $\beta0,\ldots,\beta n$.
  % =\btrue\toot\exists k.~k\le n\land \beta k=\btrue$.
  By \IP\ we have a boolean value $b$ 
  such that $\exists^\omega n.~\beta' n=b$.
  If $b=\btrue$, we have a position where~$\beta$ is $\btrue$.
  If $b=\bfalse$, we can show that $\beta$ is constantly $\bfalse$,
  which contradicts the assumption.
\end{proof}

\setCoqFilename{Buechi.AdditiveRamsey}

Next we show that \RF\ is equivalent to
a proposition \RA\ 
expressing a weakening of Ramsey's theorem 
called \textit{additive Ramsey theorem} 
in Ko{\l}odziejczyk et al.~\cite{Kolodziejczyk2016}
and \textit{Ramsey's theorem with additive coloring}
in Riba~\cite{riba12}.
The definition of \RA\ requires some preparation.

Let $\Gamma$ be a finite semigroup
and $\gamma$ be a function $\nat\to\nat\to\Gamma$.
We call $\gamma$ an \coqlink[additive]{\emph{additive coloring into $\Gamma$}}
if $\gamma ij+\gamma jk=\gamma ik$ for all $i<j<k$.
A boolean sequence~$\beta$ 
is called \coqlink[homogenous]{\emph{homogeneous for $\gamma$}}
if there exists a color $c$ such that
$\gamma ij=c$ for all $i,j\in\beta$ such that $i<j$.
We now define the proposition \RA\ as follows:
\begin{itemize}
\item[\emph{\coqlink[RA]{\RA}} $:=$] 
  For every additive coloring into a finite semigroup
  there exists an infinite and homogeneous boolean sequence.
\end{itemize}

Given a sequence $\sigma$ and numbers $i<j$,
we write \emph{$\substr\sigma ij$}
for the \emph{substring of~$\sigma$} that
starts at position $i$ and ends at position $j-1$
(i.e., position $i$ is inclusive and 
position $j$ is exclusive).
We realise the notation with
a polymorphic function %of type 
${\forall A.~A^\omega\to\nat\to\nat\to A^+}$
that yields $\substr\sigma ij$ for $i<j$.

Let $\Gamma$ be a finite semigroup.
A sequence $\sigma$ over $\Gamma$ 
may be represented as the additive coloring
$\lam{ij}{\,\col(\substr\sigma ij)}$,
and the relevant part (i.e., $i<j$)
of an additive coloring~$\gamma$ into $\Gamma$
may be represented as the sequence 
$\lam n{\gamma n(Sn)}$.

A factorisation $\tau$ of a sequence $\sigma$
may be represented as the infinite set 
of the starting positions
of the factors $\tau1,\tau2,\ldots$ in $\sigma$.
We represent this set as an infinite boolean sequence.
By Fact~\ref{fact-bseq-enumeration}
we can obtain for a factorisation 
a corresponding infinite boolean sequence,
and for an infinite boolean sequence %containing 0
a corresponding factorisation.
% Switching from 
% an infinite boolean sequence %containing 0
% to the corresponding factorisation
% requires constructive choice.

\setCoqFilename{Buechi.FiniteSemigroups}
An element $a$ of a semigroup is \coqlink[idempotent]{\emph{idempotent}} if $a+a=a$.
It is well-known~\cite{Perrin04}
that for every element $a$
of a finite semigroup there
exists a number $n$ such that
$n\cdot a$ is idempotent
($n\cdot a$ is notation 
for the sum $a+\dots+a$
with $n$ summands).

\setCoqFilename{Buechi.RamseyanFactorizations}
\begin{fact}[][admits_ramseyan_fac_iff_idem_ramseyan_fac]
  \label{fact-RF-idempotence}
  Let $\sigma$ be a sequence over a finite semigroup $\Gamma$
  that has a Ramseyan factorisation.
  Then $\sigma$ has a Ramseyan factorisation 
  with an idempotent color.
\end{fact}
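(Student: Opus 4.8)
\noindent The plan is to start from an arbitrary Ramseyan factorisation of $\sigma$ and \emph{coarsen} it, grouping consecutive factors into blocks chosen so that the color of each block becomes a fixed power of the original color and hence idempotent.

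First I would let $\tau$ be a Ramseyan factorisation of $\sigma$ and put $c := \col(\tau1)$, so that $\col(\tau k) = c$ for all $k\ge1$. By the well-known fact recalled just above, there is a number $n\ge1$ such that $n\cdot c$ is idempotent; here $n\ge1$ comes for free, since $n\cdot c$ is by definition a sum of $n$ summands and the sequence $c,\,2\cdot c,\,3\cdot c,\dots$ must eventually cycle through an idempotent element of the finite semigroup $\Gamma$.

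Next I would define the coarsened factorisation $\tau'$ using a helper function that, given a start index $i$ and a count, returns the concatenation of the corresponding finitely many factors of $\tau$ (defined by recursion on the count). I set $\tau'0 := \tau0$ and let $\tau'(Sk)$ be the concatenation of the $n$ factors $\tau(nk+1),\tau(nk+2),\dots,\tau(nk+n)$. Since $\col$ is a semigroup morphism with $\col(xy) = \col x + \col y$, the color of $\tau'(Sk)$ is $\col(\tau(nk+1)) + \dots + \col(\tau(nk+n)) = n\cdot c$, which is idempotent; thus all of $\tau'1,\tau'2,\tau'3,\dots$ have the same idempotent color.

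It remains to verify that $\tau'$ is a factorisation of $\sigma$, i.e.\ that $\sigma$ agrees with the infinite concatenation $(\tau'0)(\tau'1)(\tau'2)\cdots$. This holds because $\tau'$ is obtained from $\tau$ purely by regrouping, without dropping, reordering, or splitting factors: the factors of $\tau$ used are exactly $\tau0,\tau1,\tau2,\dots$ in order, so the two infinite concatenations are position-wise equal, and the right-hand one agrees with $\sigma$ by assumption. I expect this regrouping step — spelling out the position arithmetic that reads the $n$-fold blocks back as the original sequence and proving the corresponding concatenation lemma — to be the only real technical obstacle; the rest follows immediately from the choice of $n$ and the morphism property of $\col$.
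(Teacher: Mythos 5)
Your proposal is correct and takes essentially the same approach as the paper: choose $n$ with $n\cdot c$ idempotent and regroup the factors $\tau1,\tau2,\dots$ into consecutive blocks of $n$, keeping $\tau0$ as the first block. The paper phrases this as "successively merging $n$ adjacent factors," and your explicit indexing $\tau'(Sk) := \tau(nk{+}1)\cdots\tau(nk{+}n)$ is just a concrete realisation of that same construction.
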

\begin{proof}
  Let $\tau$ be a Ramseyan factorisation of $\sigma$.
  Since $\Gamma$ is finite,
  there exists some number $n$ such that
  $n\cdot\col(\tau1)$ is idempotent.
  Since all factors $\tau1,\tau2,\tau3,\ldots$ 
  have identical color,
  we obtain an idempotent Ramseyan factorisation of~$\sigma$
  by successively merging $n$ adjacent factors of $\tau$
  into a single factor.
\end{proof}

\setCoqFilename{Buechi.AdditiveRamsey}

\begin{fact}[][ramsey_fac_iff_homogenous]
  \label{fact-RF-inf-homo}
  Let $\sigma$ be a sequence over a finite semigroup $\Gamma$.
  Then $\sigma$ has a Ramseyan factorisation
  if and only if there exists
  an infinite boolean sequence 
  that is homogeneous for 
  $\lam{ij}{\,\col(\substr\sigma ij)}$.
\end{fact}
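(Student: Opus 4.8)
The plan is to prove both implications by going back and forth between a Ramseyan factorisation of $\sigma$ and the infinite boolean sequence that collects the starting positions of its factors $\tau1,\tau2,\ldots$ in $\sigma$ — the correspondence already set up via Fact~\ref{fact-bseq-enumeration}. As a preliminary I would note that $\gamma:=\lam{ij}{\col(\substr\sigma ij)}$ is an additive coloring: since $\col$ is a semigroup morphism and $\substr\sigma ij\cdot\substr\sigma jk=\substr\sigma ik$ for $i<j<k$, we get $\gamma ij+\gamma jk=\gamma ik$. This is not needed for the statement but pins down the reading of ``homogeneous for $\gamma$''.

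For the direction from a factorisation to a homogeneous sequence, I would take a Ramseyan factorisation $\tau$ of $\sigma$ and first apply Fact~\ref{fact-RF-idempotence} to replace it by one whose color $c:=\col(\tau1)$ is idempotent. Let $\beta$ be the boolean sequence of the starting positions of $\tau1,\tau2,\ldots$ in $\sigma$; it is infinite because every factor is nonempty, so these positions strictly increase. I then check that $\beta$ is homogeneous for $\gamma$ with color $c$: if $i<j$ both lie in $\beta$, say $i$ starts $\tau k$ and $j$ starts $\tau l$ with $1\le k<l$, then $\substr\sigma ij$ is the concatenation $\tau k\cdots\tau(l-1)$, hence $\gamma ij=\col(\tau k)+\cdots+\col(\tau(l-1))=(l-k)\cdot c=c$, using that all factors from index $1$ on have color $c$ and that $c$ is idempotent.

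For the converse, let $\beta$ be infinite and homogeneous for $\gamma$ with color $c$. I would first set position $0$ of $\beta$ to $\bfalse$; the result is still infinite and still homogeneous, since this only discards pairs, so we may assume $0\notin\beta$. By Fact~\ref{fact-bseq-enumeration} obtain a strictly monotone $f:\nat\to\nat$ enumerating $\beta$, so in particular $f0\ge1$. Define a factorisation $\tau$ of $\sigma$ by $\tau0:=\substr\sigma0{f0}$ and $\tau(Sk):=\substr\sigma{fk}{f(Sk)}$; these are well-defined nonempty strings because $f0>0$ and $f$ is strictly monotone, and their infinite concatenation agrees with $\sigma$ because the half-open intervals $[0,f0),[f0,f1),[f1,f2),\ldots$ partition $\nat$. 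Since $fk<f(Sk)$ and both lie in $\beta$, homogeneity gives $\col(\tau(Sk))=\gamma(fk)(f(Sk))=c$ for all $k$, so $\tau$ is a Ramseyan factorisation of $\sigma$.

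The step I expect to be the real obstacle is the first direction: the obvious candidate boolean sequence does \textit{not} work on its own, because the color of $\substr\sigma ij$ spanning $l-k$ consecutive factors is $(l-k)\cdot c$ rather than $c$ unless $c$ is idempotent — which is exactly why the detour through Fact~\ref{fact-RF-idempotence} is needed. Everything else is constructive bookkeeping: the translation between factorisations and infinite boolean sequences is delegated to Fact~\ref{fact-bseq-enumeration}, and the only further wrinkle is that a homogeneous $\beta$ may contain $0$, which must be removed so that $\tau0$ can be a nonempty string.
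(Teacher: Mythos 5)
Your proposal is correct and follows essentially the same route as the paper: detour through Fact~\ref{fact-RF-idempotence} to get an idempotent color, then use the factorisation--boolean-sequence correspondence from Fact~\ref{fact-bseq-enumeration}. You have simply spelled out the bookkeeping that the paper leaves implicit, including the (genuinely necessary) adjustment when $0\in\beta$.
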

\begin{proof}
  Let $\sigma$ have an Ramseyan factorisation.
  By Fact~\ref{fact-RF-idempotence} we have
  a Ramseyan factorisation $\tau$ for $\sigma$
  that has an idempotent color~$c$.
  Now the infinite boolean sequence 
  representing $\tau$ satisfies the claim.
  
  Let~$\beta$ be an infinite boolean sequence
  that is homogeneous for
  $\lam{ij}{\,\col(\substr\sigma ij)}$.
  Then the factorisation represented by $\beta$ 
  is a Ramseyan factorisation of $\sigma$.
  % Let $n\ge1$.
  % By our assumptions there exist $i,j\in\beta$
  % such that $i<j$ and $\tau n=\substr\sigma i j$.
  % Thus $\col(\tau n)=\col(\substr\sigma i j)$
  % which is constant since $\beta$ is homogeneous for 
  % $\lam{ij}{\,\col(\substr\sigma ij)}$.
\end{proof}

\begin{fact}[][RF_implies_RA]
  \label{fact-ra-implies-rf}
  \RA\ implies \RF.
\end{fact}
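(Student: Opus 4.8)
The plan is to reduce everything to Fact~\ref{fact-RF-inf-homo}, which already equates the existence of a Ramseyan factorisation of a sequence $\sigma$ over $\Gamma$ with the existence of an infinite boolean sequence that is homogeneous for $\lam{ij}{\,\col(\substr\sigma ij)}$. So it suffices to exhibit such a homogeneous sequence, and \RA\ is exactly the tool for that, provided we first check that $\lam{ij}{\,\col(\substr\sigma ij)}$ is an additive coloring.

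First I would fix an arbitrary sequence $\sigma$ over a finite semigroup $\Gamma$ and set $\gamma:=\lam{ij}{\,\col(\substr\sigma ij)}$. To see that $\gamma$ is an additive coloring into $\Gamma$, take $i<j<k$. The substring function satisfies $\substr\sigma ij\,\substr\sigma jk=\substr\sigma ik$ (both sides are the string formed by $\sigma i,\dots,\sigma(k-1)$), and since $\col$ is a semigroup morphism with $\col(xy)=\col x+\col y$, we obtain $\gamma ij+\gamma jk=\col(\substr\sigma ij)+\col(\substr\sigma jk)=\col(\substr\sigma ik)=\gamma ik$. Note that the domains match up: the additive-coloring condition only constrains pairs $i<j$, which is precisely where $\substr{\cdot}{\cdot}{\cdot}$ is defined, so $\gamma$ is a total function of the right type.

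Next I would apply \RA\ to $\gamma$ to obtain an infinite boolean sequence $\beta$ that is homogeneous for $\gamma=\lam{ij}{\,\col(\substr\sigma ij)}$. Finally, the right-to-left direction of Fact~\ref{fact-RF-inf-homo} turns $\beta$ into a Ramseyan factorisation of $\sigma$. Since $\sigma$ was arbitrary, this establishes \RF.

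I do not expect a real obstacle here: the only things to get right are the substring concatenation identity and the bookkeeping that the additive-coloring condition and the substring function agree exactly on pairs $i<j$. All the genuine combinatorial content—merging adjacent factors to reach an idempotent color and reading a homogeneous boolean sequence as a factorisation—has already been packaged into Facts~\ref{fact-RF-idempotence} and~\ref{fact-RF-inf-homo}.
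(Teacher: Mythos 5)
Your proposal is correct and follows exactly the paper's argument: apply \RA\ to the additive coloring $\lam{ij}{\,\col(\substr\sigma ij)}$ and convert the resulting infinite homogeneous boolean sequence into a Ramseyan factorisation via Fact~\ref{fact-RF-inf-homo}. The only difference is that you spell out the substring concatenation identity showing additivity, which the paper leaves implicit.
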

\begin{proof}
  Assume \RA\ and let 
  $\sigma$ be a sequence over a finite semigroup~$\Gamma$.
  By Fact~\ref{fact-RF-inf-homo} it suffices to show
  that there is  an infinite boolean sequence that
  is homogeneous for $\lam{ij}{\,\col(\substr\sigma ij)}$.
  This follows with \RA\ since
  $\lam{ij}{\,\col(\substr\sigma ij)}$
  is an additive coloring into $\Gamma$.
\end{proof}

\begin{fact}[][RA_implies_RF]
  \label{fact-rf-implies-ra}
  \RF\ implies \RA.
\end{fact}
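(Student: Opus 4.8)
The plan is to exploit the correspondence between sequences over $\Gamma$ and additive colorings into $\Gamma$ that was set up above, together with Fact~\ref{fact-RF-inf-homo}. Assume \RF\ and let $\gamma$ be an additive coloring into a finite semigroup $\Gamma$; we must produce an infinite boolean sequence that is homogeneous for $\gamma$. First I would form the sequence $\sigma := \lam n{\gamma n(Sn)}$ over $\Gamma$ and apply \RF\ to obtain a Ramseyan factorisation of $\sigma$. By Fact~\ref{fact-RF-inf-homo} this yields an infinite boolean sequence $\beta$ that is homogeneous for $\lam{ij}{\,\col(\substr\sigma ij)}$. It then remains to check that this coloring agrees with $\gamma$ on the relevant pairs, i.e., $\col(\substr\sigma ij)=\gamma ij$ for all $i<j$; once this is established, $\beta$ is homogeneous for $\gamma$ as well, and we are done.

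The key computation — and the only slightly delicate step — is this identity $\col(\substr\sigma ij)=\gamma ij$ for $i<j$. I would prove it by induction on $j$ with $i$ held fixed. For $j=Si$ the substring $\substr\sigma ij$ is the single symbol $\sigma i=\gamma i(Si)$, so $\col(\substr\sigma ij)=\gamma i(Si)$ and both sides agree. For the step from $j$ to $Sj$ with $i<j$, we split $\substr\sigma i(Sj)=(\substr\sigma ij)(\sigma j)$, so $\col(\substr\sigma i(Sj))=\col(\substr\sigma ij)+\col(\sigma j)=\gamma ij+\gamma j(Sj)$ by the induction hypothesis, and this equals $\gamma i(Sj)$ because $\gamma$ is additive. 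The only fussiness here is the "no empty string" convention when decomposing $\substr\sigma i(Sj)$ — the base case has to be the singleton string rather than an empty one — but this is routine.

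The argument carries essentially no mathematical difficulty once Fact~\ref{fact-RF-inf-homo} is available: its whole content is the observation that an additive coloring is precisely the $\col$-coloring of the sequence of its one-step values, so that a Ramseyan factorisation of that sequence provides exactly the infinite homogeneous boolean sequence demanded by \RA. The main obstacle, such as it is, is the bookkeeping in the telescoping identity above.
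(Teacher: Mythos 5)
Your proof is correct and follows essentially the same route as the paper: form $\sigma n := \gamma n(Sn)$, apply \RF\ and Fact~\ref{fact-RF-inf-homo} to get an infinite $\beta$ homogeneous for $\lam{ij}{\,\col(\substr\sigma ij)}$, and conclude via the identity $\col(\substr\sigma ij)=\gamma ij$ for $i<j$. The paper states that identity without proof; you supply the routine induction on $j$, which is the only added detail.
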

\begin{proof}
  Assume \RF\ and 
  let $\gamma$ be an additive coloring
  into a finite semigroup $\Gamma$.
  We show that there exists
  an infinite boolean sequence~$\beta$
  that is homogeneous for $\gamma$.
  We consider the sequence $\sigma n:=\gamma n(Sn)$.
  By \RF\ and Fact~\ref{fact-RF-inf-homo} 
  there exist an infinite $\beta$
  that is homogeneous for $\lambda i j. \col(\substr{\sigma}{i}{j})$.
  The claim follows since 
  $\gamma ij =\col(\substr\sigma ij)$ for all $i<j$.
\end{proof}

\setCoqFilename{Buechi.MinimalS1S}
\section{Minimal S1S}

We consider a minimal variant of S1S that has
no first-order variables.  
Full S1S with both kinds of variables reduces
to the minimal variant of S1S we consider.
We shall use the shorthand \emph{\SOSO}{} for minimal S1S.
Other variants of S1S 
not using first-order variables 
appear in~\cite{AB15,thomas97}.

We start with a finite type \emph{$\var$} of \emph{variables}
and formalise the syntax of minimal S1S
with an inductive type of \emph{formulas}:
\begin{equation*}
  \coqlink[MinForm]{\varphi,\psi}~::=~
  X\lts Y\mid 
  X\incl Y\mid
  \varphi\land\psi\mid
  \neg\varphi\mid
  \exists X.\varphi
  \qquad
  (X,Y:\var)
\end{equation*}
Informally speaking, variables range over sets represented
as boolean sequences.  A formula $X\lts Y$ says that there
are numbers $m<n$ such that $m\in X$ and ${n\in Y}$.
Moreover, a formula $X\incl Y$ says that~$X$ is a subset of $Y$.

We now formally define the \emph{AS~semantics} of \SOSO.
An \emph{interpretation} is a sequence over $2^\var$
(the finite type containing all sets of variables).
We write $\sigma_X$ for the boolean sequence such that 
${\forall n.~\sigma_Xn=\btrue\toot X\in\sigma n}$.
Note that $\sigma_X$ represents the set for the variable~$X$.
We define interpretations as sequences 
so that we can translate a formula into an automaton
that accepts exactly the sequences satisfying the formula.
The letter $\sigma$ will range over interpretations in the following.

We define the satisfaction relation \coqlink[satisfies]{\emph{$\sigma\models\phi$}}
between interpretations and formulas by recursion on formulas
such that the following equivalences trivially hold:
\begin{align*}
  \sigma\models X{\lts}Y
  &~\toot~\exists mn.~m<n\land m\in\sigma_X\land n\in \sigma_Y
  \\
  \sigma\models X{\incl}Y
  &~\toot~\forall n.~n\in\sigma_X\to n\in\sigma_Y
  \\
  \sigma\models \phi{\land}\psi
  &~\toot~\sigma\models \phi\land \sigma\models\psi
  \\
  \sigma\models \neg\phi
  &~\toot~\neg(\sigma\models\phi)
  \\
  \sigma\models \exists X.\phi
  &~\toot~\exists\tau.~\tau\models\phi\land\sigma\approx_X\tau
\end{align*}
The notation \coqlink[equiv_apart]{\emph{$\sigma\approx_X\tau$}} stands for
$\forall Z.~Z=X\lor \sigma_Z\equiv\tau_Z$
and says that $\sigma$ and $\tau$ agree 
for all variables but possibly $X$.
% $\forall n.~X\in\sigma n\toot X\in\tau n$.

% We now define the proposition
% \begin{align*}
%   \N{\SXMO}&~:=~\forall\sigma\phi.~\sigma\models\phi\lor\sigma\models\neg\phi
% \end{align*}

We define the \emph{UP semantics} for \SOSO.
Everything stays as it is 
except that all sequences over $2^\var$
are replaced with UP sequences over~$2^\var$.
This is also the case for 
the existentially quantified sequence $\tau$.
Recall that a UP sequence is a \textit{pair} 
of two strings $x$ and $y$ that is
interpreted as the sequence $xy^\omega$.
We will write \emph{$xy^\omega\modelsUP\phi$}
for the satisfaction relation of \SOSO\
with UP semantics.
% Formally, we write 
% $\sigma\models^{\M{UP}}\phi$ and \SXMOUP\ 
% for the UP versions of 
% $\sigma\models\phi$ and \M{SXM}.

\section{Translation to Abstract Automata}

% Many results for \SOSO\ follow from a translation
% of formulas to automata.
Many aspects of the translation of formulas 
to automata can be explained
without knowing the details of automata.
We will therefore work
with an abstract type of \coqlink[AbstractAutomata]{\emph{automata}}
in this section.
The letters~$A$ and $B$ will range over 
automata of this type,
and the letters $\sigma$ and~$\tau$ will range over 
sequences over $2^\var$.
We assume an 
\emph{acceptance relation \coqlink[lang]{$\sigma\models A$}}
between sequences and automata.
We read $\sigma\models A$ 
as~\emph{$\sigma$~satisfies $A$}
or as \emph{$A$~accepts $\sigma$}.
% We say that an automaton $A$ \emph{represents} a formula $\phi$
% if $\sigma\models A\toot\sigma\models\phi$ holds for all sequences $\sigma$.
We also assume functions
\emph{$\M{A}^\incl_{XY}$},  \
\emph{$\M{A}^\lts_{XY}$}, \
\emph{$A\cap B$}, \
\emph{$\ol A$}, and 
\emph{$\ex X A$}
on automata mimicking the constructors for formulas.
We refer to these functions as \textit{operations}
and assume they have the following properties.
\begin{enumerate}
\coqitem[less_aut_correct] 
  $\sigma\models \M{A}^\lts_{XY}~\toot~
  \exists mn.~m<n\land
  m\in\sigma_X\land 
  n\in \sigma_Y$
\coqitem[incl_aut_correct] 
  $\sigma\models \M{A}^\incl_{XY}~\toot~
  \forall n.~n\in\sigma_X\to n\in\sigma_Y$
\coqitem[intersect_aut_correct] 
  $\sigma\models A\cap B~\toot~\sigma\models A\land \sigma\models B$
\coqitem[complement_aut_disjoint]
  $\sigma\models A\to\sigma\models\ol A\to\bot$
\coqitem[complement_aut_exhaustive] $\sigma\models A\lor\sigma\models\ol A$
\coqitem[ex_aut_correct] 
  $\sigma\models\ex XA~\toot~\exists\tau.~
  \tau\models A\land\sigma\approx_X\tau$
\end{enumerate}
Note that the specification of complement automata
deviates from the specification of the other operations.
It consists of two assumptions (4) and (5)
we call \emph{disjointness} and \emph{exhaustiveness}.
The specification of complement automata
with disjointness and exhaustiveness 
rather than a single equivalence 
is necessitated by the constructive analysis.

\begin{fact}\label{fact-abstract-automata-disj_exh}
  \coqlink[complement_aut_correct]{$\sigma\models\ol A\toot\sigma\nvDash A$}
  and
  \coqlink[aut_lang_sat_xm]{$\sigma\models A\lor\sigma\nvDash A$}.
\end{fact}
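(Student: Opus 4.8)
The plan is to derive both claims purely from the two assumed properties of complement automata, disjointness~(4) and exhaustiveness~(5); no further principles (in particular no excluded middle) are needed, which is the point of having split the complement specification into these two halves.

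First I would prove the equivalence $\sigma\models\ol A\toot\sigma\nvDash A$. For the left-to-right direction, assume $\sigma\models\ol A$ and additionally $\sigma\models A$; then disjointness~(4) applied to these two hypotheses yields $\bot$, so $\sigma\nvDash A$. For the right-to-left direction, assume $\sigma\nvDash A$ and invoke exhaustiveness~(5) to get $\sigma\models A\lor\sigma\models\ol A$; the left disjunct contradicts the assumption, and the right disjunct is exactly what we want, so a case analysis closes the goal.

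Next I would prove $\sigma\models A\lor\sigma\nvDash A$. Again apply exhaustiveness~(5) to obtain $\sigma\models A\lor\sigma\models\ol A$ and do a case analysis: in the first case we are immediately in the left disjunct; in the second case, $\sigma\models\ol A$ together with the equivalence just established (its left-to-right direction) gives $\sigma\nvDash A$, i.e.\ the right disjunct.

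There is no real obstacle here: both statements are immediate propositional consequences of (4) and (5), and the only thing to be careful about is that the argument stays constructive, using a disjunction supplied by exhaustiveness rather than an appeal to $\xm$. The slightly delicate point, worth stating explicitly, is that the equivalence $\sigma\models\ol A\toot\sigma\nvDash A$ does \emph{not} let us conclude decidability of $\sigma\models A$ on its own; decidability of acceptance (property AX) is genuinely stronger and is exactly what later sections relate to RF.
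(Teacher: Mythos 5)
Your proof is correct and is precisely the argument the paper leaves implicit (no proof is given for this fact in the paper, presumably because it is an immediate propositional consequence of disjointness and exhaustiveness, exactly as you derive). The final remark correctly observes that the equivalence alone does not yield decidability; note, though, that what the paper later calls AX is not "decidability of acceptance" but acceptance satisfying XM, i.e.\ exactly statement (2) here, stated for concrete Büchi automata rather than for the abstract automata of this section.
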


Together, the two statements of 
Fact~\ref{fact-abstract-automata-disj_exh}
are equivalent to disjointness and exhaustiveness
of complement automata. However, the first statement
of Fact~\ref{fact-abstract-automata-disj_exh}
does not suffice for exhaustiveness constructively.

We now define a function \coqlink[formula_aut]{\emph{$\alpha$}}
translating formulas into automata:
\begin{align*}
  \alpha(X\lts Y)
  &~=~\M{A}^\lts_{XY}
  &\alpha(\phi\land\psi)
  &~=~\alpha(\phi)\cap\alpha(\psi)
  \\
  \alpha(X\incl Y)
  &~=~\M{A}^\incl_{XY}
  &\alpha(\neg\phi)
  &~=~\ol{\alpha(\phi)}
  \\
  &&\alpha(\exists X.\phi)
  &~=~\ex {X}{(\alpha(\phi))}
\end{align*}

\begin{fact}[][formula_aut_correct]
  \label{fact-absaut-trans}
  Let automata and operations
  satisfying the assumptions stated above be given.
  Then ${\sigma\models\phi\toot\sigma\models\alpha(\phi)}$.
  Thus formala satisfaction satisfies XM
  and satisfiability of formulas is decidable
  if satisfiability of automata is decidable.
\end{fact}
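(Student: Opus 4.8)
The plan is to prove the equivalence $\sigma\models\phi\toot\sigma\models\alpha(\phi)$ by structural induction on the formula $\phi$, and then derive the two consequences about excluded middle and decidability as easy corollaries. For the base cases $\phi = X\lts Y$ and $\phi = X\incl Y$, the definition of $\alpha$ gives $\alpha(\phi) = \M{A}^\lts_{XY}$ respectively $\alpha(\phi) = \M{A}^\incl_{XY}$, so the claim is immediate from assumptions (1) and (2) on the operations combined with the defining equivalences of $\sigma\models\phi$ for these two formula shapes. For the conjunction case, $\alpha(\phi\land\psi) = \alpha(\phi)\cap\alpha(\psi)$, and assumption (3) together with the two induction hypotheses gives $\sigma\models\alpha(\phi)\cap\alpha(\psi)\toot\sigma\models\alpha(\phi)\land\sigma\models\alpha(\psi)\toot\sigma\models\phi\land\sigma\models\psi\toot\sigma\models\phi\land\psi$.

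The two remaining cases require a little more care. For negation, $\alpha(\neg\phi)=\ol{\alpha(\phi)}$, so I need $\sigma\models\neg\phi\toot\sigma\models\ol{\alpha(\phi)}$, i.e.\ $\neg(\sigma\models\phi)\toot\sigma\models\ol{\alpha(\phi)}$. By the induction hypothesis $\sigma\models\phi\toot\sigma\models\alpha(\phi)$, so it suffices to show $\neg(\sigma\models\alpha(\phi))\toot\sigma\models\ol{\alpha(\phi)}$, which is exactly the first statement of Fact~\ref{fact-abstract-automata-disj_exh} (derived from disjointness and exhaustiveness). This is the step where exhaustiveness (assumption (5)) is genuinely needed rather than just disjointness, and it is worth flagging as the subtle point: without it, only one direction of the negation case would go through constructively. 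For the existential case, $\alpha(\exists X.\phi)=\ex X{(\alpha(\phi))}$, and assumption (6) gives $\sigma\models\ex X{(\alpha(\phi))}\toot\exists\tau.~\tau\models\alpha(\phi)\land\sigma\approx_X\tau$; applying the induction hypothesis to the existentially bound $\tau$ rewrites $\tau\models\alpha(\phi)$ to $\tau\models\phi$, which matches the defining equivalence $\sigma\models\exists X.\phi\toot\exists\tau.~\tau\models\phi\land\sigma\approx_X\tau$. (A minor point: the induction hypothesis must be stated as a $\forall\sigma$ statement so that it can be instantiated at $\tau$; this is automatic since $\sigma$ is free in the claim.)

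For the two consequences: once $\sigma\models\phi\toot\sigma\models\alpha(\phi)$ is established, excluded middle for formula satisfaction, $\sigma\models\phi\lor\sigma\nvDash\phi$, follows by transporting the second statement of Fact~\ref{fact-abstract-automata-disj_exh}, namely $\sigma\models\alpha(\phi)\lor\sigma\nvDash\alpha(\phi)$, across the equivalence. Likewise, if satisfiability of automata is decidable, say via a boolean test on $\alpha(\phi)$, then composing that test with the equivalence $\sigma\models\phi\toot\sigma\models\alpha(\phi)$ yields a decision procedure for satisfiability of $\phi$. None of these steps is hard; the only place demanding attention is the negation case, where the asymmetric (disjointness-plus-exhaustiveness) specification of complement automata is precisely what makes the argument go through constructively, and I expect that to be the main conceptual obstacle even though it is discharged by Fact~\ref{fact-abstract-automata-disj_exh}.
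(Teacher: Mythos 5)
Your proof is correct and takes the expected approach: structural induction on $\phi$, discharging each case by the corresponding operation assumption, with the negation case handled via Fact~\ref{fact-abstract-automata-disj_exh} and the existential case requiring the induction hypothesis to be stated uniformly over interpretations. The paper leaves this fact unproved as routine, and your argument is precisely the routine one, including the correct observation that exhaustiveness (not just disjointness) is what makes the backward direction of the negation case go through constructively.
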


\section{Büchi Automata}

We now consider NFAs with Büchi acceptance.
It turns out that all operations but complement
can be defined and verified constructively
following familiar ideas.
Care must be taken with the formulation
of Büchi acceptance.
We require 
that accepting states are visited infinitely often.
Constructively, this is weaker than requiring
that a single accepting state is visited infinitely often,
as it is sometimes done in the literature~\cite{thomas97}.
Choosing the weak version 
is of particular importance for the complement operation,
which we will consider in a later section.

The usual method for 
deciding satisfiability of Büchi acceptance
(non-emptiness problem)
can be verified constructively.
In fact, the method yields a satisfying UP sequence
in case the NFA is satisfiable.

As it comes to UP sequences and UP semantics,
only the operation for existential quantification
needs special attention.
We introduce the notion of a match 
to deal with this issue.

\setCoqFilename{Buechi.NFAs}
We formalise 
\coqlink[NFA]{\emph{NFAs}} (nondeterministic finite automata) 
over a finite type $\Sigma$ called \emph{input alphabet}
as tuples consisting of
a finite type of \emph{states},
a decidable \emph{transition relation},
and decidable predicates 
identifying \emph{initial} and \emph{accepting} states.

\setCoqFilename{Buechi.Buechi}
Let $A$ be an NFA over $\Sigma$ with state type $Q$.
A sequence $\rho$ over $Q$ \emph{admits} 
a sequence~$\sigma$ over $\Sigma$
if $(\rho n,\sigma n,\rho(Sn))$ 
is a transition of $A$ for all $n$.
A \emph{run on $\sigma$}
is a sequence $\rho$ over $Q$
that starts with an initial state and admits $\sigma$.
% A \emph{run} is a sequence~$\rho$
% such that $\rho$ is a run on some sequence.
A run is \coqlink[accepting]{\emph{accepting}} if 
it passes infinitely often through accepting states.

An NFA $A$ over $\Sigma$
\coqlink[accepts]{\emph{accepts}} a sequence~$\sigma$ over $\Sigma$
if $A$ has an accepting run on $\sigma$.
We write \emph{$\sigma\models A$} 
if $A$ accepts $\sigma$.
We also say that $\sigma$ satisfies $A$
if $A$ accepts $\sigma$, 
and  that $A$ is \emph{satisfiable} 
if $A$ accepts some sequence.

Recall that the type theory we are working in 
does not provide functional extensionality.
This does not hurt since automata access
sequences only pointwise.

% \begin{fact}
%   If an automaton accepts a sequence,
%   it accepts all equivalent sequences.
% \end{fact}

% Note that we defined a run to be accepting
% if it passes through accepting states
% infinitely often.  Constructively,
% this is weaker than requiring 
% that there is some state through wich
% the run passes infinitely often.  
% % Classically,
% % the two conditions are equivalent
% % since $Q$ is finite.
% We have chosen the weaker definition 
% since we need it for Büchi's complement construction.

\setCoqFilename{Buechi.MinimalS1S}

\begin{fact}[][incl_nfa_correct]
  \label{fact-NFA-incl}
  One can define a function
  that given two variables $X$ and~$Y$ yields 
  an NFA~\emph{$\M{A}^\incl_{XY}$} over $2^\var$ 
  such that
  $\sigma\models \M{A}^\incl_{XY}$ if and only if
  $\forall n.~n\in\sigma_X\to n\in\sigma_Y$.
\end{fact}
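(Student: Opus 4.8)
The plan is to realise $\M{A}^\incl_{XY}$ as the one-state automaton that loops on exactly those letters in which $X$ implies $Y$. Concretely, I would take $\one$ as the state type, declare $\oneM$ to be both the unique initial and the unique accepting state, and let the transition relation contain the triple $(q,a,q')$ precisely when $X\in a\to Y\in a$ (the states $q$ and $q'$ playing no role). This data is determined uniquely by $X$ and $Y$, so it defines the required function on pairs of variables.

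First I would check that this is a well-formed NFA, i.e.\ that all components are decidable: $\one$ is a finite type; the initial and accepting predicates are constantly true and hence decidable; and the transition relation is decidable because membership $X\in a$ in $2^\var$ is decidable (Section~2) and an implication between decidable propositions is again decidable. Next I would analyse the runs. Since the state type is $\one$, the only sequence over states is the constant sequence $\lam n\oneM$, and it starts with the initial state. Hence $\M{A}^\incl_{XY}$ has a run on $\sigma$ iff this constant sequence admits $\sigma$, i.e.\ iff $(\oneM,\sigma n,\oneM)$ is a transition for every $n$. Moreover every run is accepting, since the constant sequence sits at the accepting state $\oneM$ at every position, in particular infinitely often.

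Putting these observations together, $\sigma\models\M{A}^\incl_{XY}$ holds iff $(\oneM,\sigma n,\oneM)$ is a transition for all $n$, iff $X\in\sigma n\to Y\in\sigma n$ for all $n$, iff (by the defining property of $\sigma_X$ and $\sigma_Y$) $\forall n.~n\in\sigma_X\to n\in\sigma_Y$, which is the claim.

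There is essentially no hard step here: this is the degenerate, purely safety-flavoured case of the translation, and the construction needs neither nondeterminism nor a nontrivial acceptance condition. The only points requiring a little care are the decidability of the transition relation (reducing to decidable membership in $2^\var$) and the remark that, because NFAs inspect sequences only pointwise, working in a non-extensional type theory causes no trouble when we reason about "the" constant run.
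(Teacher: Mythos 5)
Your construction is correct and is the canonical one: a single‑state automaton whose only state is both initial and accepting, with a transition on letter $a$ exactly when $X\in a\to Y\in a$; since over the unit state type every run is the constant run and is automatically accepting, acceptance reduces to the safety condition that every $(\oneM,\sigma n,\oneM)$ is a transition. The paper gives no inline proof for this fact, but your argument matches what its Coq development would do, including the observation that decidability of membership in $2^\var$ makes the transition relation decidable.
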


\begin{fact}[][less_nfa_correct]
  \label{fact-NFA-lt}
  One can define a function
  that given two variables $X$ and~$Y$ yields 
  an NFA~\emph{$\M{A}^\lts_{XY}$} over $2^\var$ 
  such that
  $\sigma\models \M{A}^\lts_{XY}$ if and only if
  $\exists mn.~m<n\land
  m\in\sigma_X\land 
  n\in \sigma_Y$.
\end{fact}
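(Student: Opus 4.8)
The plan is to realise $\M{A}^\lts_{XY}$ as a small left-to-right scanning NFA over $2^\var$ with three states $q_0,q_1,q_2$: $q_0$ waits for a position belonging to $X$, $q_1$ (entered right after such a position) waits for a \emph{strictly later} position belonging to $Y$, and $q_2$ is an accepting sink. Concretely, I would take $q_0$ as the only initial state, $q_2$ as the only accepting state, and the transitions $(q_0,a,q_0)$, $(q_1,a,q_1)$, $(q_2,a,q_2)$ for every letter $a$, together with $(q_0,a,q_1)$ whenever $X\in a$ and $(q_1,a,q_2)$ whenever $Y\in a$. Since $2^\var$ has decidable membership and $\{q_0,q_1,q_2\}$ can be taken as a three-element finite type, transition relation and the initial/accepting predicates are decidable, so this is a genuine NFA. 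Because $q_2$ is a sink and the unique accepting state, a run is accepting exactly when it reaches $q_2$ at some position, and ``reaches $q_2$ infinitely often'' coincides with ``reaches $q_2$ once''; thus the weak form of Büchi acceptance plays no essential role here.

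For the direction from right to left, suppose $m<n$, $m\in\sigma_X$, $n\in\sigma_Y$. I would define $\rho$ explicitly by $\rho k=q_0$ for $k\le m$, $\rho k=q_1$ for $m<k\le n$, and $\rho k=q_2$ for $k>n$. A case split on the position shows that $(\rho k,\sigma k,\rho(Sk))$ is always a transition — notably $(q_0,\sigma m,q_1)$ is one since $X\in\sigma m$, and $(q_1,\sigma n,q_2)$ since $Y\in\sigma n$ — so $\rho$ is a run starting in $q_0$, and it carries $q_2$ from position $Sn$ on, hence is accepting. No choice is used, as the run is given by a closed expression.

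For the converse, let $\rho$ be an accepting run; then $\rho 0=q_0$ and some position $\ge 1$ carries $q_2$. The predicates $\rho k\in\{q_1,q_2\}$ and $\rho k=q_2$ are decidable, the corresponding sets of positions are upward closed (every transition out of $q_1$ lands in $\{q_1,q_2\}$, every transition out of $q_2$ in $q_2$), nonempty, and exclude $0$. By bounded minimisation I take their least elements, say $Sm$ and $Sn$. Then $\rho m=q_0$, and since $\rho(Sm)\in\{q_1,q_2\}$ is not $q_0$, the transition $(q_0,\sigma m,\rho(Sm))$ must be $(q_0,\sigma m,q_1)$ with $X\in\sigma m$; likewise $\rho n\ne q_2$, so $(\rho n,\sigma n,q_2)$ must be $(q_1,\sigma n,q_2)$ with $Y\in\sigma n$. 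Finally $\rho n=q_1\in\{q_1,q_2\}$ puts $n$ in the first set, so $Sm\le n$, i.e.\ $m<n$.

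The transition-by-transition bookkeeping is routine; the only point I expect to need a little care — and the closest thing to an obstacle — is keeping the converse constructive. One cannot simply ``pick the first moment $\rho$ is in $q_1$'', but a bounded search for the least position satisfying the decidable predicates $\rho k\in\{q_1,q_2\}$ and $\rho k=q_2$, with termination guaranteed by the known occurrence of $q_2$, supplies exactly the witnesses $m<n$. (The companion Fact~\ref{fact-NFA-incl} is even simpler, with two states — an accepting ``so far consistent'' state and a rejecting sink — and needs no such search.)
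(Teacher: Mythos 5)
Your construction and argument are correct, and this is the standard way to realise $\M{A}^\lts_{XY}$: a three-state forward-scanning NFA with an accepting sink, where Büchi acceptance degenerates to reachability of $q_2$ because $q_2$ is absorbing. The paper states this fact without proof (it is one of several routine NFA-construction lemmas), so there is no official argument to compare against, but your proposal supplies exactly what a Coq proof would need: an explicit closed-form run for the easy direction, and for the converse a constructively licit extraction of $m$ and $n$ from an accepting run. Your care about the constructive content is well placed: the two position sets are decidable and inhabited (from the infinite-visit hypothesis at $k=0$), upward closed by inspection of the transition relation, and exclude $0$; this is precisely what makes the bounded minimisation go through without appealing to excluded middle. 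The final inference $Sm\le n$ from $\rho n=q_1$ being in the first set is the step that yields the strict inequality $m<n$, and it is correct.
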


\setCoqFilename{Buechi.Buechi}

\begin{fact}[][exact_up_nfa_correct]
  \label{fact-NFA-UP}
  One can define a function
  that given two strings $x$ and $y$ over $\Sigma$
  yields an NFA \emph{$\M{A}_{xy^\omega}$} over $\Sigma$
  accepting exactly the sequences equivalent to $xy^\omega$.
%  such that $\sigma\models A\toot\sigma\equiv xy^\omega$.
\end{fact}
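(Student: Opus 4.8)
The plan is to realize $\M{A}_{xy^\omega}$ as a deterministic \textit{lasso} automaton: a chain of states that reads the symbols of $x$ one after another, followed by a cycle of states that reads the symbols of $y$ and then jumps back to the start of the cycle. Let $p$ and $q$ be the lengths of $x$ and $y$; since strings are nonempty, $p,q \ge 1$. Write $x = a_0\cdots a_{p-1}$ and $y = b_0\cdots b_{q-1}$. I take the state type to be a finite type with $p+q$ elements, which I denote $\{0,\dots,p+q-1\}$, with $0$ the unique initial state and $p$ the unique accepting state. The transitions are $(i,a_i,i+1)$ for $0 \le i < p-1$, the transition $(p-1,a_{p-1},p)$ entering the cycle, $(p+j,b_j,p+j+1)$ for $0 \le j < q-1$, and the transition $(p+q-1,b_{q-1},p)$ closing the cycle. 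The transition relation and the two state predicates are decidable and the state type is finite, so this is a legitimate NFA over $\Sigma$.

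The decisive structural property is determinism: every state has exactly one outgoing transition, and the label of that transition is determined by the state. Let $\mathsf{st}(n)$ denote the state reached after consuming the first $n$ symbols of $xy^\omega$; in closed form $\mathsf{st}(n) = n$ for $n < p$, and $\mathsf{st}(n) = p + ((n - p) \bmod q)$ for $n \ge p$. From the definitions one checks that the unique transition out of $\mathsf{st}(n)$ is $(\mathsf{st}(n),\, xy^\omega(n),\, \mathsf{st}(n+1))$, and that $\mathsf{st}(p + kq) = p$ for every $k$.

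For the implication $\sigma \equiv xy^\omega \Rightarrow \sigma \models \M{A}_{xy^\omega}$, I use $\mathsf{st}$ itself as the run: it starts in the initial state $0 = \mathsf{st}(0)$; it admits $\sigma$, because once $\sigma n = xy^\omega(n)$ the triple $(\mathsf{st}(n), \sigma n, \mathsf{st}(n+1))$ is a transition; and it is accepting, because $\mathsf{st}(p + kq) = p$ is the accepting state for all $k$, so an accepting state is visited at infinitely many positions. Conversely, assume $\rho$ is any run on $\sigma$. By induction on $n$ I show $\rho(n) = \mathsf{st}(n)$: at $n = 0$ the only initial state forces $\rho(0) = 0$; in the step, admittance makes $(\rho n, \sigma n, \rho(n+1))$ a transition, and since the only transition out of $\rho n = \mathsf{st}(n)$ goes to $\mathsf{st}(n+1)$, we get $\rho(n+1) = \mathsf{st}(n+1)$; the same transition forces $\sigma n = xy^\omega(n)$. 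Hence $\sigma \equiv xy^\omega$. Note that this direction uses only the existence of a run, not its acceptance.

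I expect no real obstacle: the argument is constructive and elementary. The only points that take a little care are the modular bookkeeping inside the $y$-cycle — handled once and for all by the closed form for $\mathsf{st}$ — and verifying that the chosen state type, transition relation, and initial/accepting predicates meet the finiteness and decidability requirements of the NFA definition, which is routine because $\Sigma$ and the state type are finite.
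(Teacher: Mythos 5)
Your construction is correct and is the natural one. The paper leaves Fact~\ref{fact-NFA-UP} without an in-text proof (it is handled only in the Coq development), but the deterministic lasso automaton you build --- a chain reading $x$ followed by a cycle reading $y$, with the single cycle-entry state accepting --- is precisely what one would expect that development to contain. Your argument is complete and constructively sound: the closed form for $\mathsf{st}$ is verified against the paper's recursive definition of $xy^\omega$, determinism (each state has exactly one outgoing transition with a forced label) gives both uniqueness of runs and the fact that any run reads exactly $xy^\omega$, and $\mathsf{st}(p+kq)=p$ for all $k$ yields acceptance in the forward direction. The observation that the converse direction needs only the existence of a run rather than an accepting run is a nice byproduct of determinism, though the accepting run is of course needed to conclude $\sigma\models\M{A}_{xy^\omega}$ from $\sigma\equiv xy^\omega$. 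The only detail worth flagging is that the state type $\{0,\dots,p+q-1\}$ must be packaged as a finite type in the paper's sense (discrete, with a duplicate-free enumeration); you acknowledge this and it is routine.
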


\setCoqFilename{Buechi.NFAs}

Let $A$ be an NFA over $\Sigma$ with state type $Q$.
Given a string $u$ over~$Q$ and a string~$x$ over $\Sigma$,
we say that $u$ is a \emph{path on $x$} if
$x$ provides symbols yielding transitions
between adjacent states of $u$.
We say that \emph{$A$ accepts $x$} 
if $A$ has a path on $x$ 
starting with an initial state
and ending with an accepting state.
We define two decidable predicates:
\begin{align*}
\coqlink[transforms]{\trans pxq} :=&
  \text{ } A \text{ has a path on } x \text{ from } p \text{ to } q \text{.}\\
\coqlink[transforms_accepting]{\transa pxq} :=&
  \text{ } A \text{ has a path on } x \text{ from } p \text{ to } q  
  \text{ passing through}\\& \text{ an accepting state 
  before the last position.}
\end{align*}
\setCoqFilename{Buechi.Buechi}
A pair $(x,y)$ is a \coqlink[is_match]{\emph{match}} of $A$ if
there exist states $p$ and $q$ such that
$p$ is initial, $\trans pxq$, and $\transa qyq$.

\begin{fact}
  \label{fact-NFA-solver}
  ~
  \begin{enumerate}
  \coqitem[match_accepted] If $(x,y)$ is a match of an NFA $A$, then $xy^\omega$ satisfies $A$.
  \coqitem[nonempty_iff_match] An NFA is satisfiable if and only if it has a match.
  \coqitem[dec_exists_match] It is decidable whether an NFA has a match.
  \coqitem[obtain_match] If an NFA has a match, one can obtain a match.
  \coqitem[dec_buechi_empty_informative] It is decidable whether an NFA is satisfiable.
  \item In case an NFA $A$ is satisfiable, one can obtain a match of $A$.
  \end{enumerate}
\end{fact}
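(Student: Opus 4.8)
The plan is to establish the six claims in order; the argument is constructive throughout, relying only on finite (list) pigeonhole and decidability of reachability in finite graphs, and crucially never on the infinite pigeonhole principle. For (1): a match gives an initial state $p$, a state $q$, a path $q_0\cdots q_m$ on $x$ from $p$ to $q$, and a path $s_0\cdots s_l$ on $y$ from $q$ to $q$ with $s_i$ accepting for some $i<l$. I would define the run $\rho$ on $xy^\omega$ by $\rho n=q_n$ for $n<m$ and $\rho n=s_{(n-m)\bmod l}$ for $n\ge m$, that is, glue the $x$-path to infinitely many copies of the $y$-cycle; the boundary is consistent because $q_m=q=s_0=s_l$. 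Checking that $\rho$ starts in the initial state $p$, admits $xy^\omega$, and is accepting is a routine case split according to whether a position lies in the $x$-block, strictly inside a $y$-block, or at a $y$-block boundary; accepting states recur because $s_i$ reappears at every position $m+kl+i$, and this recurrence is exactly where the ``strictly before the last position'' clause in the definition of $\transa{}{}{}$ is used.

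For (2): the direction ``match $\Rightarrow$ satisfiable'' is (1). For ``satisfiable $\Rightarrow$ match'', take an accepting run $\rho$ on some $\sigma$. The predicate ``$\rho n$ is accepting'' is decidable and holds infinitely often, so by constructive choice (Fact~\ref{fact-cc}) one can compute positions $1\le m_0<\cdots<m_{|Q|}$ at which $\rho$ is accepting; extracting \textit{finitely} many witnesses from an ``infinitely often'' hypothesis is constructive. The list $\rho m_0,\ldots,\rho m_{|Q|}$ has $|Q|+1$ entries in a type of $|Q|$ elements, so list pigeonhole yields $a<b$ with $\rho m_a=\rho m_b=:q$. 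Then the restriction of $\rho$ to $[0,m_a]$ witnesses $\trans{\rho 0}{\substr\sigma 0{m_a}}q$ (with $\rho 0$ initial), and the restriction to $[m_a,m_b]$ witnesses $\transa q{\substr\sigma{m_a}{m_b}}q$, since its first state $q=\rho m_a$ is accepting and lies before the last position; hence $(\substr\sigma0{m_a},\substr\sigma{m_a}{m_b})$ is a match of $A$.

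For (3)--(6): ``$\exists x.\,\trans pxq$'' holds iff $q$ is reachable from $p$ along a transition path of length $\ge1$, and ``$\exists y.\,\transa qyq$'' holds iff some accepting state is reachable from $q$ and from there $q$ is reachable in $\ge1$ steps; both reduce to reachability in the finite transition graph and are therefore decidable, as it suffices to search paths of length $\le|Q|$. Ranging $p$ and $q$ over the finite state type makes ``$A$ has a match'' decidable, which is (3); with (2) this gives decidability of satisfiability, which is (5). For (4), once (3) tells us a match exists I would extract one effectively: choose $p$ and $q$ from the state type for which both reachability properties hold, then bounded-search for witnessing strings $x$ and $y$ (of length $\le|Q|$, respectively $\le 2|Q|$) --- all finite searches --- and return the pair. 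Finally (6) chains (2) with (4): satisfiability of $A$ entails, as a proposition, that $A$ has a match, and (4) turns this into a concrete match.

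The step I expect to be the main obstacle is keeping (2) constructive. The textbook proof selects an accepting state that the run visits infinitely often, which needs the infinite pigeonhole principle --- shown above to be unavailable constructively; the remedy is to observe that one only needs the run to revisit \textit{some} state between two of its first $|Q|+1$ accepting positions, which is ordinary finite pigeonhole. The remaining work is bookkeeping: the modular-arithmetic index juggling in (1), the formalisation of bounded reachability underlying (3)--(6), and the off-by-one conventions forced by the nonempty-string type and by the ``before the last position'' clause of $\transa{}{}{}$.
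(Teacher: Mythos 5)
Your proposal is correct and spells out the standard constructive development that the paper leaves implicit (the fact is stated without a written proof, only Coq links): glue an initial path to a pumped accepting loop for (1), use finite pigeonhole on $|Q|+1$ constructively-extracted accepting positions — rather than the infinite pigeonhole principle — for (2), and reduce (3)–(6) to bounded reachability in the finite transition graph plus (2). Your care to enforce $m_0\ge1$ (so both components of the match land in $\Sigma^+$) and to put the accepting state at the \emph{first} position of the loop path (so it lies before the last position as $\transa{}{}{}$ requires) is exactly the bookkeeping the nonempty-string and before-last conventions demand.
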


\begin{fact}[][union_correct]
  \label{fact-NFA-union}
  One can define a function
  that given two NFAs $A$ and $B$ over~$\Sigma$ yields 
  an NFA \emph{$A\cup B$} over $\Sigma$ with
  $\sigma\in A\cup B\toot\sigma\models A\lor\sigma\models B$.
\end{fact}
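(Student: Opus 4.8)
The plan is to realise $A\cup B$ as the \emph{disjoint union} of the two NFAs. If $A$ has state type $Q_A$ and $B$ has state type $Q_B$, let $A\cup B$ have state type $Q_A+Q_B$, which is again a finite type since finite types are closed under sums. A triple $(\mathsf{inl}\,p,a,\mathsf{inl}\,p')$ is a transition of $A\cup B$ exactly when $(p,a,p')$ is a transition of $A$, a triple $(\mathsf{inr}\,q,a,\mathsf{inr}\,q')$ exactly when $(q,a,q')$ is a transition of $B$, and there are no transitions between the two summands; a state $\mathsf{inl}\,p$ is initial (accepting) iff $p$ is initial (accepting) in $A$, and symmetrically for $\mathsf{inr}\,q$. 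The transition relation and the two state predicates are decidable because $Q_A+Q_B$ is discrete and the corresponding data of $A$ and $B$ are decidable, so $A\cup B$ is a legitimate NFA over $\Sigma$.

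For the correctness equivalence I would transport accepting runs along the two injections. If $\sigma\models A$, pick an accepting run $\rho$ of $A$ on $\sigma$; then $\lambda n.\,\mathsf{inl}(\rho n)$ starts with an initial state, admits $\sigma$ (transitions are preserved by $\mathsf{inl}$ by construction), and still passes infinitely often through accepting states (an accepting state of $A$ maps to an accepting state of $A\cup B$), hence $\sigma\models A\cup B$; the case $\sigma\models B$ is symmetric via $\mathsf{inr}$. This gives the implication from right to left. Conversely, suppose $\rho$ is an accepting run of $A\cup B$ on $\sigma$, and case-analyse on the concrete value $\rho 0$. If $\rho 0=\mathsf{inl}\,p_0$, then since no transition crosses the two summands, a routine induction on $n$ (base case from the assumption on $\rho 0$, step from the transition between $\rho n$ and $\rho(Sn)$) shows that $\rho n$ lies in the left summand for every $n$; using $p_0:Q_A$ as a default I define $\rho'n$ by pattern matching on $\rho n$ and obtain $\rho n=\mathsf{inl}(\rho'n)$ for all $n$. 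Then $\rho'$ is a run of $A$ on $\sigma$ and, because every accepting state visited by $\rho$ is $\mathsf{inl}$ of an accepting state of $A$, it passes infinitely often through accepting states of $A$; hence $\sigma\models A$. The case $\rho 0=\mathsf{inr}\,q_0$ yields $\sigma\models B$ analogously.

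The construction is routine; the only delicate points are constructive bookkeeping and both are mild. Keeping the transition relation and the initial/accepting predicates of $A\cup B$ decidable is handled by discreteness of the sum type; and extracting the projected run $\rho'$ works precisely because the case split on the concrete state $\rho 0$ already supplies an inhabitant of $Q_A$ (resp.\ $Q_B$) to serve as the default value in the impossible branch of the pattern match. I also note that the weak Büchi acceptance condition (accepting states visited infinitely often, rather than one fixed accepting state) causes no trouble here, since a run of $A\cup B$ is trapped in a single summand and the accepting states it may visit are exactly the images under $\mathsf{inl}$ (resp.\ $\mathsf{inr}$) of that summand's accepting states.
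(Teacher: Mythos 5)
Your construction is exactly the paper's: $A\cup B$ is the disjoint union of $A$ and $B$, which the paper states as a one-line proof. You have simply spelled out the routine verification (transporting runs along the injections, and projecting a run of the sum back into one summand by an induction showing it cannot cross), all of which is correct.
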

\begin{proof}
  Define $A\cup B$ as the disjoint union of $A$ and $B$.
%  (we can have more than one initial state).
\end{proof}

\begin{fact}\label{fact-NFA-inter}
  One can define a function
  that given two NFAs $A$ and $B$ over~$\Sigma$ yields 
  an NFA \emph{$A\cap B$} over $\Sigma$ such that:
  \begin{enumerate}
  \coqitem[intersect_correct] $\sigma\in A\cap B\toot\sigma\models A\land\sigma\models B$.
  \coqitem[intersect_match_second] Every match of $A\cap B$ is a match of $A$ and $B$.
  \end{enumerate}
\end{fact}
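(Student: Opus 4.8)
The plan is to use the classical product construction with a one-bit flag that alternates between the acceptance conditions of the two automata. Given NFAs $A$ and $B$ over $\Sigma$ with state types $Q_A$ and $Q_B$, I take $A\cap B$ to have states $Q_A\times Q_B\times\bool$, with initial states the triples $(p,q,\btrue)$ where $p$ is initial in $A$ and $q$ is initial in $B$. There is a transition from $(p,q,b)$ on a symbol $a$ to $(p',q',b')$ precisely when $(p,a,p')$ is a transition of $A$, $(q,a,q')$ is a transition of $B$, and $b'$ is obtained from $b$ by the deterministic rule: if $b=\btrue$ and $p$ is accepting in $A$ then $b'=\bfalse$; if $b=\bfalse$ and $q$ is accepting in $B$ then $b'=\btrue$; otherwise $b'=b$. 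The accepting states of $A\cap B$ are the triples $(p,q,\btrue)$ with $p$ accepting in $A$, i.e.\ exactly the states at which the flag is about to flip from ``waiting for $A$'' to ``waiting for $B$''. Since $Q_A$, $Q_B$, $\bool$ are finite and the acceptance and transition predicates of $A$ and $B$ are decidable, this is a legitimate NFA. Note that the flag is uniquely determined by the first two components along any path or run.

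For claim~(1) the core is a lemma on runs: a run $\rho$ of $A\cap B$ on $\sigma$ is accepting iff its first-component projection $\rho_A$ and its second-component projection $\rho_B$ are accepting runs of $A$ and $B$ on $\sigma$. The forward direction: $\rho_A$ visits accepting states infinitely often directly, since that is the acceptance condition of the product; and given $k$, picking two accepting product positions $k\le n_1<n_2$, the flag is $\btrue$ at $n_1$, $\bfalse$ at $n_1{+}1$, and $\btrue$ at $n_2$, so it performs a $\bfalse\to\btrue$ flip at some $m$ with $n_1<m<n_2$, at which by the transition rule $\rho_B$ is accepting; hence $\rho_B$ is accepting. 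The backward direction: from accepting runs $\rho_A$, $\rho_B$ on $\sigma$, define the flag sequence $b:\nat\to\bool$ by recursion on positions following the transition rule, set $\rho\,n:=(\rho_A\,n,\rho_B\,n,b\,n)$, and check $\rho$ is a run of $A\cap B$ on $\sigma$. To see $\rho$ is accepting, fix $k$: if $b\,k=\btrue$, search for the least $n\ge k$ with $\rho_A$ accepting at $n$ (it exists by the $\exists^\omega$ hypothesis and the predicate is decidable); the flag can only leave $\btrue$ at an accepting-$A$ position, so it stays $\btrue$ on $[k,n]$ and $\rho\,n$ is an accepting product state; if $b\,k=\bfalse$, first find the least $n'\ge k$ with $\rho_B$ accepting at $n'$, after which the flag is $\btrue$, and then apply the previous case from $n'{+}1$. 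Claim~(1) follows by unfolding $\models$ and eliminating the existential over runs on each side.

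For claim~(2), let $(x,y)$ be a match of $A\cap B$: there are product states $P,Q$ with $P$ initial, a path on $x$ from $P$ to $Q$, and a path on $y$ from $Q$ to $Q$ through an accepting product state before its last position. Projecting both paths to the first component gives, in $A$, a path on $x$ from the (initial) $A$-component of $P$ to the $A$-component of $Q$ and a loop on $y$ at the latter; since the product loop passes through an accepting product state $(p'',q'',\btrue)$ with $p''$ accepting in $A$ before its last position, the $A$-loop passes through an accepting state of $A$ before its last position, so $(x,y)$ is a match of $A$. Projecting to the second component gives the analogous $B$-path on $x$ and $B$-loop on $y$; the loop returns $Q$ to its own flag value, yet it takes the value $\btrue$ at the accepting product position and then flips to $\bfalse$, so it must also perform a $\bfalse\to\btrue$ flip somewhere on the loop, which by the transition rule occurs at a position where the $B$-component is accepting, and a short case analysis on the flag of $Q$ shows that position lies before the loop's last position; hence $(x,y)$ is a match of $B$ as well.

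The step I expect to need the most care is the backward direction of the run lemma for claim~(1): fusing the two given accepting runs into one accepting product run and, above all, showing the fused run hits accepting states infinitely often. This is just the standard alternation argument behind Büchi intersection, and the only real point is to run it constructively, which the bounded searches above accomplish using decidability of the acceptance predicates and the two $\exists^\omega$ hypotheses; no choice principle beyond eliminating the existentials hidden in $\models$ is required. The bookkeeping for matches in claim~(2) is routine once the flag discipline is in place.
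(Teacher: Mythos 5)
Your construction is essentially the paper's: a product automaton $Q_A\times Q_B$ augmented with a boolean flag that flips at accepting states of $A$ and of $B$, with product acceptance tied to the flag together with $A$-acceptance; the paper differs only in inessential conventions (initial flag value, tuple order) and leaves the verification of (1) and (2) implicit, which you carry out correctly. So the proposal is correct and takes the same approach.
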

\begin{proof}
  Let $A$ and $B$ be two NFAs 
  with state types $Q_A$ and $Q_B$.
  For $A\cap B$ we use 
  the product $\bool\times Q_A\times Q_B$
  as state type.  A state $(b,p,q)$ is initial
  if $b=\bfalse$, $p$ is an initial state of $A$,
  and $q$ is an initial state of $B$.
  A state $(b,p,q)$ is accepting
  if $b=\btrue$ and~$p$ is an accepting state of $A$.
  The trick is to switch to $\bfalse$
  when leaving an accepting state of $A$, 
  and to $\btrue$ when leaving an accepting state of $B$.
\end{proof}

\begin{fact}[][dec_up_in_lang]
  \label{fact-NFA-UP-sat-dec}
  $\lam{xyA}{~xy^\omega\models A}$ is decidable.
\end{fact}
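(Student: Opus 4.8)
The plan is to reduce the question ``does $A$ accept $xy^\omega$?'' to the decidable question whether a single NFA is satisfiable, exploiting the automaton $\M{A}_{xy^\omega}$ of Fact~\ref{fact-NFA-UP}, which recognises exactly the sequences equivalent to $xy^\omega$. Given strings $x,y$ over $\Sigma$ and an NFA $A$ over $\Sigma$, I would form the NFA $B := A\cap\M{A}_{xy^\omega}$ using the intersection operation of Fact~\ref{fact-NFA-inter}; note that $B$ is obtained from $x$, $y$, and $A$ by a genuine (computable) function. The key claim is that $xy^\omega\models A$ holds if and only if $B$ is satisfiable, and this claim together with decidability of satisfiability of NFAs (Fact~\ref{fact-NFA-solver}(5)) yields the desired decision function $\lam{xyA}{~xy^\omega\models A}$.

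For the claim, first suppose $xy^\omega\models A$. Since $xy^\omega\equiv xy^\omega$, Fact~\ref{fact-NFA-UP} gives $xy^\omega\models\M{A}_{xy^\omega}$, hence $xy^\omega\models B$ by Fact~\ref{fact-NFA-inter}(1); in particular $B$ is satisfiable. Conversely, suppose some $\tau$ satisfies $B$. By Fact~\ref{fact-NFA-inter}(1) we get $\tau\models A$ and $\tau\models\M{A}_{xy^\omega}$, and the latter yields $\tau\equiv xy^\omega$ by Fact~\ref{fact-NFA-UP}. Since acceptance of an NFA depends on a sequence only pointwise --- a run on $\tau$ is literally a run on every sequence equivalent to $\tau$, because the initiality and transition conditions defining a run refer to the input only at individual positions --- we conclude $xy^\omega\models A$.

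I do not expect a real obstacle here. The only points requiring a moment of care are that the map $(x,y,A)\mapsto B$ is a function, so that we obtain an actual decision procedure rather than merely case-wise decidability, and the routine observation about invariance of acceptance under sequence equivalence used above. One could instead stay entirely within the world of matches: reduce ``$B$ satisfiable'' to ``$B$ has a match'' (Fact~\ref{fact-NFA-solver}(2),(3)) and transport matches of $B$ along Fact~\ref{fact-NFA-inter}(2) and Fact~\ref{fact-NFA-solver}(1); but routing through plain satisfiability is the most economical option and reuses Fact~\ref{fact-NFA-UP}, which we have just established.
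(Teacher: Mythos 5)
Your proposal is correct and follows exactly the paper's argument: reduce $xy^\omega\models A$ to satisfiability of $A\cap\M{A}_{xy^\omega}$ via Facts~\ref{fact-NFA-UP} and~\ref{fact-NFA-inter}, then decide with Fact~\ref{fact-NFA-solver}. The extra details you supply (functionality of the construction and invariance of acceptance under pointwise equivalence) are just the routine steps the paper leaves implicit.
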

\begin{proof}[from~\cite{BresolinMP09}]
  Let $x$, $y$ and $A$ be given.
  Fact~\ref{fact-NFA-UP} provides an NFA~$\M{A}_{xy^\omega}$ 
  that accepts exactly the sequences equivalent to $xy^\omega$.
  Now $xy^\omega\models A$ iff $A\cap\M{A}_{xy^\omega}$ is satisfiable.
  The claim follows with Fact~\ref{fact-NFA-solver}.
\end{proof}

\begin{fact}[][match_for_up]
  \label{fact-UP-match}
  If $xy^\omega\models A$, 
  then $A$ has a match $(u,v)$ such that 
  ${xy^\omega\equiv uv^\omega}$.
\end{fact}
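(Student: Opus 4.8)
The plan is to unfold $xy^\omega\models A$ into an accepting run $\rho$ on $xy^\omega$ and to carve out of $\rho$ a match whose associated UP sequence is $xy^\omega$ itself. Write $n$ for the length of $x$ and $m$ for the length of $y$. The sequence $xy^\omega$ is eventually periodic: for $k\ge 0$ the $k$-th ``block'' of positions $[n+km,\,n+(k+1)m)$ carries the $(k+1)$-th copy of $y$. So whenever $i<j$, the substring $\substr{(xy^\omega)}{0}{n+im}$ consists of $x$ followed by $i$ copies of $y$, the substring $\substr{(xy^\omega)}{n+im}{n+jm}$ consists of $j-i$ copies of $y$, and with $u:=\substr{(xy^\omega)}{0}{n+im}$ and $v:=\substr{(xy^\omega)}{n+im}{n+jm}$ the sequence $uv^\omega$ agrees with $xy^\omega$. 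It therefore suffices to find block indices $i<j$ with $\rho(n+im)=\rho(n+jm)$ such that $\rho$ visits an accepting state at some position in $[n+im,\,n+jm)$: then $(u,v)$ is a match, witnessed by the initial state $p:=\rho(0)$ and the state $q:=\rho(n+im)=\rho(n+jm)$, since the prefix of $\rho$ up to position $n+im$ is a path on $u$ from $p$ to $q$, and the segment of $\rho$ from position $n+im$ to position $n+jm$ is a path on $v$ from $q$ to $q$ passing through an accepting state before its last position.

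To obtain such $i<j$, I would first observe that, since $\rho$ is accepting, the set $B$ of indices $k$ for which an accepting position lies in $[n+km,\,n+(k+1)m)$ is infinite: given a bound $K$, pick an accepting position $a\ge n+Km$; it lies in the block numbered $\lfloor(a-n)/m\rfloor\ge K$, which hence belongs to $B$. Membership in $B$ is decidable (a bounded search over one block), so Fact~\ref{fact-bseq-enumeration} yields a strictly monotone function $g$ enumerating $B$. Letting $N$ be the number of states of $A$, consider the $N+1$ states $\rho(n+g(0)m),\dots,\rho(n+g(N)m)$; since $Q$ is finite, two of them coincide, say $\rho(n+g(k_1)m)=\rho(n+g(k_2)m)$ with $k_1<k_2$. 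Put $i:=g(k_1)$ and $j:=g(k_2)$. Then $i<j$ by monotonicity of $g$, and since $i\in B$ there is an accepting position in $[n+im,\,n+(i+1)m)\subseteq[n+im,\,n+jm)$, which is exactly what is needed.

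The rest is routine: checking that the indicated segments of $\rho$ are paths in the sense of $\trans{p}{u}{q}$ and $\transa{q}{v}{q}$ (immediate since $\rho$ is a run admitting $xy^\omega$ and the accepting position found lies strictly before position $n+jm$), and checking $uv^\omega\equiv xy^\omega$ (a short induction on positions using $i\ge 0$ and $j-i\ge 1$). I expect the one real subtlety to be the second paragraph: pigeonholing on the block-boundary states $\rho(n),\rho(n+m),\rho(n+2m),\dots$ alone produces a loop but gives no control over where an accepting visit sits relative to it, so one must restrict attention to the infinitely many blocks that genuinely contain an accepting state — and performing that restriction constructively is precisely why the monotone enumeration of Fact~\ref{fact-bseq-enumeration} is invoked rather than an appeal to the infinite pigeonhole principle.
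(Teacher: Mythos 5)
Your proof is correct, but it takes a genuinely different route from the paper. The paper does not touch runs at all: it builds the NFA $\M{A}_{xy^\omega}$ accepting exactly the sequences equivalent to $xy^\omega$ (Fact~\ref{fact-NFA-UP}), observes that $A\cap\M{A}_{xy^\omega}$ is satisfiable, extracts a match of the intersection via Fact~\ref{fact-NFA-solver}, and concludes with Fact~\ref{fact-NFA-inter}(2) that this pair is a match of both $A$ and $\M{A}_{xy^\omega}$; the equivalence $uv^\omega\equiv xy^\omega$ then comes for free, since being a match of $\M{A}_{xy^\omega}$ forces $uv^\omega\models\M{A}_{xy^\omega}$. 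You instead argue directly on an accepting run $\rho$ on $xy^\omega$, pigeonholing on the block-boundary states $\rho(n+km)$ restricted to the infinitely many blocks that contain an accepting visit, which you correctly make constructive via decidability of the accepting predicate and the monotone enumeration of Fact~\ref{fact-bseq-enumeration} (this is indeed the crucial point: pigeonholing on all boundaries would lose control of where the accepting visits lie, and an appeal to \IP{} would not be available). The details check out: $u=\substr{(xy^\omega)}{0}{n+im}$ and $v=\substr{(xy^\omega)}{n+im}{n+jm}$ are nonempty because strings in $A^+$ are nonempty, the accepting visit in block $i$ lies strictly before position $n+jm$ as required by $\transa qvq$, and the duplicate among the $|Q|+1$ boundary states is found constructively by decidable equality on the finite state type. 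What your approach buys is a self-contained, automaton-free argument yielding a match of a very explicit shape ($u=xy^i$, $v=y^{j-i}$); what the paper's approach buys is brevity and reuse, since the exact-UP automaton, the match extraction for satisfiable NFAs, and the match-preservation property of the intersection are all needed elsewhere anyway.
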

\begin{proof}
  Let $xy^\omega\models A$.
  By Fact~\ref{fact-NFA-UP} there is an automaton $\M{A}_{xy^\omega}$ 
  that accepts exactly the sequences equivalent to $xy^\omega$.
  By Fact~\ref{fact-NFA-solver} 
  we obtain a match of $A\cap\M{A}_{xy^\omega}$. 
  The claim follows with Fact~\ref{fact-NFA-inter}.
\end{proof}

\setCoqFilename{Buechi.MinimalS1S}

\begin{fact}
  \label{fact-NFA-exists}
  One can define a function
  that given a variable $X$ and an NFA $A$ over $2^\var$
  yields an NFA \coqlink[ex_nfa]{\emph{$\ex X A$}} over $2^\var$ 
  satisfying the following conditions:
  \begin{enumerate}
  \coqitem[ex_nfa_correct] If $\sigma\models A$, 
    then $\sigma\models\ex XA$.
  \coqitem[ex_nfa_correct] If $\sigma\models\ex XA$,
    then there exists $\tau\models A$ such that $\tau\approx_X\sigma$.
  \coqitem[ex_nfa_up]  If $xy^\omega\models\ex XA$,
    then there exist $u$ and $v$ such that
    $uv^\omega\models A$ and $uv^\omega\approx_X xy^\omega$.
  \end{enumerate}
\end{fact}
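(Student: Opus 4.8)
The plan is to take $\ex X A$ to be the projection of $A$: the same state type, the same initial states, the same accepting states, and a triple $(p,a,q)$ a transition of $\ex X A$ iff there is some $b:2^\var$ agreeing with $a$ on all variables except possibly $X$ such that $(p,b,q)$ is a transition of $A$. This relation is decidable since $2^\var$ is finite and both agreement off $X$ and the transition relation of $A$ are decidable, so $\ex X A$ is a well-formed NFA over $2^\var$. Two facts about this construction will be used throughout. First, every transition of $A$ is a transition of $\ex X A$ (take $b=a$); as the initial and accepting states are shared, every accepting run of $A$ on a sequence $\sigma$ is an accepting run of $\ex X A$ on $\sigma$, which is exactly claim~(1). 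Second (lifting): whenever $r_0,\dots,r_k$ is a path of $\ex X A$ on a string $w=w_0\cdots w_{k-1}$, one can construct a string $w'=w'_0\cdots w'_{k-1}$ such that $r_0,\dots,r_k$ is a path of $A$ on $w'$ and each $w'_i$ agrees with $w_i$ off $X$; indeed $w'_i$ is obtained by a finite search through $2^\var$, which succeeds because, by definition of the transition relation of $\ex X A$, such a letter exists for every $i$. Since $\ex X A$ and $A$ share their accepting states, the lifting moreover turns a path of $\ex X A$ that passes through an accepting state before its last position into one of $A$ with the same property, hence a match of $\ex X A$ into a match of $A$ over a pair of strings that agree with the original pair letterwise off $X$.

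For claim~(2), let $\rho$ be an accepting run of $\ex X A$ on $\sigma$. For each $n$ the triple $(\rho n,\sigma n,\rho(Sn))$ is a transition of $\ex X A$, so a finite search through $2^\var$ yields a letter $b_n$ agreeing with $\sigma n$ off $X$ such that $(\rho n,b_n,\rho(Sn))$ is a transition of $A$; put $\tau:=\lam n{b_n}$. Then $\rho$ is an accepting run of $A$ on $\tau$ (same initial and accepting states, and $\rho$ admits $\tau$ by construction), so $\tau\models A$, and for every $Z\ne X$ we have $Z\in\tau n\toot Z\in\sigma n$ for all $n$, i.e.\ $\tau\approx_X\sigma$. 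Here we are proving a proposition, so eliminating the existential that provides $\rho$ is harmless, and $\tau$ is a genuine function because the per-position search over the finite type $2^\var$ terminates.

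For claim~(3), assume $xy^\omega\models\ex XA$. By Fact~\ref{fact-UP-match}, applied to $\ex X A$, there is a match $(u_0,v_0)$ of $\ex X A$ with $xy^\omega\equiv u_0v_0^\omega$; unfolding the match yields an initial state $p$ and a state $q$ together with a path of $\ex X A$ on $u_0$ from $p$ to $q$ and a path of $\ex X A$ on $v_0$ from $q$ to $q$ passing through an accepting state before its last position. Lifting both paths as above gives strings $u$ and $v$ over $2^\var$ with $|u|=|u_0|$ and $|v|=|v_0|$, a path of $A$ on $u$ from $p$ to $q$, a path of $A$ on $v$ from $q$ to $q$ passing through an accepting state before its last position, and $u,v$ agreeing with $u_0,v_0$ letterwise off $X$. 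Since $p$ is also initial in $A$, the pair $(u,v)$ is a match of $A$, and hence $uv^\omega\models A$ by the first item of Fact~\ref{fact-NFA-solver}. Finally, since the factor lengths coincide and the factors agree letterwise off $X$, an induction on positions using the defining equations of $(\cdot)^\omega$ gives $uv^\omega\approx_X u_0v_0^\omega$; composing with $u_0v_0^\omega\equiv xy^\omega$ yields $uv^\omega\approx_X xy^\omega$, as required.

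The bulk of the work, and the only place the constructive setting intervenes, is turning the existential content hidden in the projected transition relation into actual data: the letter-lifting for finite strings in the second observation and the pointwise definition of $\tau$ in claim~(2), both of which are terminating searches over the finite type $2^\var$. The remaining points — decidability of the new transition relation, preservation of initial and accepting states, symmetry of $\approx_X$ and its composition with $\equiv$, and the length bookkeeping for periodic words — are routine.
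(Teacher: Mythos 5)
Your construction of $\ex X A$ is the same as the paper's (the paper phrases it as adding, for every transition $(p,a,q)$ of $A$, all transitions $(p,b,q)$ with $a\cup\{X\}=b\cup\{X\}$, which yields exactly your projected transition relation), and for claim~(3) you invoke the same key lemma, Fact~\ref{fact-UP-match}, to obtain a match of $\ex X A$ over a pair equivalent to $xy^\omega$ and then lift it to a match of $A$. The paper leaves claims~(1) and~(2) and the lifting step as "easily verified"; your proof just fills in those routine details correctly, including the point that the per-position and per-letter searches over the finite type $2^\var$ terminate and so produce genuine data.
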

\begin{proof}
  Let $X$ and $A$ be given.
  We obtain $\ex XA$ from $A$ by adding transitions:
  for every transition $(p,a,q)$ of $A$ and 
  every set $b:2^\var$ such that $a\cup\set X=b\cup\set X$,
  $\ex XA$ contains the transition $(p,b,q)$.
  The first two claims are easily verified.

  Let $xy^\omega\models\ex XA$.
  Fact~\ref{fact-UP-match} yields 
  a match $(u,v)$ of $\ex XA$ such that
  $xy^\omega\equiv uv^\omega$.
  We can now obtain a match $(w,z)$ of $A$
  such that $uv^\omega\approx_X wz^\omega$.
\end{proof}

\section{NFAs for Ramseyan Factorisations}

Given a finite semigroup $\Gamma$,
we can construct an NFA accepting
exactly the sequences over $\Gamma$ 
that have a Ramseyan factorisation.
The construction exemplifies ideas that
will also appear in the construction
of complement automata.
The result itself will be used for 
the result about the existence of complement automata.

For an NFA we write $\transition{q}{a}{p}$ to say that $(q,a,p)$ is a transition.

\begin{fact}\label{fact-NFA-color}
  For every color $c$ of a finite semigroup $\Gamma$
  one can construct an NFA
  accepting exactly the strings over $\Gamma$
  that have color $c$.
  The NFA can be constructed such that
  it has a single intial state with no 
  incoming transitions.
\end{fact}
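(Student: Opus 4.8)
The goal is, given a color $c$ of a finite semigroup $\Gamma$, to build an NFA over the input alphabet $\Gamma$ that accepts a string $x$ iff $\col x = c$, and moreover has a single initial state with no incoming transitions. The natural idea is to take the carrier set of $\Gamma$ itself, plus one fresh initial state, as the state type: concretely, use $\one + \Gamma$ (or equivalently $\Gamma$ with a designated start marker). The fresh state $\oldstylenums{1}$ on the left is the sole initial state; it has no incoming transitions. From $\oldstylenums{1}$, reading a symbol $a:\Gamma$, we move to the state $a$ (tracking ``the color read so far is $a$''). From a state $g:\Gamma$, reading a symbol $a:\Gamma$, we move to the state $g + a$. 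The accepting states are exactly the single state $c$ (on the $\Gamma$ side); the initial state $\oldstylenums{1}$ is not accepting, which is fine since we have no empty string in $A^+$, so every accepted string has length at least one and hence passes through at least one $\Gamma$-state.

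First I would define this NFA formally: state type $\one + \Gamma$, transition relation given by the two clauses above (decidable, since equality on $\one+\Gamma$ is decidable and $+$ is computable), initial predicate selecting $\mathsf{inl}\,\oldstylenums{1}$, accepting predicate selecting $\mathsf{inr}\,c$. Then I would prove the invariant: for any string $x = a_0\cdots a_n$ over $\Gamma$, a path of $A$ starting at $\oldstylenums{1}$ and reading $x$ ends exactly in the state $\mathsf{inr}(\col x) = \mathsf{inr}(a_0 + \cdots + a_n)$, and conversely the only path on $x$ from $\oldstylenums{1}$ is this one (the automaton is deterministic after the first step, and the first step is forced too). This is a straightforward induction on the string $x$, using $\col(xy) = \col x + \col y$ and associativity of $+$ on $\Gamma$. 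From the invariant, $A$ accepts $x$ — i.e.\ has a path from the initial state $\oldstylenums{1}$ to an accepting state $\mathsf{inr}\,c$ — iff $\col x = c$, which is the required correctness statement. The structural condition (single initial state, no incoming transitions) is immediate from the definition, since no transition has $\mathsf{inl}\,\oldstylenums{1}$ as target.

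I do not expect a genuine obstacle here; the construction is elementary. The only points requiring a little care are bookkeeping ones: confirming that $A^+$ has no empty string so that we never need the initial state to be accepting, handling the base case of the induction (a one-symbol string $a$ goes from $\oldstylenums{1}$ directly to $\mathsf{inr}\,a = \mathsf{inr}(\col a)$), and making sure the transition relation is set up so that the path from $\oldstylenums{1}$ is unique — which matters only for the ``only if'' direction and follows because from $\oldstylenums{1}$ the target state $\mathsf{inr}\,a$ is determined by the symbol $a$, and from $\mathsf{inr}\,g$ the target $\mathsf{inr}(g+a)$ is likewise determined. Decidability of the transition and initial/accepting predicates is inherited from decidability of equality on the finite type $\one+\Gamma$ and computability of the semigroup operation, so $A$ is a legitimate NFA in the sense fixed earlier in the paper.
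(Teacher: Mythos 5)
Your construction is exactly the one in the paper: state type $\one+\Gamma$, $\oneM$ as the sole initial state with no incoming transitions, transitions $\oneM \to a$ on $a$ and $g \to g+a$ on $a$, and $c$ as the accepting state. The additional detail you give on the correctness invariant and decidability is fine but not a different route.
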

\begin{proof}
  Let $\Gamma$ be a finite semigroup and $c:\Gamma$.
  We construct an NFA with $Q:=\one+\Gamma$
  where $\oneM$ serves as initial state and 
  $c$ serves as accepting state.  
  There are transitions $\transition{\oneM}{a}{a}$ and $\transition{b}{a}{b+a}$
  for every $a,b:\Gamma$.
\end{proof}

\begin{fact}\label{fact-NFA-color-omega}
  Given two colors $c$ and $d$ of a finite semigroup $\Gamma$,
  one can construct an NFA accepting all sequences over $\Gamma$
  that have a factorisation~$\tau$ such that $\col(\tau0)=c$
  and $\col(\tau n)=d$ for all $n\ge1$.
\end{fact}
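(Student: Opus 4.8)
The plan is to build an NFA that scans $\sigma$ while guessing the factor boundaries and checking the required colours on the fly. I would reuse the automata $A_c$ and $A_d$ from Fact~\ref{fact-NFA-color} (accepting exactly the strings of colour $c$, resp.\ $d$; by that construction each has a single initial state $\iota_c$, $\iota_d$ with no incoming transitions, and one checks that neither is accepting) and \textit{concatenate} one copy of $A_c$ with $\omega$ many copies of $A_d$. A run first follows $A_c$ along $\tau_0$; whenever it sits in an accepting state of $A_c$ it may read one more symbol and \textit{jump} into a fresh copy of $A_d$, that symbol becoming the first symbol of $\tau_1$; and whenever it sits in an accepting state of an $A_d$-copy it may likewise read one more symbol and jump into the next fresh copy of $A_d$, starting the next factor. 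The single-initial-state property makes jumps trivial to realise: for every accepting state $f$ (of $A_c$ or of an $A_d$-copy) and every transition $\transition{\iota_d}{a}{q}$ of $A_d$ out of its initial state, add a transition $\transition{f}{a}{q}$ whose target is a \textit{marked} copy of $q$. Concretely one takes the state set $Q_c + (Q_d\times\bool)$, the Boolean recording whether the state was just reached by a jump; a marked state carries the same outgoing transitions as its unmarked twin, and \textbf{the marked states are declared to be exactly the accepting states}. Büchi acceptance then reads: infinitely many jumps occur, i.e.\ infinitely many factors get closed.

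The ``if'' direction is routine: given a factorisation $\tau$ of $\sigma$ with $\col(\tau_0)=c$ and $\col(\tau_n)=d$ for $n\ge1$, run $A_c$ along its accepting path on $\tau_0$ and then, for each $n\ge1$, jump into a fresh $A_d$-copy and follow its accepting path on $\tau_n$; since $\tau$ has infinitely many factors this run jumps infinitely often and so visits marked states infinitely often. For the ``only if'' direction, let $\rho$ be an accepting run; by Fact~\ref{fact-bseq-enumeration} enumerate the positions at which $\rho$ enters a marked state as $q_1<q_2<\cdots$. One checks that $\rho$ stays inside $Q_c$ until the first jump, so $\rho$ restricted to $[0,q_1)$ is an accepting path of $A_c$ reading $\substr{\sigma}{0}{q_1-1}$, hence $\col(\substr{\sigma}{0}{q_1-1})=c$; and for $n\ge1$ the segment of $\rho$ between the $n$-th and the $(n+1)$-th jump is, after forgetting the flag, an accepting path of $A_d$ reading $\substr{\sigma}{q_n-1}{q_{n+1}-1}$, hence that string has colour $d$. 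Taking $\tau_0:=\substr{\sigma}{0}{q_1-1}$ and $\tau_n:=\substr{\sigma}{q_n-1}{q_{n+1}-1}$ gives the desired factorisation; all factors are nonempty since the $q_i$ are strictly increasing and $q_1\ge2$ ($\iota_c$ is not accepting, so the first jump cannot occur before position~$2$).

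I expect the genuine obstacle to be pinning down the acceptance condition, not the path-chasing above. The tempting shortcut --- declaring the accepting states of the $A_d$-copies to be accepting in the combined automaton --- is \textit{unsound}: a run can pass through an accepting state of a single $A_d$-copy infinitely often without ever jumping, i.e.\ while reading one infinite ``factor'' that is never closed, and this need not witness any genuine factorisation. For example, with $\Gamma$ the two-element group, $e$ its unit and $d$ the other element, the sequence $d\,d\,e\,e\,e\cdots$ has only finitely many symbols equal to $d$ and so admits no factorisation whose factors past $\tau_0$ all have colour $d$, yet such a run would sit forever in the ``colour $d$'' state. Marking exactly the states reached by a jump, and accepting there, forces an accepting run to close infinitely many factors --- which is precisely what the ``only if'' direction uses --- and, as a side benefit, also secures nonemptiness of all factors, since a jump consumes one symbol before landing on a marked state.
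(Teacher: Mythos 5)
Your construction is correct and follows essentially the same route as the paper's: concatenate the two automata from Fact~\ref{fact-NFA-color} and make the accepting states exactly those positions that close a factor, so Büchi acceptance forces infinitely many closed factors and the \emph{tempting shortcut} you rightly rule out never arises. The paper packages the jump slightly more economically---over the plain disjoint union $Q_A + Q_B$ it makes the initial state $q_2$ of the $d$-automaton the \emph{unique} accepting state and redirects into $q_2$ every transition that would enter an accepting state of $A$ or $B$, so the jump consumes the last symbol of the current factor rather than the first symbol of the next one and the Boolean flag is unnecessary.
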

\begin{proof}
  Let $\Gamma$ be a finite semigroup and $c,d:\Gamma$.
  Let~$A$ and~$B$ be the NFAs for~$c$ and~$d$ we obtain
  with Fact~\ref{fact-NFA-color}.
  We start with the disjoint union of $A$ and $B$
  as it comes to states and transitions.
  The new initial state~$q_1$ is the initial state of~$A$,
  and the new accepting state~$q_2$ is the initial state of~$B$.
  For every transition $\transition{q}{a}{p}$ to an accepting state of~$A$
  we add the transition $\transition{q}{a}{q_2}$, and
  for every transition $\transition{q}{a}{p}$ to an accepting state of~$B$
  we add the transition $\transition{q}{a}{q_2}$.
\end{proof}

\begin{fact}\label{fact-NFA-RF}
  For every finite semigroup $\Gamma$ one can construct an NFA
  accepting exactly the sequences over $\Gamma$ 
  that have a Ramseyan factorisation.
\end{fact}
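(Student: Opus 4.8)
The plan is to reduce this statement to Fact~\ref{fact-NFA-color-omega} by taking a finite union of automata. The key observation is a reformulation of the property "$\sigma$ has a Ramseyan factorisation": a sequence $\sigma$ over $\Gamma$ has a Ramseyan factorisation if and only if there exist colors $c$ and $d$ such that $\sigma$ has a factorisation $\tau$ with $\col(\tau0)=c$ and $\col(\tau n)=d$ for all $n\ge1$. For the forward direction, given a Ramseyan factorisation $\tau$ one takes $c:=\col(\tau0)$ and $d:=\col(\tau1)$; for the backward direction, any factorisation witnessing the right-hand side is by definition Ramseyan (with color $d$).

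First I would, for every pair of colors $(c,d)$ of $\Gamma$, invoke Fact~\ref{fact-NFA-color-omega} to obtain an NFA $A_{c,d}$ over $\Gamma$ accepting exactly the sequences that have a factorisation $\tau$ with $\col(\tau0)=c$ and $\col(\tau n)=d$ for all $n\ge1$. Since $\Gamma\times\Gamma$ is a finite type, it comes with a duplicate-free list enumerating all such pairs, and I would build the desired automaton $A$ as the finite union of the $A_{c,d}$ along this list, iterating the binary union operation from Fact~\ref{fact-NFA-union}. (If one wants to be fully careful about a possibly empty carrier of $\Gamma$, the union over the empty list is an NFA accepting no sequence, which is vacuously correct since there are then no sequences $\nat\to\Gamma$ at all; but this degenerate case needs no special treatment if one simply folds the binary union over the list.)

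Finally I would verify the correctness equivalence. By repeated application of Fact~\ref{fact-NFA-union} — or, cleanly, by induction on the list of color pairs — $\sigma\models A$ holds iff $\sigma\models A_{c,d}$ for some pair $(c,d)$ in the list, which by the specification of $A_{c,d}$ is equivalent to the existence of colors $c,d$ and a factorisation $\tau$ of $\sigma$ with $\col(\tau0)=c$ and $\col(\tau n)=d$ for all $n\ge1$, which by the observation above is equivalent to $\sigma$ having a Ramseyan factorisation.

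I do not expect a genuine obstacle here; the only mild bookkeeping is the induction that lifts the binary union of Fact~\ref{fact-NFA-union} to the finite union over the list of color pairs and threads the correctness statement through that induction. Everything else is an immediate unfolding of the definition of Ramseyan factorisation together with the already-established Facts~\ref{fact-NFA-color-omega} and~\ref{fact-NFA-union}.
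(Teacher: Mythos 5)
Your proposal is correct and takes essentially the same approach as the paper: form the NFA $A_{c,d}$ from Fact~\ref{fact-NFA-color-omega} for each pair of colors $(c,d)$, and take the finite union over all such pairs via Fact~\ref{fact-NFA-union}. The paper's proof is a two-line version of exactly this; your extra detail (the explicit equivalence with the definition of Ramseyan factorisation and the remark about folding the binary union over the finite list of pairs) is just the bookkeeping the paper leaves implicit.
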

\begin{proof}
  For every pair $(a,b)$ of colors
  we obtain an automaton as specified 
  by Fact~\ref{fact-NFA-color-omega}.
  The union of these automata
  (Fact~\ref{fact-NFA-union})
  satisfies the claimed property.
\end{proof}

\section{Complement Operation}

\setCoqFilename{Buechi.Buechi}

We fix an NFA $A$ over $\Sigma$ with state type $Q$.
We will construct a complement NFA
following Büchi's construction~\cite{buechi1962,thomas99}. 
%as it appears in the literature.
We carefully arrange the technical details of the construction 
such that the required properties follow constructively and
the connection with Ramseyan factorisations becomes clear.

The letters $x$ and $y$ will range over strings in $\Sigma^+$,
and the letters~$p$ and $q$ will range over states in~$Q$.
Moreover,
the letters $\sigma$, $\tau$, and $\rho$
will range over sequences over $\Sigma$, $\Sigma^+$, and $Q$,
respectively.

We say that \coqlink[accepting_quasi_run]{\emph{$\rho$ is an accepting quasi-run on $\tau$}}
if $\rho$ starts with an initial state,
satisfies ${\trans{\rho n}{\tau n}{\rho(Sn)}}$ for all~$n$,
and satisfies ${\transa{\rho n}{\tau n}{\rho(Sn)}}$
for infinitely many $n$.

\begin{fact}[][accepting_quasi_run_iff_accepts]
  \label{fact-quasi-run}
  Let $\tau$ be a factorisation of $\sigma$.
  Then $A$ accepts $\sigma$ if and only if
  there exists an accepting quasi-run on $\tau$.
\end{fact}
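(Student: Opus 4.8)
The plan is to match accepting runs on $\sigma$ with accepting quasi-runs on $\tau$: a run on $\sigma$ gets cut at the factor boundaries to give a quasi-run on $\tau$, and conversely a quasi-run gets filled in with concrete paths to give a run on $\sigma$. First I would fix notation for the boundaries. Since every string in $\Sigma^+$ is nonempty, the length $|\tau n|$ of the $n$-th factor is at least~$1$; put $b_0:=0$ and $b_{n+1}:=b_n+|\tau n|$. Then $b$ is strictly monotone, so $b_n\ge n$; factor~$n$ occupies the positions $b_n,\dots,b_{n+1}-1$; we have $\substr\sigma{b_n}{b_{n+1}}=\tau n$ because $\sigma$ agrees with the concatenation $(\tau0)(\tau1)\cdots$; and for every position~$m$ one can compute, by a bounded search, the unique $n$ with $b_n\le m<b_{n+1}$, which I call the factor of~$m$. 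Note that $m\ge b_k$ forces the factor of~$m$ to be at least~$k$.

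For the left-to-right direction I would take an accepting run $\rho'$ on $\sigma$ and set $\rho n:=\rho'(b_n)$. Then $\rho 0=\rho'(0)$ is initial; the segment of $\rho'$ on positions $b_n,\dots,b_{n+1}$ is a path on $\substr\sigma{b_n}{b_{n+1}}=\tau n$ from $\rho n$ to $\rho(Sn)$, which gives $\trans{\rho n}{\tau n}{\rho(Sn)}$; and if $\rho'(m)$ is accepting, then, writing $n$ for the factor of~$m$, the state $\rho'(m)$ sits at offset $m-b_n<|\tau n|$ of that segment, which therefore passes through an accepting state before its last position, giving $\transa{\rho n}{\tau n}{\rho(Sn)}$. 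Since $\rho'$ is accepting at arbitrarily large positions and from such a position one can compute its (arbitrarily large) factor, $\transa{\rho n}{\tau n}{\rho(Sn)}$ holds for infinitely many~$n$; this inference is constructive precisely because the factor of a position is computable. Hence $\rho$ is an accepting quasi-run on $\tau$.

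For the right-to-left direction I would, given an accepting quasi-run $\rho$ on $\tau$, extract for each~$n$ a concrete path $u^{(n)}$ on $\tau n$ from $\rho n$ to $\rho(Sn)$. A path on $\tau n$ is a string over~$Q$ of length $|\tau n|+1$, so the candidates form a finite set and being a path among them is decidable; since $\trans{\rho n}{\tau n}{\rho(Sn)}$ holds, a bounded search returns such a $u^{(n)}$, and I arrange the search to return one that additionally passes through an accepting state before its last position whenever $\transa{\rho n}{\tau n}{\rho(Sn)}$ holds (which is decidable and, when true, guarantees that such a path exists). The endpoints match, since $u^{(n)}$ ends in $\rho(Sn)$ and $u^{(n+1)}$ starts in $\rho(Sn)$, so $\rho'(m):=u^{(n)}_{m-b_n}$ with $n$ the factor of~$m$ defines a sequence over~$Q$ whose boundary values are consistent with both neighbouring paths. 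Then $\rho'(0)=\rho 0$ is initial; for $m$ in factor~$n$ at offset $j=m-b_n<|\tau n|$ we have $\rho'(m)=u^{(n)}_j$, $\rho'(Sm)=u^{(n)}_{j+1}$, and $\sigma m$ is the $j$-th symbol of $\tau n$, so $(\rho'(m),\sigma m,\rho'(Sm))$ is a transition because $u^{(n)}$ is a path on $\tau n$; and for the infinitely many~$n$ with $\transa{\rho n}{\tau n}{\rho(Sn)}$ the chosen $u^{(n)}$ has an accepting state at some offset $j<|\tau n|$, so $\rho'$ is accepting at position $b_n+j\ge n$, hence at arbitrarily large positions. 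Thus $\rho'$ is an accepting run on $\sigma$, so $A$ accepts $\sigma$.

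The hard part will be the right-to-left direction: one must turn the purely propositional hypotheses $\trans{\rho n}{\tau n}{\rho(Sn)}$ (and $\transa{\rho n}{\tau n}{\rho(Sn)}$ at the relevant~$n$) into a single function $n\mapsto u^{(n)}$ delivering concrete, consistently glued paths, and this must be done without a choice axiom. It works here because for each fixed factor the paths range over a finite set on which being a path is decidable, so a witness is obtained uniformly by bounded search; and one has to be careful to select, for each~$n$, a path that witnesses $\transa{\rho n}{\tau n}{\rho(Sn)}$ exactly when that predicate holds, so that the acceptance condition (accepting states infinitely often) transfers. The left-to-right direction is then routine, given the same bookkeeping about factor boundaries.
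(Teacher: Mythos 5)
The paper states this fact without giving a proof in the text, but your argument is the natural one and it is correct: cut an accepting run at the factor boundaries $b_n$ to get a quasi-run (using that the factor of a position is computable by bounded search, so ``accepting infinitely often'' transfers constructively), and glue together finitely-searched witnessing paths $u^{(n)}$ for $\trans{\rho n}{\tau n}{\rho(Sn)}$ (refined to witness $\transa{\rho n}{\tau n}{\rho(Sn)}$ whenever that decidable predicate holds) to get an accepting run. Your care about the two points where constructivity bites — computing the factor of a position, and defining $n\mapsto u^{(n)}$ without a choice axiom by bounded search over a decidable finite set — is exactly what makes the argument go through in Coq's type theory.
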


\setCoqFilename{Buechi.Complement}

We define a finite type $\Gamma$ and
a function $\gamma:\Sigma^+\to\Gamma$\,:
\begin{align*}
  \coqlink[Gamma]{\N{\Gamma}}
  &~:=~2^{Q\times Q}\times2^{Q\times Q}
  \\
  \coqlink[gamma]{\N{\gamma x}}
  &~:=~(\,
    \mset{(p,q)}{\trans pxq},
    \mset{(p,q)}{\transa pxq}
    \,)
\end{align*}
Note that $\gamma$ can be defined constructively since
$\trans pxq$ and $\transa pxq$ are decidable predicates
and $Q$ is a finite type.
We call the elements of $\Gamma$ \emph{colors}
and $\gamma x$ the \emph{color of $x$}.
The letters~$V$ and $W$ will range over colors.

\begin{fact}[][same_color_same_accepting_quasi]
  \label{fact-quasi-run2}
  Let $\tau$ and $\tau'$ be sequences over $\Sigma^+$
  such that $\gamma(\tau n)=\gamma(\tau'n)$ for all $n$.
  Then~$\tau$ and $\tau'$ admit 
  the same accepting quasi-runs.
\end{fact}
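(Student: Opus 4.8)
The plan is to unfold the definition of "accepting quasi-run" and observe that it only depends on $\tau$ through the two decidable predicates $\trans{\cdot}{\tau n}{\cdot}$ and $\transa{\cdot}{\tau n}{\cdot}$ at each position $n$. Concretely, suppose $\rho$ is an accepting quasi-run on $\tau$. I want to show $\rho$ is an accepting quasi-run on $\tau'$. The three conditions to check are: $\rho$ starts with an initial state (this does not mention $\tau$ at all, so it transfers immediately); $\trans{\rho n}{\tau' n}{\rho(Sn)}$ holds for all $n$; and $\transa{\rho n}{\tau' n}{\rho(Sn)}$ holds for infinitely many $n$.

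For the second and third conditions, the key observation is that the color $\gamma x = (\mset{(p,q)}{\trans pxq}, \mset{(p,q)}{\transa pxq})$ records exactly the sets of state pairs connected by $x$, for both the plain transition relation and the accepting-visit relation. Since $Q$ is a finite type, membership in these sets is decidable, and we have $\trans pxq$ iff $(p,q)$ is in the first component of $\gamma x$, and $\transa pxq$ iff $(p,q)$ is in the second component of $\gamma x$. Hence from $\gamma(\tau n) = \gamma(\tau' n)$ we get, for every $n$ and every pair $(p,q)$, that $\trans p{\tau n}q \toot \trans p{\tau' n}q$ and $\transa p{\tau n}q \toot \transa p{\tau' n}q$. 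Instantiating with $p := \rho n$ and $q := \rho(Sn)$ turns the second condition on $\tau$ into the second condition on $\tau'$ pointwise, and likewise rewrites the set $\set{n \mid \transa{\rho n}{\tau n}{\rho(Sn)}}$ into the set $\set{n \mid \transa{\rho n}{\tau' n}{\rho(Sn)}}$, so "infinitely many $n$" is preserved. The argument is fully symmetric in $\tau$ and $\tau'$, giving both directions of "admit the same accepting quasi-runs".

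I expect no real obstacle here; the statement is essentially a bookkeeping lemma making precise that $\gamma$ is a faithful abstraction of a string's transition behaviour. The only mild care needed is to spell out the correspondence between the components of $\gamma x$ and the predicates $\trans pxq$, $\transa pxq$ — i.e. to note that two pairs of finite sets (elements of $2^{Q\times Q}\times 2^{Q\times Q}$) are equal iff they have the same members, which holds because sets over a finite type have decidable membership (as recalled in the preliminaries). Everything else is a direct rewriting along the pointwise equivalences, with the "infinitely often" quantifier going through because the underlying position sets are literally equal once the inner predicates have been shown equivalent.
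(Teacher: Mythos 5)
Your proof is correct and is exactly the routine unfolding-and-rewriting argument the paper leaves to the reader (no proof body is given for this fact in the paper). You correctly identify that a quasi-run's conditions depend on $\tau$ only through the two components of $\gamma(\tau n)$, so the pointwise color equality lets you transfer both the for-all and the infinitely-often conditions by rewriting under $(p,q):=(\rho n,\rho(Sn))$.
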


We now define an operation on colors 
turning $\Gamma$ into a finite semigroup
and $\gamma$ into a semigroup morphism.
\begin{align*}
  \coqlink[addGamma]{\N{V+W}}~:=~(\,
  &\mset{(p,q)}{\exists r.~(p,r)\in\pi_1 V\land(r,q)\in\pi_1 W},\\
  &\mset{(p,q)}{\exists r.~(p,r)\in\pi_2 V\land(r,q)\in\pi_1 W~\lor\\
   & \hspace{63pt}(p,r)\in\pi_1 V\land(r,q)\in\pi_2 W}\,)
\end{align*}

\begin{fact}\label{fact-complement-semigroup}
  \coqlink[addGamma_is_associative]{$V_1+(V_2+V_3)=(V_1+V_2)+V_3$} and
  \coqlink[gamma_is_homomorphisms]{$\gamma(xy)=\gamma x+\gamma y$}.
\end{fact}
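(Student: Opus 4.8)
The plan is to verify the two claims separately by unfolding the definitions of $+$ and $\gamma$, reducing everything to statements about the decidable path predicates $\trans pxq$ and $\transa pxq$. Both claims are purely combinatorial bookkeeping about how paths through the NFA compose; no Ramsey-theoretic or constructive subtlety is involved. The only care needed is in tracking which component (first or second) of a color records the plain transitions $\trans{}{}{}$ and which records the accepting transitions $\transa{}{}{}$, and in the second claim in matching the defining path-composition lemmas for concatenation.

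For associativity, first I would establish the two helper facts that $+$ is designed to abstract: path concatenation for $\Rightarrow_A$ and for $\Rrightarrow_A$. Concretely, $\trans pxq \land \trans qy r \to \trans p{xy}r$ and conversely every path on $xy$ factors through some intermediate state at the split point; and for $\Rrightarrow_A$, a path on $xy$ visits an accepting state before its last position iff either the $x$-part does (giving $\transa pxq$ and $\trans qyr$) or the $y$-part does (giving $\trans pxq$ and $\transa qyr$). These are exactly the disjunction shapes appearing in the definition of $+$. Given these, associativity of $+$ on the first components is just associativity of relational composition of the projected "plain" relations $\pi_1V$; associativity on the second components follows by a routine case analysis: an element of $\pi_2(V_1+(V_2+V_3))$ unfolds to a length-three chain in which exactly one of the three links is an accepting link and the rest are plain links, and the same is true after regrouping as $(V_1+V_2)+V_3$, so both sides describe the same set of pairs. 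I would phrase this as: a pair lies in the second component of a $+$-product of colors $V_1,\dots,V_n$ iff there is a chain $p=r_0,r_1,\dots,r_n=q$ with $(r_{i-1},r_i)$ in the first component of $V_i$ for all but exactly one $i$, where that one exception lies in the second component; this characterisation is manifestly associative.

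For $\gamma(xy)=\gamma x+\gamma y$, I would just compute both components of $\gamma(xy)$ using the path-concatenation helpers: the first component of $\gamma(xy)$ is $\{(p,q)\mid \trans p{xy}q\}$, which by the concatenation lemma equals $\{(p,q)\mid \exists r.\ \trans pxr\land\trans ryq\}$, i.e.\ the first component of $\gamma x+\gamma y$; and the second component of $\gamma(xy)$, namely $\{(p,q)\mid \transa p{xy}q\}$, equals by the $\Rrightarrow_A$-concatenation lemma the union $\{(p,q)\mid\exists r.\ \transa pxr\land\trans ryq\}\cup\{(p,q)\mid\exists r.\ \trans pxr\land\transa ryq\}$, which is precisely the second component of $\gamma x+\gamma y$. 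Extensional equality of the two colors (pairs of finite sets over $Q\times Q$) then follows by set extensionality for the finite power-set type $2^{Q\times Q}$.

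The main obstacle is not deep but bureaucratic: getting the path-concatenation lemma for $\Rrightarrow_A$ exactly right, in particular the precise meaning of "passing through an accepting state before the last position" when two paths are glued — one must check that the shared endpoint state, if accepting, is counted correctly (it is "before the last position" of the combined path precisely when it sits strictly inside, which is automatic here since $y$ is a nonempty string), so that the accepting information is neither double-counted nor lost at the seam. Once that lemma is stated carefully, both parts of the fact are immediate.
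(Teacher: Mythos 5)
The paper states this fact without a written proof (it is discharged directly in the Coq development), so there is no "paper approach" to diverge from; your proposal is the natural direct verification and it is correct. You identify the right helper lemmas — path concatenation/splitting for $\Rightarrow_A$ and $\Rrightarrow_A$ — and you handle the only real subtlety correctly: when gluing a path on $x$ with a path on $y$, an accepting shared endpoint is attributed to the $y$-part (position $0$ of $y$'s path, which is before $y$'s last position since $y$ is a nonempty string) and not to the $x$-part (it sits at $x$'s last position), so the two cases in the $\Rrightarrow_A$-splitting lemma are exhaustive and nothing is lost or double-counted at the seam. One small imprecision: for the second component of a sum of colors, the correct characterisation is "there exists a chain in which one chosen link lies in $\pi_2$ of its color and all the others lie in $\pi_1$ of theirs," not "exactly one link is accepting" — for arbitrary colors $V$ the components $\pi_1 V$ and $\pi_2 V$ need not be nested, so a pair can lie in both; this does not affect the argument (both sides of the associativity equation admit the same chain characterisation), but the phrasing should be adjusted when writing it up.
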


A \emph{kind~$V/W$} is a pair of two colors $V$ and $W$.
We say that a sequence $\sigma$ \coqlink[kind]{\emph{has kind~$V/W$}}
if $\sigma$ has a factorisation $\tau$ 
that has color~$V$ at position $0$
and color $W$ at all positions $n\ge1$.
We say that a kind \coqlink[kind_compatible]{\emph{$V/W$ is compatible with $A$}}
if $A$ accepts some sequence of kind $V/W$.
Note that there may be sequences having more than one kind.

\begin{fact}[][compatibility]
  \label{fact-kinds-compatible}
  If\, $V/W$ is compatible with $A$,
  then $A$ accepts all sequences of kind $V/W$.
\end{fact}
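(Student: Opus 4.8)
The plan is to exploit the fact that having kind $V/W$ is entirely determined by the sequence of $\gamma$-colors of a factorisation, together with Fact~\ref{fact-quasi-run2}, which says that two factorisations with pointwise equal $\gamma$-colors admit exactly the same accepting quasi-runs. So suppose $V/W$ is compatible with $A$: there is some sequence $\sigma'$ accepted by $A$ and a factorisation $\tau'$ of $\sigma'$ with $\gamma(\tau'0)=V$ and $\gamma(\tau'n)=W$ for all $n\ge 1$. Now let $\sigma$ be an arbitrary sequence of kind $V/W$, with witnessing factorisation $\tau$, so again $\gamma(\tau0)=V$ and $\gamma(\tau n)=W$ for all $n\ge 1$. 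Then $\gamma(\tau n)=\gamma(\tau'n)$ for every $n$, by case distinction on $n=0$ versus $n\ge 1$.

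The main step is then the following chain. Since $A$ accepts $\sigma'$ and $\tau'$ is a factorisation of $\sigma'$, Fact~\ref{fact-quasi-run} gives an accepting quasi-run $\rho$ on $\tau'$. By Fact~\ref{fact-quasi-run2}, applied to the pointwise equality $\gamma(\tau n)=\gamma(\tau'n)$, the sequence $\rho$ is also an accepting quasi-run on $\tau$. Finally, since $\tau$ is a factorisation of $\sigma$, the other direction of Fact~\ref{fact-quasi-run} turns this accepting quasi-run on $\tau$ back into acceptance of $\sigma$ by $A$. Hence $A$ accepts $\sigma$, as required.

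I do not expect any genuine obstacle here: the statement is essentially a bookkeeping lemma whose content is already packaged into Facts~\ref{fact-quasi-run} and~\ref{fact-quasi-run2}. The only point needing a little care is the case split on the index when verifying $\gamma(\tau n)=\gamma(\tau'n)$ — position $0$ carries color $V$ and positions $n\ge 1$ carry color $W$ in both factorisations, so the equality holds uniformly — and making sure we invoke Fact~\ref{fact-quasi-run} in both directions with the correct factorisation ($\tau'$ for the sequence we know is accepted, $\tau$ for the sequence we want to conclude about). Everything is constructive, since the quasi-run is transported literally unchanged from $\tau'$ to $\tau$.
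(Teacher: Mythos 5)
Your proof is correct and follows essentially the same route as the paper's: take an accepting quasi-run for the known accepted sequence via Fact~\ref{fact-quasi-run}, transport it across the pointwise-equal $\gamma$-colors using Fact~\ref{fact-quasi-run2}, and convert it back to acceptance with Fact~\ref{fact-quasi-run}. You simply spell out a few steps (the case split on $n=0$ vs.\ $n\ge1$) that the paper leaves implicit.
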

\begin{proof}
  Let $\sigma$ and $\sigma'$ be sequences of kind $V/W$
  and let $\sigma\models A$.  We show $\sigma'\models A$.
  Let $\tau$ and $\tau'$ be $V/W$-factorisations 
  of $\sigma$ and $\sigma'$ respectively.
  By Facts~\ref{fact-quasi-run} and~\ref{fact-quasi-run2}
  we have an accepting quasi-run $\rho$ on $\tau$ and $\tau'$.
  Now $\sigma'\models A$  by Fact~\ref{fact-quasi-run}.
\end{proof}

\begin{fact}[][totality_up]
  \label{fact-UP-kind}
  A UP sequence $xy^\omega$ has kind $\gamma x/\gamma y$.
\end{fact}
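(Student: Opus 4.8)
The plan is to exhibit the obvious factorisation, exactly as in the proof of Fact~\ref{fact-UP-RF}. Let $\tau$ be the sequence over $\Sigma^+$ given by $\tau 0 = x$ and $\tau n = y$ for every $n \ge 1$, that is, $\tau = x, y, y, y, \ldots$

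First I would verify that $\tau$ is a factorisation of $xy^\omega$. Unfolding the recursive definition of $xy^\omega$ shows that this sequence agrees with the infinite concatenation $xyyy\cdots = (\tau 0)(\tau 1)(\tau 2)\cdots$, which is by definition precisely what it means for $\tau$ to factorise $xy^\omega$; alternatively one may simply reuse the observation already recorded in the proof of Fact~\ref{fact-UP-RF}.

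It then remains to read off the kind. By construction $\gamma(\tau 0) = \gamma x$ and $\gamma(\tau n) = \gamma y$ for all $n \ge 1$, so $\tau$ is a factorisation of $xy^\omega$ that has color $\gamma x$ at position $0$ and color $\gamma y$ at every position $n \ge 1$; hence $xy^\omega$ has kind $\gamma x/\gamma y$. There is no genuine obstacle here: the only mildly non-trivial point is matching the formal recursion defining $xy^\omega$ against the informal infinite concatenation $xyyy\cdots$, and that is immediate from the defining equations of $xy^\omega$.
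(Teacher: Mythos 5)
Your proof is correct and matches the paper's (implicit) argument: the paper states this fact without proof precisely because the factorisation $x,y,y,y,\ldots$ of $xy^\omega$, already used for Fact~\ref{fact-UP-RF}, immediately has color $\gamma x$ at position $0$ and $\gamma y$ at all later positions. Nothing further is needed.
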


\begin{fact}[][totality]
  \label{fact-RF-kind}
  Assuming \M{RF}, every sequence over $\Sigma$ has a kind.
\end{fact}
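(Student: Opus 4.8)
The plan is to transport a Ramseyan factorisation along the semigroup morphism $\gamma$. Given a sequence $\sigma$ over $\Sigma$, I would form the sequence $\sigma'$ over the finite semigroup $\Gamma$ defined by $\sigma'n:=\gamma(\sigma n)$, reading the single symbol $\sigma n$ as a one-symbol string in $\Sigma^+$. By \RF\ the sequence $\sigma'$ has a Ramseyan factorisation $\upsilon$, so there is a color $W$ with $\col(\upsilon n)=W$ for all $n\ge1$; put $V:=\col(\upsilon0)$.

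Next I would cut $\sigma$ at exactly the positions at which $\upsilon$ cuts $\sigma'$. The starting positions of $\upsilon1,\upsilon2,\ldots$ in $\sigma'$ form an infinite set of numbers, all of them $\ge1$ since every $\upsilon n$ is nonempty; passing through the correspondence between factorisations and infinite boolean sequences (Fact~\ref{fact-bseq-enumeration}), these positions yield a factorisation $\tau$ of $\sigma$ for which $\tau n$ and $\upsilon n$ span the same range of indices, for every $n$.

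It then remains to check that $\gamma(\tau n)=\col(\upsilon n)$ for all $n$: if $\tau n=\sigma(i)\cdots\sigma(j-1)$, then $\upsilon n=\sigma'(i)\cdots\sigma'(j-1)$, hence $\col(\upsilon n)=\gamma(\sigma i)+\cdots+\gamma(\sigma(j-1))$, which equals $\gamma(\sigma(i)\cdots\sigma(j-1))=\gamma(\tau n)$ because $\gamma$ is a semigroup morphism (Fact~\ref{fact-complement-semigroup}). Consequently $\gamma(\tau0)=V$ and $\gamma(\tau n)=W$ for all $n\ge1$, so $\tau$ witnesses that $\sigma$ has kind $V/W$.

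The only point requiring care is the middle step: spelling out the cut-position correspondence and confirming that the factorisation of $\sigma$ it produces is pointwise $\gamma$-related to $\upsilon$; everything else is immediate from $\gamma$ being a morphism. Alternatively one can avoid constructing $\upsilon$ altogether by noting that $\lam{ij}{\gamma(\substr\sigma ij)}$ is an additive coloring into $\Gamma$ and invoking \RA\ (which follows from \RF\ by Fact~\ref{fact-rf-implies-ra}) to obtain an infinite homogeneous boolean sequence, from which the kind of $\sigma$ is read off directly.
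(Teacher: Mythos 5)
Your proof is correct and takes essentially the same route as the paper: form the sequence $\lam n{\gamma(\sigma n)}$ over $\Gamma$, apply \RF\ to get a Ramseyan factorisation, transport it to a factorisation of $\sigma$ cut at the same positions, and use that $\gamma$ is a semigroup morphism to conclude $\gamma(\tau n)=\col(\upsilon n)$. The alternative you sketch via \RA\ is a minor repackaging of the same idea, not a genuinely different argument.
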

\begin{proof}
  Let $\sigma$ be a sequence over $\Sigma$.
  By \RF{} and Fact~\ref{fact-complement-semigroup}
  we obtain a Ramseyan factorisation $\mu:(\Gamma^+)^\omega$
  of $\lam n{\gamma(\sigma n)}$.
  From $\mu$ we obtain a factorisation 
  $\tau$ of $\sigma$
  such that $\gamma(\tau n)=\col(\mu n)$ for all $n$.
  Thus $\sigma$ has kind $\gamma(\tau0)/\gamma(\tau1)$.
\end{proof}

% \begin{fact}\label{fact-NFA-for-V}
%   For every color $V$ 
%   one can construct an NFA over $\Sigma$ 
%   accepting exactly the strings of color $V$.
% \end{fact}
% \begin{proof}
%   Let $V$ be a color.
%   We construct an NFA with $Q:=\one+\Gamma$
%   where $\eset$ serves as initial state and 
%   $V$ serves as accepting state.  
%   From $\eset$ we have for every $a:\Sigma$
%   a transition to $\gamma a$, 
%   and for $V$ we have for every $a:\Sigma$
%   a transition to $V+\gamma a$.
% \end{proof}

\begin{fact}[][VW_aut_correct]
  \label{fact-NFA-VW-omega}
  One can define a function
  that given a kind $V/W$ yields 
  an NFA \emph{$VW^\omega$} 
  accepting exactly the sequences of kind $V/W$.
\end{fact}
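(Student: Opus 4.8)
The plan is to reuse the automaton constructions of Facts~\ref{fact-NFA-color} and~\ref{fact-NFA-color-omega}, but with the morphism $\gamma:\Sigma^+\to\Gamma$ in place of the color of a symbol. This is legitimate because $\gamma$ is a semigroup morphism (Fact~\ref{fact-complement-semigroup}) and decidable, hence constructively available.

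First I would construct, for a single color $V$, an NFA over $\Sigma$ accepting exactly the strings $x$ over $\Sigma$ with $\gamma x=V$. Following Fact~\ref{fact-NFA-color}, I take $\one+\Gamma$ as state type, with $\oneM$ the unique initial state (no incoming transitions), $V$ the unique accepting state, and transitions $\transition{\oneM}{a}{\gamma a}$ and $\transition{U}{a}{U+\gamma a}$ for all $a:\Sigma$ and $U:\Gamma$. Reading $a_0\cdots a_n$ from $\oneM$ leads to the state $\gamma a_0+\cdots+\gamma a_n$, which equals $\gamma(a_0\cdots a_n)$ since $\gamma$ is a morphism; so the automaton accepts $x$ iff $\gamma x=V$.

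Next I would glue the NFAs for $V$ and for $W$ exactly as in the proof of Fact~\ref{fact-NFA-color-omega}: take their disjoint union, let the initial state of the copy for $V$ be the new initial state, let the initial state of the copy for $W$ be the new, unique accepting state $q_2$, and for every transition into an accepting state of either copy add a parallel transition into $q_2$. Call the resulting NFA $VW^\omega$. The correctness argument parallels the one for Fact~\ref{fact-NFA-color-omega}: an accepting run of $VW^\omega$ on $\sigma$ stays in the copy for $V$ until it first moves to $q_2$, which happens exactly when the factor read so far has color $V$; from then on it keeps moving back to $q_2$, doing so exactly when the factor read since the previous visit has color $W$; and the run is accepting iff it visits $q_2$ infinitely often. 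Thus an accepting run segments $\sigma$ into a factorisation $\tau$ with $\gamma(\tau0)=V$ and $\gamma(\tau n)=W$ for all $n\ge1$, and conversely every such factorisation yields an accepting run. By the definition of kinds this says that $VW^\omega$ accepts exactly the sequences of kind $V/W$.

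The construction itself is a routine adaptation; the only step that needs care, and hence the main obstacle, is this last correctness equivalence. One has to check that the portion of $\sigma$ read between two consecutive visits of $q_2$ has $\gamma$-color equal to $\gamma$ of the corresponding factor (immediate from the morphism property), that $\oneM$ of the copy for $V$ is visited only at position $0$ (it has no incoming transitions), and that the copy for $W$ never blocks a run (every state has an outgoing transition on every symbol), so that accepting runs and kind-$V/W$ factorisations correspond. No instance of \RF{} and no use of excluded middle enters, so the construction is fully constructive.
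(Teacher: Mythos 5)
Your proposal is correct and follows essentially the same route as the paper, whose proof simply says the construction is analogous to that of Fact~\ref{fact-NFA-color-omega}; you have just spelled out the details, replacing the symbol-wise coloring $\col$ by the morphism $\gamma$ on $\Sigma^+$ (justified by Fact~\ref{fact-complement-semigroup}) and reusing the gluing of Facts~\ref{fact-NFA-color} and~\ref{fact-NFA-color-omega}.
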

\begin{proof}
  The construction is similar 
  to the construction given for Fact~\ref{fact-NFA-color-omega}.
\end{proof}

% \begin{proof}
%   Let $V$ and $W$ be two colors and let
%   $A$ and $B$ the NFAs for $V$ and $W$ we obtain
%   with Fact~\ref{fact-NFA-for-V}.
%   We start with the disjoint union of $A$ and $B$
%   as it comes to states and transitions.
%   The new initial state $q_1$ is the initial state of $A$,
%   and the new accepting state $q_2$ is the initial state of $B$.
%   For every transition $(q,a,p)$ to an accepting state of~$A$
%   we add the transition $(q,a,q_2)$, and
%   for every transition $(q,a,p)$ to an accepting state of $B$
%   we add the transition $(q,a,q_2)$.
% \end{proof}

\begin{fact}[][dec_kind_compatible]
  \label{fact-kind-comp-dec}
  It is decidable whether a kind is compatible with $A$.
\end{fact}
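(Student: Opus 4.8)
The plan is to reduce compatibility of a kind $V/W$ with $A$ to the non-emptiness problem for a suitable NFA. Given a kind $V/W$, Fact~\ref{fact-NFA-VW-omega} yields an NFA $VW^\omega$ accepting exactly the sequences of kind $V/W$. I would then form the intersection automaton $A\cap VW^\omega$ using Fact~\ref{fact-NFA-inter}. By correctness of intersection (Fact~\ref{fact-NFA-inter}(1)), $\sigma\models A\cap VW^\omega$ holds if and only if $\sigma\models A$ and $\sigma$ has kind $V/W$. Consequently $A\cap VW^\omega$ is satisfiable if and only if $A$ accepts some sequence of kind $V/W$, i.e.\ if and only if $V/W$ is compatible with $A$.

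Having established this equivalence, decidability follows immediately: by Fact~\ref{fact-NFA-solver}(5) it is decidable whether $A\cap VW^\omega$ is satisfiable, so I would simply run that test. Since building $VW^\omega$, taking the intersection, and testing satisfiability are all realised by functions definable in the type theory, this yields a genuine decision procedure for compatibility, not merely XM for it.

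I do not expect a real obstacle here, as every ingredient has already been proved. The only point deserving a moment's care is that the language characterisation of $A\cap VW^\omega$ must be a full equivalence rather than a one-sided implication, so that satisfiability of the intersection automaton is logically equivalent to---not merely sufficient or necessary for---compatibility; this is exactly what Fact~\ref{fact-NFA-inter}(1) provides.
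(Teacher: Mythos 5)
Your proposal is correct and takes essentially the same route as the paper: reduce compatibility of $V/W$ with $A$ to satisfiability of the intersection automaton $VW^\omega\cap A$ (via Facts~\ref{fact-NFA-VW-omega} and~\ref{fact-NFA-inter}) and then apply Fact~\ref{fact-NFA-solver} to decide non-emptiness.
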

\begin{proof}
  $V/W$ is compatible with $A$ 
  if and only if the NFA $VW^\omega\cap A$ is satisfiable.  
  Thus the claim follows with Fact~\ref{fact-NFA-solver}.
\end{proof}

\begin{fact}[][in_complement_nfa_iff]
  \label{fact-complement-automaton}
  One can construct an NFA $\ol A$ 
  accepting exactly the sequences
  that have a kind incompatible with $A$.
\end{fact}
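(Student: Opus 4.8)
The plan is to build $\ol A$ as a finite union of automata, one for every kind $V/W$ that is \emph{incompatible} with $A$. Concretely, I would iterate over all pairs $(V,W)$ of colors — there are finitely many since $\Gamma$ is a finite type — and for each pair decide, using Fact~\ref{fact-kind-comp-dec}, whether $V/W$ is compatible with $A$. For each incompatible kind I take the NFA $VW^\omega$ provided by Fact~\ref{fact-NFA-VW-omega}, which accepts exactly the sequences of kind $V/W$. Then $\ol A$ is defined as the union (Fact~\ref{fact-NFA-union}) of all these $VW^\omega$ over incompatible kinds $V/W$.

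The correctness argument then has two directions. For the easy direction, suppose $\sigma\models\ol A$. By construction of the union, $\sigma\models VW^\omega$ for some kind $V/W$ incompatible with $A$, hence by Fact~\ref{fact-NFA-VW-omega} $\sigma$ has kind $V/W$, and this kind is incompatible with $A$ by choice. Conversely, suppose $\sigma$ has some kind $V/W$ that is incompatible with $A$. Then $\sigma\models VW^\omega$ by Fact~\ref{fact-NFA-VW-omega}, and since $V/W$ is incompatible it is one of the summands of the union, so $\sigma\models\ol A$. This uses only the stated facts and the decidability of compatibility to form the (finite, computable) index set of the union; note that forming a union indexed by a decidable predicate over a finite type is unproblematic in the type theory.

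The only subtle point — and I expect it to be the main place where care is needed rather than a genuine obstacle — is the well-definedness of the construction: the union must be taken over a list of kinds, and I need the predicate ``$V/W$ incompatible with $A$'' to be decidable so that this list can be computed. That is exactly Fact~\ref{fact-kind-comp-dec}. One should also keep in mind the remark after Fact~\ref{fact-kinds-compatible} that a sequence may have several kinds; this is harmless here, because incompatibility of one kind of $\sigma$ already forces $\sigma\models\ol A$, and (by the contrapositive of Fact~\ref{fact-kinds-compatible}) compatibility of \emph{every} kind of an accepted sequence is automatic, so the two directions above are genuinely exclusive wherever both a kind and acceptance are available. No appeal to \RF{} or \XM{} is made at this stage; the disjointness and exhaustiveness properties that relate $\ol A$ to $A$ are established separately afterwards, the former constructively and the latter only under \RF{} via Fact~\ref{fact-RF-kind}.
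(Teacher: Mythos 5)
Your construction — the union, over all kinds $V/W$ decided incompatible with $A$ via Fact~\ref{fact-kind-comp-dec}, of the automata $VW^\omega$ from Fact~\ref{fact-NFA-VW-omega}, combined by Fact~\ref{fact-NFA-union} — is exactly the paper's proof, and your two-direction correctness argument is the (routine) verification the paper leaves implicit. Correct, same approach.
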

\begin{proof}
  We construct $\ol A$ as the union of the NFAs
  for the kinds incompatible with $A$.
  The construction is possible 
  due to Facts~\ref{fact-NFA-union},
  \ref{fact-kind-comp-dec}, 
  and~\ref{fact-NFA-VW-omega}.
\end{proof}

\begin{theorem}[Complement]
  \label{theo-complement}~
  \begin{enumerate}
  \coqitem[complement_disjoint] No sequence is accepted by both $A$ and $\ol A$.
  \coqitem[complement_kind_exhaustive] Every sequence that has a kind 
    is accepted by either $A$ or $\ol A$.
  \coqitem[complement_exhaustive_up] Every UP sequence is accepted by either $A$ or $\ol A$.
  \coqitem[complement_exhaustive] Assuming \M{RF}, 
    every sequence is accepted by either $A$ or $\ol A$.
  \end{enumerate}
\end{theorem}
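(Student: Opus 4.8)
The plan is to derive all four parts from the characterisation of $\ol A$ in Fact~\ref{fact-complement-automaton} — that $\ol A$ accepts exactly the sequences having some kind incompatible with $A$ — together with Fact~\ref{fact-kinds-compatible} (compatibility propagates to every sequence of the same kind) and the decidability of compatibility (Fact~\ref{fact-kind-comp-dec}).

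For part~(1), suppose some $\sigma$ satisfies both $A$ and $\ol A$. By Fact~\ref{fact-complement-automaton}, $\sigma$ has a kind $V/W$ that is incompatible with $A$. But $\sigma\models A$ exhibits $\sigma$ as a sequence of kind $V/W$ accepted by $A$, so $V/W$ is compatible with $A$ by the very definition of compatibility — contradiction. Note that this argument does not rely on uniqueness of kinds; it only uses the particular kind witnessed by the assumption $\sigma\models\ol A$.

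Part~(2) is the heart of the matter. Let $\sigma$ have a kind $V/W$. By Fact~\ref{fact-kind-comp-dec}, whether $V/W$ is compatible with $A$ is decidable, so we may case split. If $V/W$ is compatible with $A$, then Fact~\ref{fact-kinds-compatible} gives $\sigma\models A$. If $V/W$ is incompatible with $A$, then $\sigma$ has a kind incompatible with $A$, so $\sigma\models\ol A$ by Fact~\ref{fact-complement-automaton}. This decidable case distinction is exactly where the care invested in setting up $\Gamma$, $\gamma$, and the kinds pays off: the argument needs no instance of excluded middle.

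Parts~(3) and~(4) are then immediate corollaries of part~(2). For~(3), every UP sequence $xy^\omega$ has kind $\gamma x/\gamma y$ by Fact~\ref{fact-UP-kind}, so part~(2) applies directly. For~(4), assuming \M{RF}, every sequence has a kind by Fact~\ref{fact-RF-kind}, so again part~(2) applies. I do not expect any real obstacle here: once the kind machinery and its decidability are in place, the theorem reduces to short bookkeeping, and the only point deserving attention is to route the exhaustiveness statements through a ``has a kind'' hypothesis rather than through a blanket excluded middle — which is precisely why part~(2) is stated and proved separately before specialising to UP sequences and to the \M{RF}-setting.
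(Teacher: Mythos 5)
Your proof is correct and follows exactly the route the paper intends (the paper's own proof simply cites Facts~\ref{fact-complement-automaton}, \ref{fact-kind-comp-dec}, \ref{fact-kinds-compatible}, \ref{fact-UP-kind}, and~\ref{fact-RF-kind} without spelling out the bookkeeping). Your observations that part~(1) needs only the kind witnessed by $\sigma\models\ol A$ and that part~(2) hinges on the decidability of compatibility to avoid excluded middle are exactly the right points of care.
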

\begin{proof}
  Follows with Facts~\ref{fact-complement-automaton},
  \ref{fact-kind-comp-dec}, 
  \ref{fact-kinds-compatible}, 
  \ref{fact-UP-kind},
  and~\ref{fact-RF-kind}.
\end{proof}
 
\section{Main Results so Far}
\label{sec-main-results-so-far}

\setCoqFilename{Buechi.MinimalS1S}

We now have
a translation of \SOSO{} formulas to NFAs
for which we have shown many properties.
We combine the results obtained so far
into main results for \SOSO\ distinguishing
between UP semantics and AS semantics.

\begin{theorem}[\SOSO, AS semantics]
  \label{theorem-as-semantics-s1so}
  Assuming \RF, we have the following:
  \begin{enumerate}
  \coqitem[formula_aut_correct] The translation of formulas to automata 
    is correct for all sequences.
  \coqitem[satisfaction_sat_xm] $\lam{\sigma\phi}{\,\sigma\models\phi}$ satisfies XM. 
  \coqitem[dec_min_S1S_satisfaction]  $\lam\phi{\,\exists\sigma.~\sigma\models\phi}$
    is decidable.
  \end{enumerate}
\end{theorem}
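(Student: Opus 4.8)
The plan is to combine three ingredients that have already been established in the excerpt: the abstract translation correctness of Fact~\ref{fact-absaut-trans}, the concrete Büchi automata constructions for the individual operations (Facts~\ref{fact-NFA-incl}, \ref{fact-NFA-lt}, \ref{fact-NFA-inter}, \ref{fact-NFA-exists}), the complement construction of Theorem~\ref{theo-complement}, and the decidability of the non-emptiness problem for NFAs from Fact~\ref{fact-NFA-solver}(5). The first step is to instantiate the abstract automaton interface of Section~5 with concrete Büchi NFAs over $2^\var$: take $\M{A}^\incl_{XY}$ and $\M{A}^\lts_{XY}$ from Facts~\ref{fact-NFA-incl} and~\ref{fact-NFA-lt}, intersection $A\cap B$ from Fact~\ref{fact-NFA-inter}(1), complement $\ol A$ from Theorem~\ref{theo-complement}, and existential quantification $\ex X A$ from Fact~\ref{fact-NFA-exists}. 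All six abstract assumptions (1)--(6) then have to be discharged.

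The routine discharges are (1), (2), (3), and (6), which are exactly Facts~\ref{fact-NFA-lt}, \ref{fact-NFA-incl}, \ref{fact-NFA-inter}(1), and Fact~\ref{fact-NFA-exists}(1)--(2) respectively — each is an immediate restatement. The interesting discharges are the two complement conditions, disjointness~(4) and exhaustiveness~(5). Disjointness is Theorem~\ref{theo-complement}(1), which holds unconditionally. Exhaustiveness is where \RF\ enters: it is precisely Theorem~\ref{theo-complement}(4), ``assuming \RF, every sequence is accepted by either $A$ or $\ol A$''. So under the hypothesis \RF\ the abstract interface is fully instantiated, and Fact~\ref{fact-absaut-trans} immediately gives both claim~(1) (the translation $\alpha$ is correct: $\sigma\models\phi\toot\sigma\models\alpha(\phi)$) and claim~(2) ($\lam{\sigma\phi}{\,\sigma\models\phi}$ satisfies XM, since acceptance satisfies XM by the second part of Fact~\ref{fact-abstract-automata-disj_exh}, which follows from disjointness and exhaustiveness).

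For claim~(3), satisfiability of a formula $\phi$ reduces via claim~(1) to satisfiability of the NFA $\alpha(\phi)$: indeed $\exists\sigma.~\sigma\models\phi$ iff $\exists\sigma.~\sigma\models\alpha(\phi)$ iff $\alpha(\phi)$ is satisfiable. The latter is decidable by Fact~\ref{fact-NFA-solver}(5) (decidability of the non-emptiness problem, which needs no assumption), and since $\alpha$ is a computable function on a discrete syntax this gives a decision procedure for formula satisfiability. One only needs to note that the ``if satisfiability of automata is decidable'' proviso in Fact~\ref{fact-absaut-trans} is met, which it is.

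The main obstacle is not in this theorem at all — it is entirely inherited from Theorem~\ref{theo-complement}(4), whose proof in turn rests on Fact~\ref{fact-RF-kind} (under \RF, every sequence has a kind) and Fact~\ref{fact-kinds-compatible}. The only subtlety local to the present proof is bookkeeping: one must be careful that \emph{all} of the abstract assumptions are met simultaneously by \emph{one} fixed choice of operations, in particular that the existential-quantification operation for Büchi NFAs (whose correctness statement in Fact~\ref{fact-NFA-exists} is phrased as two implications rather than an equivalence, plus a UP-specific clause) does deliver the abstract equivalence~(6); here the forward direction is Fact~\ref{fact-NFA-exists}(1) and the backward direction is Fact~\ref{fact-NFA-exists}(2) together with the fact that $\approx_X$ is symmetric. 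No genuinely new argument is required beyond assembling the pieces.
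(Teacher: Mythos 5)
Your proposal is correct and takes essentially the same route as the paper, whose proof is exactly the assembly you describe: instantiate the abstract interface of Fact~\ref{fact-absaut-trans} with the concrete NFA operations (Facts~\ref{fact-NFA-incl}, \ref{fact-NFA-lt}, \ref{fact-NFA-inter}, \ref{fact-NFA-exists}), discharge exhaustiveness of complement via Theorem~\ref{theo-complement}(4) using \RF, and decide satisfiability via Fact~\ref{fact-NFA-solver}. The one loose stitch — upgrading Fact~\ref{fact-NFA-exists}(1)--(2) to the abstract biconditional~(6), which also uses that acceptance of $\ex XA$ is insensitive to the $X$-component of the input — is left equally implicit in the paper and follows immediately from the construction of $\ex XA$.
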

\begin{proof}
  Follows with
  Facts~\ref{fact-NFA-incl}, \ref{fact-NFA-lt},
  \ref{fact-NFA-union}, \ref{fact-NFA-inter}, 
  and~\ref{fact-NFA-exists}, 
  Theorem~\ref{theo-complement},
  and Fact~\ref{fact-NFA-solver}.
\end{proof}

\begin{theorem}[\SOSO, UP semantics]
   \label{theorem-up-semantics-s1so}~
  \begin{enumerate}
  \coqitem[formula_aut_correct] The translation of formulas to NFAs
    is correct for UP sequences.
  \coqitem[dec_satisfaction_up] $\lam{xy\phi}{\,xy^\omega\modelsUP\phi}$ is decidable.
  \coqitem[dec_min_S1S_up_satisfaction] $\lam\phi{\,\exists xy.~xy^\omega\modelsUP\phi}$ is decidable.
  \coqitem[sat_up_obtain] If a formula is UP satisfiable, 
    one can obtain a satisfying UP sequence.
  \coqitem[minsat_agree_as_up] Assuming \RF, $\exists\sigma.\,\sigma\models\phi$ 
    if and only if\,
    $\exists xy.\,xy^\omega\modelsUP\phi$.
  \end{enumerate}
\end{theorem}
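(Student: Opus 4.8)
The plan is to reduce all five claims to claim~(1) together with the facts about the NFA solver, and to prove claim~(1) --- that $xy^\omega\modelsUP\phi\toot xy^\omega\models\alpha(\phi)$ --- by induction on $\phi$, following the proof of Fact~\ref{fact-absaut-trans} but with the UP-specific properties of the NFA operations in place of the abstract assumptions. The cases $X\lts Y$ and $X\incl Y$ are immediate from Facts~\ref{fact-NFA-lt} and~\ref{fact-NFA-incl}, and the conjunction case is immediate from Fact~\ref{fact-NFA-inter}(1) and the induction hypotheses.

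For the negation case one unfolds $xy^\omega\modelsUP\neg\phi$ to $\neg(xy^\omega\modelsUP\phi)$, rewrites with the induction hypothesis to $\neg(xy^\omega\models\alpha(\phi))$, and then appeals to Theorem~\ref{theo-complement}: disjointness (part~1) gives $xy^\omega\models\ol{\alpha(\phi)}\to\neg(xy^\omega\models\alpha(\phi))$, and UP-exhaustiveness (part~3) gives the converse. For the existential case $\alpha(\exists X.\phi)=\ex X{(\alpha(\phi))}$: from left to right, a UP witness $uv^\omega\modelsUP\phi$ with $xy^\omega\approx_X uv^\omega$ yields $uv^\omega\models\alpha(\phi)$ by the induction hypothesis and hence $uv^\omega\models\ex X{(\alpha(\phi))}$ by Fact~\ref{fact-NFA-exists}(1); since $\ex X A$ accepts a sequence iff it accepts every $\approx_X$-equivalent one --- an easily checked consequence of the construction, where each added transition depends on a letter $a$ only through $a\cup\set X$ --- we obtain $xy^\omega\models\ex X{(\alpha(\phi))}$. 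From right to left, Fact~\ref{fact-NFA-exists}(3) turns $xy^\omega\models\ex X{(\alpha(\phi))}$ into a UP sequence $uv^\omega\models\alpha(\phi)$ with $uv^\omega\approx_X xy^\omega$, and the induction hypothesis together with symmetry of $\approx_X$ closes the case.

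With claim~(1) established the remaining claims are assembly. Claim~(2) holds because $\lam{xy\phi}{xy^\omega\modelsUP\phi}$ is, by~(1), equivalent to $\lam{xy\phi}{xy^\omega\models\alpha(\phi)}$, which is decidable by Fact~\ref{fact-NFA-UP-sat-dec}. For claims~(3) and~(4), note that by~(1) the predicate $\exists xy.\,xy^\omega\modelsUP\phi$ is equivalent to $\exists xy.\,xy^\omega\models\alpha(\phi)$, which is in turn equivalent to satisfiability of $\alpha(\phi)$: the nonobvious direction uses Fact~\ref{fact-NFA-solver} --- a satisfiable NFA has a match $(x,y)$, and then $xy^\omega$ is accepted --- so claim~(3) follows from decidability of NFA satisfiability (Fact~\ref{fact-NFA-solver}), and for claim~(4) a satisfying UP sequence is obtained by taking a match of $\alpha(\phi)$ (Fact~\ref{fact-NFA-solver}) and invoking~(1). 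For claim~(5), assuming \RF\ Theorem~\ref{theorem-as-semantics-s1so}(1) gives $\sigma\models\phi\toot\sigma\models\alpha(\phi)$, so $\exists\sigma.\,\sigma\models\phi$ is equivalent to satisfiability of $\alpha(\phi)$, which we have just seen is equivalent to $\exists xy.\,xy^\omega\modelsUP\phi$.

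I expect the main obstacle to be the existential case of claim~(1): it is the one place where the UP world genuinely differs from the all-sequences world, because the quantifier ranges over UP sequences only, and it forces the use of the match machinery underlying Fact~\ref{fact-NFA-exists}(3) (ultimately Fact~\ref{fact-UP-match}). The negation case is conceptually delicate in that it rests on UP-exhaustiveness of the complement automaton (Theorem~\ref{theo-complement}(3)), which is precisely where UP semantics escapes the need for \RF\ (cf.\ Fact~\ref{fact-UP-RF}); but by this point all that work is already in place, so the case itself is routine.
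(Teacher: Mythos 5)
Your proposal is correct and follows essentially the same route as the paper, whose proof simply cites Facts~\ref{fact-NFA-incl}, \ref{fact-NFA-lt}, \ref{fact-NFA-inter}, \ref{fact-NFA-exists}, Theorem~\ref{theo-complement} (UP-exhaustiveness for negation), and Facts~\ref{fact-NFA-solver} and~\ref{fact-NFA-UP-sat-dec} — exactly the ingredients you assemble, with the induction and the $\approx_X$-invariance of $\ex XA$ spelled out explicitly.
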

\begin{proof}
  Follows with
  Facts~\ref{fact-NFA-incl}, \ref{fact-NFA-lt},
  \ref{fact-NFA-union}, \ref{fact-NFA-inter}, 
  and~\ref{fact-NFA-exists}, 
  Theorem~\ref{theo-complement},
  and Facts~\ref{fact-NFA-solver} and~\ref{fact-NFA-UP-sat-dec}.
\end{proof}

\setCoqFilename{Buechi.NecessityRF}

For the rest of the paper we will be concerned
with results showing that \RF\ is 
a necessary condition for AS semantics.  
We now show that \RF\ is equivalent
to the existence of complement automata.
We also show that \RF\ is equivalent
to the agreement of UP equivalence with
AS equivalence of automata.

Given an NFA $A$ over $\Sigma$,
we call an NFA $A'$ over $\Sigma$
a \emph{complement of $A$} 
if every sequence $\sigma$ over $\Sigma$ satisfies 
(1) $\sigma\models A\to\sigma\models A'\to\bot$
(\textit{disjointness})
and (2)~${\sigma\models A\lor\sigma\models A'}$
(\textit{exhaustiveness}).

Given two NFAs over $\Sigma$,
we write \emph{$A\equiv B$} if $A$ and $B$
accept the same sequences,
and \emph{$A\equivUP B$} if $A$ and $B$
accept the same UP sequences.
We define the following propositions:
\begin{itemize}
\item [\emph{\coqlink[AC]{\AC}}~$:=$]
  For every $\Sigma$,
  every NFA over $\Sigma$ has a complement.
\item [\emph{\coqlink[AU]{\AU}}~$:=$]
  For every $\Sigma$ and
  all $A$ and $B$ over $\Sigma$,  
  $A\equivUP B$ implies $A\equiv B$. 
\item [\emph{\coqlink[AX]{\AX}}~$:=$]
  For every $\Sigma$ and
  all $\sigma$ and $A$ over $\Sigma$,
  either $\sigma\models A$ or $\sigma\nvDash A$. 
\end{itemize}

\begin{fact}[][AC_implies_AX]
  \label{fact-ac-implies-ax}
  \AC\ implies \AX.
\end{fact}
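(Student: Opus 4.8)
The plan is to derive $\AX$ from $\AC$ directly from the definitions. Assume $\AC$, and let $\Sigma$ be a finite type, $A$ an NFA over $\Sigma$, and $\sigma$ a sequence over $\Sigma$. By $\AC$ there is a complement $A'$ of $A$, meaning that disjointness $\sigma\models A\to\sigma\models A'\to\bot$ and exhaustiveness $\sigma\models A\lor\sigma\models A'$ hold. First I would use exhaustiveness to do a case distinction: either $\sigma\models A$, in which case we are immediately done (the left disjunct of $\AX$ holds), or $\sigma\models A'$. In the second case, disjointness tells us that $\sigma\models A$ would yield $\bot$, so $\sigma\nvDash A$, which is the right disjunct of $\AX$.

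The whole argument is a three-line logical manipulation: $\AC$ gives us two assumptions about $A'$, and combining a disjunction with an implication to $\bot$ yields exactly an instance of excluded middle for $\sigma\models A$. No properties of NFAs, runs, or acceptance beyond the abstract specification are needed. I do not expect any real obstacle here; the only subtlety worth noting is that this is precisely why the paper phrases the complement specification as the pair (disjointness, exhaustiveness) rather than as a single equivalence $\sigma\models A'\toot\sigma\nvDash A$ — the latter would not constructively deliver the disjunction $\sigma\models A\lor\sigma\nvDash A$. So the proof essentially just unfolds the chosen specification of complements.

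\begin{proof}
  Assume \AC, let $\Sigma$ be given, and let $A$ be an NFA over $\Sigma$ and $\sigma$ a sequence over $\Sigma$.
  By \AC\ there is a complement $A'$ of $A$, so we have
  disjointness $\sigma\models A\to\sigma\models A'\to\bot$ and
  exhaustiveness $\sigma\models A\lor\sigma\models A'$.
  We argue by cases on exhaustiveness.
  If $\sigma\models A$, the left disjunct of \AX\ holds.
  If $\sigma\models A'$, then by disjointness $\sigma\models A$ is contradictory,
  hence $\sigma\nvDash A$ and the right disjunct of \AX\ holds.
\end{proof}
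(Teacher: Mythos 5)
Your proof is correct, and it is essentially the same argument the paper has in mind: the paper states this fact without any proof, treating it as immediate from the definitions of complement (disjointness plus exhaustiveness), which is exactly what your case analysis unfolds. Your remark about why the specification of complements is split into two parts rather than a single biconditional is also accurate and captures the intended constructive subtlety.
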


\begin{theorem}[][AC_equiv_RF]
  \label{theorem-rf-equiv-ac}
  \RF\ and  \AC\ are equivalent.
\end{theorem}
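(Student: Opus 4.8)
The plan is to prove the two implications separately. For the direction $\RF \to \AC$, I would invoke the complement construction already developed in the previous section. Given $\RF$ and an NFA $A$ over $\Sigma$, take $\ol A$ as constructed in Fact~\ref{fact-complement-automaton}. Disjointness is Theorem~\ref{theo-complement}(1), which holds unconditionally. Exhaustiveness is Theorem~\ref{theo-complement}(4), which is exactly the statement that assuming $\RF$ every sequence is accepted by $A$ or $\ol A$. Hence $\ol A$ is a complement of $A$ in the sense defined just above the theorem, and since $A$ and $\Sigma$ were arbitrary we obtain $\AC$. This direction is essentially bookkeeping.

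The substantive direction is $\AC \to \RF$. The idea is to reduce the existence of a Ramseyan factorisation of an arbitrary sequence $\sigma$ over a finite semigroup $\Gamma$ to a question about complement automata, using the NFA for Ramseyan factorisations built in Fact~\ref{fact-NFA-RF}. Let $A_{\RF}$ be the NFA over $\Gamma$ accepting exactly those sequences over $\Gamma$ that have a Ramseyan factorisation. By $\AC$, $A_{\RF}$ has a complement $A'$, so by exhaustiveness every sequence over $\Gamma$ is accepted by $A_{\RF}$ or by $A'$. If I can show that $A'$ accepts \emph{no} sequence at all — i.e. $A_{\RF}$ accepts every sequence over $\Gamma$ — then in particular the given $\sigma$ has a Ramseyan factorisation, which is $\RF$. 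To show $A'$ is unsatisfiable I would use the non-emptiness machinery of Fact~\ref{fact-NFA-solver}: if $A'$ were satisfiable it would have a match $(x,y)$, and then $xy^\omega \models A'$. But $xy^\omega$, being a UP sequence over a finite semigroup, has a Ramseyan factorisation by Fact~\ref{fact-UP-RF}, so $xy^\omega \models A_{\RF}$; this contradicts disjointness of the complement. Hence $A'$ is unsatisfiable, and we are done.

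The main obstacle I anticipate is making sure the reduction is genuinely constructive and that $A_{\RF}$ as specified really does the job. The key point to check carefully is that Fact~\ref{fact-NFA-RF} gives an NFA that accepts \emph{exactly} the sequences with a Ramseyan factorisation — both inclusions — so that the two contradictions above (UP sequence accepted by $A_{\RF}$, and $\sigma$ accepted by $A_{\RF}$ implies $\sigma$ has a Ramseyan factorisation) go through. Everything else — obtaining the match from satisfiability, the decidability facts — is supplied by Fact~\ref{fact-NFA-solver}, and the contradiction with disjointness is immediate from the definition of complement. So the proof is short; its only delicate ingredient is correctly citing the "accepts exactly" clause of the Ramseyan-factorisation NFA together with Fact~\ref{fact-UP-RF}.
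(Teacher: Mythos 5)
Your proposal is correct and follows essentially the same route as the paper: $\RF\to\AC$ via Theorem~\ref{theo-complement}, and $\AC\to\RF$ via the Ramseyan-factorisation NFA of Fact~\ref{fact-NFA-RF}, a UP match obtained from Fact~\ref{fact-NFA-solver}, Fact~\ref{fact-UP-RF}, and disjointness/exhaustiveness of the complement. The only difference is cosmetic: you first show the complement is unsatisfiable and then apply exhaustiveness, whereas the paper does a direct case analysis on $\sigma\models A\lor\sigma\models A'$ for the given sequence; both are constructively sound.
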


\begin{proof}
  $\RF\to\AC$ follows with Theorem~\ref{theo-complement}.
  For the other direction,
  assume \AC\ and let $\Gamma$ be a semigroup.
  By Fact~\ref{fact-NFA-RF}
  we have an NFA~$A$ accepting exactly
  the sequences over $\Gamma$ that have a Ramseyan factorisation.
  Let $A'$ be a complement of $A$ and
  let $\sigma$ be a sequence over~$\Gamma$.
  We show that $\sigma$ has a Ramseyan factorisation.
  Case analysis over $\sigma\models A\lor\sigma\models A'$.
  If $\sigma\models A$, 
  then $\sigma$ has a Ramseyan factorisation by the construction of $A$.
  If $\sigma\models A'$,
  Fact~\ref{fact-NFA-solver} 
  gives us a UP sequence~$xy^\omega$ accepted by~$A'$.
  By disjointness of $A'$ we have $xy^\omega\nvDash A$.
  Contradiction since every UP sequence 
  has a Ramseyan factorisation 
  (Fact~\ref{fact-UP-RF}).
\end{proof}

\begin{theorem}[][AU_equiv_AC]
  \label{theorem-ac-equiv-au}
  \AC\ and \AU\ are equivalent.
\end{theorem}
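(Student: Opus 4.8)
The plan is to prove the two implications separately. For $\AC \Rightarrow \AU$ I give a direct argument, and for $\AU \Rightarrow \AC$ I upgrade the UP-exhaustiveness of the canonical complement automaton (Theorem~\ref{theo-complement}(3)) to full exhaustiveness using $\AU$. Throughout, the decisive auxiliary fact is that a satisfiable NFA always accepts a UP sequence coming from one of its matches (Fact~\ref{fact-NFA-solver}); this is exactly what lets UP behaviour pin down full behaviour for the automata involved.

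First, $\AC \Rightarrow \AU$. Assume $\AC$ and let $A$, $B$ be NFAs over $\Sigma$ with $A \equivUP B$; pick complements $\ol A$ and $\ol B$ via $\AC$. I claim the NFA $A \cap \ol B$ (Fact~\ref{fact-NFA-inter}) is not satisfiable: if it were, Fact~\ref{fact-NFA-solver} would yield a match $(x,y)$ and hence $xy^\omega \models A \cap \ol B$, so $xy^\omega \models A$ and $xy^\omega \models \ol B$; from $A \equivUP B$ we get $xy^\omega \models B$, contradicting disjointness of $\ol B$. By the symmetric argument $B \cap \ol A$ is not satisfiable. Now let $\sigma \models A$. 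Exhaustiveness of $\ol B$ gives $\sigma \models B$ or $\sigma \models \ol B$; in the second case $\sigma \models A \cap \ol B$ by Fact~\ref{fact-NFA-inter}, which is impossible, so $\sigma \models B$. The implication $\sigma \models B \Rightarrow \sigma \models A$ is symmetric, using $\ol A$ and the unsatisfiability of $B \cap \ol A$. Hence $A \equiv B$.

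Second, $\AU \Rightarrow \AC$. Let $A$ over $\Sigma$ be arbitrary and take $\ol A$ as constructed for Fact~\ref{fact-complement-automaton}. Disjointness holds by Theorem~\ref{theo-complement}(1). For exhaustiveness, let $U$ be the trivially constructible one-state NFA over $\Sigma$ whose single state is initial and accepting and carries a self-loop on every letter; $U$ accepts every sequence. By Theorem~\ref{theo-complement}(3) together with Fact~\ref{fact-NFA-union}, $A \cup \ol A$ accepts every UP sequence, and so does $U$; both of these acceptances are unconditional, so no case analysis enters and $A \cup \ol A \equivUP U$ holds constructively. By $\AU$ we get $A \cup \ol A \equiv U$, hence $A \cup \ol A$ accepts every sequence, i.e. $\sigma \models A \lor \sigma \models \ol A$ for all $\sigma$, which is exhaustiveness. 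Thus $\ol A$ is a complement of $A$ in the sense defined above, and $\AC$ follows. Alternatively, the same universal-automaton argument applied to the NFA of Fact~\ref{fact-NFA-RF} first yields $\RF$, whence $\AC$ by Theorem~\ref{theorem-rf-equiv-ac}.

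The step requiring the most care — though it is not really an obstacle — is the passage from ``$A \cap \ol B$ accepts some sequence'' to ``$A \cap \ol B$ accepts a UP sequence'', i.e. the use of Fact~\ref{fact-NFA-solver}; this is where the constructive weakness of UP equivalence gets compensated. One must also keep the $\equivUP$ statements fed to $\AU$ free of excluded middle (they are, since in each case one of the two automata literally accepts all UP sequences), and read ``complement'' consistently as the disjointness/exhaustiveness pair, matching Theorem~\ref{theo-complement}(1) and the exhaustiveness extracted from $\AU$.
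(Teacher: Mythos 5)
Your proof is correct and follows essentially the same route as the paper: both directions hinge on Fact~\ref{fact-NFA-solver} to produce an accepted UP sequence from satisfiability, and on the UP-exhaustiveness of the canonical $\ol A$ (Theorem~\ref{theo-complement}(3)) for the converse. The only cosmetic difference is in $\AC\Rightarrow\AU$, where the paper does a case split using decidability of satisfiability of $A\cap\ol B$ while you prove its unsatisfiability outright (a negation, hence constructively available from the given contradiction argument); after that both variants conclude from exhaustiveness of the complement in the same way, and your $\AU\Rightarrow\AC$ direction matches the paper's verbatim.
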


\begin{proof}

  $\AC\to\AU$ from~\cite{calbrix1993}.
  Assume \AC\ and let $A\equivUP B$.
  Let $\sigma\models A$.  
  By symmetry it suffices to show $\sigma\models B$.
  Let $B'$ be a complement of~$B$. 
  Case analysis using Fact~\ref{fact-NFA-solver}.
  If $A\cap B'$ is unsatisfiable, 
  we have $\sigma\models B$ by exhaustiveness of $B$ and $B'$.
  Otherwise, we have a UP sequence~$xy^\omega$ 
  accepted by $A$ and $B'$.
  Thus $xy^\omega\models B$
  since we assumed $A\equivUP B$. 
  Contradiction with the disjointness of $B$ and $B'$.

  $\AU\to\AC$.
  Assume AU and let $A$ be an NFA.
  We show that~$\ol A$ is a complement for~$A$.
  By Theorem~\ref{theo-complement} 
  we know that~$\ol A$ is disjoint with~$A$.
  Let $A_{\Sigma^\omega}$ be an NFA accepting all sequences.
  By Theorem~\ref{theo-complement} we have
  ${A\cup\ol A\equivUP A_{\Sigma^\omega}}$.
  By the assumption we have ${A\cup\ol A\equiv A_{\Sigma^\omega}}$.
  Thus~$\ol A$ is exhaustive for~$A$.
\end{proof}

\setCoqFilename{Buechi.FullS1S}

\section {Full S1S}
\label{sec-full-s1s}

We now define \emph{S1S} (full S1S) and
give a reduction to \SOSO\ (minimal S1S).
With the reduction, 
the results shown for \SOSO\ carry over to S1S.
One reason for considering S1S 
in addition to \SOSO\ in this paper
is that it better supports the codings
needed for the proof that FX implies RF.

Given two finite types $\var_1$ and $\var_2$ of \emph{variables},  
we obtain the \emph{formulas} of S1S with an inductive type:
\begin{align*}
  \coqlink[Form]{\phi,\psi}~::=~
  &x<y \mid 
    x \in X \mid 
    \phi\land\psi \mid 
    \neg \phi \mid 
    \exists x.\phi \mid 
    \exists X.\phi
  \qquad (x,y : \var_1)~
    (X,Y :\var_2)
\end{align*}
The variables $x$ from $\var_1$ range over numbers
and are called \emph{first-order variables}.
The variables $X$ from $\var_2$ range over
sets of numbers represented as boolean sequences 
and are called \emph{second-order variables}.

Formally, we define the \emph{AS semantics} of S1S
with \emph{interpretations} $I$ 
consisting of two functions 
$\var_1\to\nat$ and $\var_2\to\bool^\omega$.
The \emph{satisfaction relation} \coqlink[satisfies]{$I\models\phi$}
is defined as one would expect.

The reduction of S1S to \SOSO\ 
represents first-order variables
as second-order variables that 
are constrained to singleton sets.
\SOSO\ can express a singleton 
constraint as follows:
\begin{align*}
  \coqlink[Sing]{\sing X}&~:=~\neg(X\lts X)\land\exists Y.\,X\lts Y
\end{align*}
Note that $\neg(X\lts X)$ is satisfied if~$X$
has at most one element,
and that $\exists Y.\,X\lts Y$ is satisfied if~$X$
has at least one element.
If~$X$ and~$Y$ are the singleton variables 
for two first-order variables~$x$ and~$y$,
then $x<y$ can be expressed as $X\lts Y$ and
$x\in Z$ can be expressed as $X\incl Z$.

\begin{fact}\label{fact-s1s-reduction}
  Consider S1S with variable types $\var_1$ and $\var_2$
  and \SOSO\ with variable type $\var_1+\var_2+\one$.
  Then one can obtain for every interpretation~$I$
  and every formula $\phi$ of S1S 
  an interpretation $\hat I$ 
  and a formula $\hat\phi$ of \SOSO\ such that
  \coqlink[full_s1s_to_min_s1s_correct]{$I\models\phi\toot\hat I\models\hat\phi$}.
  Moreover, for every interpretation $\sigma$ of \SOSO\
  that interprets variables from $\var_1$ as singletons,
  one can obtain an interpretation $\tilde\sigma$ of S1S
  such that 
  \coqlink[full_s1s_to_min_s1s_complete]{$\sigma\models\hat\phi\toot\tilde\sigma\models\phi$}.
\end{fact}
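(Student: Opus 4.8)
The plan is to reduce S1S to \SOSO{} by a syntax-directed translation of formulas together with a matching translation of interpretations, and then verify the two stated equivalences by induction on the formula $\phi$. First I would fix the variable type of the target \SOSO{} to be $\var_1+\var_2+\one$: the $\var_1$-component holds the second-order variables that encode first-order variables (constrained to singletons), the $\var_2$-component holds the genuine second-order variables, and the extra $\one$-component supplies a fresh variable needed to define $\sing{X}$ (the existential witness $Y$ in $\exists Y.\,X\lts Y$, as well as any fresh variables introduced when translating $\exists x.\phi$). Writing $\iota_1,\iota_2$ for the two injections into $\var_1+\var_2+\one$, I would define $\hat\phi$ recursively: $\widehat{x<y}:=\iota_1 x\lts\iota_1 y$, $\widehat{x\in X}:=\iota_1 x\incl\iota_2 X$, $\widehat{\phi\land\psi}:=\hat\phi\land\hat\psi$, $\widehat{\neg\phi}:=\neg\hat\phi$, $\widehat{\exists X.\phi}:=\exists(\iota_2 X).\hat\phi$, and $\widehat{\exists x.\phi}:=\exists(\iota_1 x).(\sing(\iota_1 x)\land\hat\phi)$. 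The translation $\hat I$ of an interpretation sends $\iota_1 x$ to the singleton boolean sequence $\{\,I_1 x\,\}$ (i.e. the sequence that is $\btrue$ exactly at position $I_1 x$), sends $\iota_2 X$ to $I_2 X$, and sends the $\one$-variable to an arbitrary set (say the full set), since it never occurs free in any $\hat\phi$.

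The first equivalence $I\models\phi\toot\hat I\models\hat\phi$ I would prove by induction on $\phi$, generalising over $I$. The atomic cases are bookkeeping: $\hat I_{\iota_1 x}$ is the singleton at $I_1 x$, so $\hat I\models\iota_1 x\lts\iota_1 y$ unfolds to $\exists mn.\,m<n\land m\in\{I_1 x\}\land n\in\{I_1 y\}$, which holds iff $I_1 x<I_1 y$, i.e. iff $I\models x<y$; similarly $\hat I\models\iota_1 x\incl\iota_2 X$ iff $I_1 x\in I_2 X$. The conjunction and negation cases are immediate from the induction hypotheses. For $\exists X.\phi$ I would use that the \SOSO{} quantifier ranges over sequences, and that modifying $\hat I$ at $\iota_2 X$ corresponds exactly to modifying $I$ at $X$; here I need the auxiliary observation that $\approx_{\iota_2 X}$ on the \SOSO{} side is compatible with $\widehat{\,\cdot\,}$, i.e. $\tau\approx_{\iota_2 X}\hat I$ lets me build an S1S interpretation agreeing with $I$ outside $X$. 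The genuinely load-bearing case is $\exists x.\phi$: I have to show $\hat I\models\exists(\iota_1 x).(\sing(\iota_1 x)\land\hat\phi)$ iff $\exists n.\,I[x:=n]\models\phi$. For the forward direction, a witness $\tau\approx_{\iota_1 x}\hat I$ with $\tau\models\sing(\iota_1 x)$ forces $\tau_{\iota_1 x}$ to be a singleton $\{n\}$ (by the stated properties of $\sing$: $\neg(X\lts X)$ bounds it above by one element, $\exists Y.\,X\lts Y$ forces a nonempty set), and then $\tau$ equals $\widehat{I[x:=n]}$ up to agreement, so $\tau\models\hat\phi$ gives $I[x:=n]\models\phi$ by the induction hypothesis. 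For the backward direction I take $\tau:=\widehat{I[x:=n]}$, check $\tau\models\sing(\iota_1 x)$ directly (a singleton set satisfies both conjuncts, using the fresh $\one$-variable as the witness $Y$), and apply the induction hypothesis.

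For the second equivalence $\sigma\models\hat\phi\toot\tilde\sigma\models\phi$, the hypothesis is that $\sigma$ is a \SOSO{} interpretation that makes every $\iota_1$-variable a singleton. From this I would first define, using constructive choice applied to the nonempty boolean sequence $\sigma_{\iota_1 x}$ (Fact~\ref{fact-cc}), a function $\var_1\to\nat$ picking the unique element, and set $\tilde\sigma$ to consist of that function together with $\lam X{\sigma_{\iota_2 X}}$. Then the proof is again an induction on $\phi$ that closely mirrors the one above; the main subtlety is that in the $\exists x.\phi$ and $\exists X.\phi$ cases the quantified \SOSO{} witness $\tau$ need not itself satisfy the singleton-constraint on all of $\var_1$ a priori, but $\hat\phi$ for the subformula only ever references $\iota_1$-variables under a $\sing$-guard introduced by our translation, so within the relevant branch the constraint is available exactly where it is needed; I would formalise this by strengthening the induction statement to quantify over all $\sigma$ satisfying the singleton constraint rather than just the top-level one.

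I expect the main obstacle to be the $\exists x.\phi$ case and, more broadly, managing the fresh-variable discipline cleanly: one must ensure that the $\one$-component (and, if several nested first-order quantifiers reuse it, any renaming that is needed) does not clash with variables that occur elsewhere, and that $\sing{X}$ genuinely characterises singletons in the presence of the non-extensional treatment of sequences. Everything else is routine unfolding of the satisfaction clauses plus the observation that $\approx_X$ interacts well with the injections, so once the fresh-variable handling and the singleton characterisation are pinned down, the two inductions go through mechanically.
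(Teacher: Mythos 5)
Your proposal is correct and takes essentially the same approach as the paper: encode first-order variables as singleton-constrained second-order variables via $\sing$, use the extra $\one$-variable as the existential witness in $\sing$, map $\hat I$ to singleton boolean sequences, and use constructive choice to read back the unique element from a singleton for $\tilde\sigma$. The worry you raise about nested first-order quantifiers clashing on the $\one$-variable is not an issue, since that variable appears only bound under $\exists Y$ inside each occurrence of $\sing$ and is never free in $\hat\phi$ — exactly the observation the paper makes.
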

\begin{proof}
  We obtain $\hat\phi$ from $\phi$ by constraining
  every variable from~$\var_1$ to a singleton set.
  The \textit{extra variable} provided by $\var_1+\var_2+\one$
  provides for the variable $Y$ in the quantification $\exists Y.\,x\lts Y$
  employed by the singleton constraint.

  We obtain $\hat I$ from $I$ 
  by representing numbers as singleton sets.  
  The value for the extra variable 
  does not matter since the
  extra variable does not occur free in $\hat\phi$.

  Finally, for every interpretation $\sigma$ of \SOSO\
  that interprets variables from $\var_1$ as singletons,
  one can obtain an interpretation $\tilde\sigma$ of S1S
  by assigning to the variables from $\var_1$
  the numbers provided by the singleton sets.
  Obtaining the unique element of
  a singleton set represented as a boolean sequence
  requires constructive choice.
\end{proof}

\begin{theorem}[S1S, AS semantics]
  \label{theorem-as-semantics-s1s}
  Assuming \RF, we have the following:
  \begin{enumerate}
  \coqitem[full_s1s_satisfies_xm] $\lam{I\phi}{\,I\models\phi}$ satisfies XM. 
  \coqitem[full_s1s_dec]  $\lam\phi{\,\exists I.~I\models\phi}$
    is decidable.
  \end{enumerate}
\end{theorem}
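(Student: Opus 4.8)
The plan is to derive both parts from the corresponding results for \SOSO\ (Theorem~\ref{theorem-as-semantics-s1so}, which assumes \RF) by transporting them along the reduction of Fact~\ref{fact-s1s-reduction}. The reduction preserves satisfaction in both useful directions: for an S1S interpretation $I$ and formula $\phi$ we have $I\models\phi\toot\hat I\models\hat\phi$, and for any \SOSO\ interpretation $\sigma$ that interprets the $\var_1$-variables as singletons we have $\sigma\models\hat\phi\toot\tilde\sigma\models\phi$. Since both \XM\ and decidability transport along logical equivalence, it suffices to move each of the two claims across these equivalences.

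For part~1 I would fix $I$ and $\phi$ and use Fact~\ref{fact-s1s-reduction} to obtain $\hat I$ and $\hat\phi$ with $I\models\phi\toot\hat I\models\hat\phi$. By Theorem~\ref{theorem-as-semantics-s1so}(2) (available under \RF) we have $\hat I\models\hat\phi\lor\hat I\nvDash\hat\phi$, and transporting along the equivalence gives $I\models\phi\lor I\nvDash\phi$, i.e.\ \XM\ for $\lam{I\phi}{\,I\models\phi}$. For part~2 I would first show, for every S1S formula $\phi$,
\[
  (\exists I.~I\models\phi)\toot(\exists\sigma.~\sigma\models\hat\phi).
\]
The forward direction is immediate from $I\models\phi\to\hat I\models\hat\phi$, with $\hat I$ as witness. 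For the backward direction, suppose $\sigma\models\hat\phi$; since $\hat\phi$ conjoins $\sing X$ for the relevant $\var_1$-variables, $\sigma$ already interprets those as singletons, and we may harmlessly redefine $\sigma$ on the non-occurring $\var_1$-variables to some fixed singleton value, which leaves $\sigma\models\hat\phi$ intact. Then the second half of Fact~\ref{fact-s1s-reduction} yields $\tilde\sigma$ with $\tilde\sigma\models\phi$. Because $\phi\mapsto\hat\phi$ is computable and $\lam\psi{\,\exists\sigma.~\sigma\models\psi}$ is decidable by Theorem~\ref{theorem-as-semantics-s1so}(3), the equivalence above shows $\lam\phi{\,\exists I.~I\models\phi}$ is decidable.

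The main obstacle is the side condition in the backward direction of the reduction: a \SOSO\ model of $\hat\phi$ can only be translated back to an S1S interpretation when it reads the first-order variables as singletons. The point to get right is that $\hat\phi$ is constructed so that this holds automatically for every model, the only loose end being $\var_1$-variables that do not occur in $\hat\phi$, which can be assigned an arbitrary singleton without affecting truth. Everything else is routine transport of \XM\ and decidability along the equivalences furnished by Fact~\ref{fact-s1s-reduction} and the already-established \SOSO\ results.
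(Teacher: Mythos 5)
Your proposal is correct and follows essentially the same route as the paper, which simply cites Theorem~\ref{theorem-as-semantics-s1so} and Fact~\ref{fact-s1s-reduction}. You correctly identify and handle the one genuine technical point — that the backward translation $\tilde\sigma$ requires $\sigma$ to read all $\var_1$-variables as singletons, which holds for the free ones by construction of $\hat\phi$ and can be arranged harmlessly for the remaining (bound or unused) ones without disturbing $\sigma\models\hat\phi$.
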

\begin{proof}
  Follows with Theorem~\ref{theorem-as-semantics-s1so}
  and Fact~\ref{fact-s1s-reduction}.
\end{proof}

We now formally define the proposition \FX:
$$\coqlink[FX]{\N{\FX}} :=
  \forall\,\var_1\var_2\,I\,\phi.~
  I\models\phi\lor I\nvDash\phi$$
It is understood that in \FX\
the interpretation $I$ and the formula $\phi$
are taken over the types $\var_1$ and $\var_2$, 
which are (notationally implicit) parameters of S1S.

\begin{corollary}[][RF_implies_FX]
  \label{corollary-rf-implies-fx}
  \RF\ implies \FX.
\end{corollary}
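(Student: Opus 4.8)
The plan is to derive \FX\ from \RF\ by combining the reduction of full S1S to \SOSO\ (Fact~\ref{fact-s1s-reduction}) with the AS-semantics results for \SOSO\ proved under the assumption \RF\ (Theorem~\ref{theorem-as-semantics-s1so}). Concretely, assume \RF\ and fix variable types $\var_1$, $\var_2$, an interpretation $I$, and a formula $\phi$ of S1S. By Fact~\ref{fact-s1s-reduction} applied to S1S over $\var_1,\var_2$ and \SOSO\ over $\var_1+\var_2+\one$, we obtain an interpretation $\hat I$ and a formula $\hat\phi$ of \SOSO\ with $I\models\phi\toot\hat I\models\hat\phi$. Since \RF\ is assumed, clause (2) of Theorem~\ref{theorem-as-semantics-s1so} tells us that $\lam{\sigma\psi}{\sigma\models\psi}$ satisfies XM, so in particular $\hat I\models\hat\phi\lor\hat I\nvDash\hat\phi$. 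Transporting this disjunction along the equivalence $I\models\phi\toot\hat I\models\hat\phi$ gives $I\models\phi\lor I\nvDash\phi$, which is exactly the body of \FX.

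The key steps, in order, are: (i) assume \RF\ and introduce the implicit parameters $\var_1,\var_2$ and the data $I,\phi$; (ii) invoke Fact~\ref{fact-s1s-reduction} to pass to $\hat I,\hat\phi$ over the \SOSO\ variable type $\var_1+\var_2+\one$, recording the equivalence $I\models\phi\toot\hat I\models\hat\phi$; (iii) invoke Theorem~\ref{theorem-as-semantics-s1so}(2), which holds because \RF\ is in scope, to get $\xm(\hat I\models\hat\phi)$; (iv) rewrite with the equivalence from step (ii) to conclude $\xm(I\models\phi)$. All four steps are essentially bookkeeping: the only thing that does real work is Theorem~\ref{theorem-as-semantics-s1so}, whose proof already went through the translation of formulas to Büchi automata and the complement operation (Theorem~\ref{theo-complement}(4), which is the spot where \RF\ is genuinely consumed).

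The main obstacle — if there is one at all — is purely a matter of parameter management: the statement of \FX\ quantifies over $\var_1$ and $\var_2$, and one must make sure that Fact~\ref{fact-s1s-reduction} is applied with the \SOSO\ variable type $\var_1+\var_2+\one$ that the reduction prescribes, and that Theorem~\ref{theorem-as-semantics-s1so} is then instantiated at exactly that \SOSO\ variable type rather than some other one. Once the types line up, the proof is a one-line chain: reduction, then XM for \SOSO\ satisfaction under \RF, then rewriting. No new constructions or Ramsey-theoretic arguments are needed here, since everything of that nature is encapsulated in the earlier results.
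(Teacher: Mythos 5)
Your proof is correct and is essentially the paper's argument: the paper obtains the corollary directly from Theorem~\ref{theorem-as-semantics-s1s}(1), whose proof is exactly the combination you spell out, namely the reduction of S1S to \SOSO\ (Fact~\ref{fact-s1s-reduction}) plus XM of \SOSO\ satisfaction under \RF\ (Theorem~\ref{theorem-as-semantics-s1so}). You have merely unfolded that intermediate theorem inline; no content differs.
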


The reduction of S1S to \SOSO\ also works for UP semantics.
A \emph{UP interpretation} of S1S consists of two functions 
$\var_1\to\nat$ and $\var_2\to\bool^+\times\bool^+$.
We write \emph{$I\modelsUP\phi$} 
for the satisfaction relation
for UP semantics.

\begin{theorem}[S1S, UP semantics]
   \label{theorem-up-semantics-s1s}~
  \begin{enumerate}
  \coqitem[full_s1s_dec_up_satisfies] $\lam{I\phi}{\,I\modelsUP\phi}$ is decidable.
  \coqitem[full_s1s_dec_up_satisfaction] $\lam\phi{\,\exists I.~I\modelsUP\phi}$ is decidable.
  \coqitem[full_s1s_up_sat_obtain] If a formula is UP satisfiable, 
    one can obtain a satisfying UP interpretation.
  \coqitem[full_s1s_up_sat_if_as_sat] Given \RF, satisfiable formulas are UP satisfiable.
  \end{enumerate}
\end{theorem}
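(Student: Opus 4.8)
The plan is to reduce each of the four claims to the corresponding result already established for minimal S1S, namely Theorem~\ref{theorem-up-semantics-s1so}, via the reduction machinery of Fact~\ref{fact-s1s-reduction}. The key observation is that the reduction of S1S to \SOSO\ works uniformly for UP semantics as well: a first-order variable ranging over a number~$n$ is encoded by a second-order variable constrained (via $\sing X$) to the singleton set $\set n$, and the singleton set $\set n$ is itself a UP sequence (e.g.\ $xy^\omega$ with a suitable finite prefix $x$ followed by the constantly-$\bfalse$ period $y$). So a UP interpretation $I$ of S1S over $\var_1,\var_2$ translates to a UP interpretation $\hat I$ of \SOSO\ over $\var_1+\var_2+\one$, and conversely a UP interpretation $\sigma$ of \SOSO\ that interprets every variable from $\var_1$ as a singleton yields a UP interpretation $\tilde\sigma$ of S1S; and the satisfaction equivalences $I\modelsUP\phi\toot\hat I\modelsUP\hat\phi$ and $\sigma\modelsUP\hat\phi\toot\tilde\sigma\modelsUP\phi$ hold, just as in Fact~\ref{fact-s1s-reduction} for AS semantics. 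First I would state this UP-version of Fact~\ref{fact-s1s-reduction} (its proof is the same, noting additionally that extracting the unique element of a singleton represented as a \emph{UP} boolean sequence is still available by constructive choice, since a UP boolean sequence is in particular a boolean sequence).

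Granting the UP reduction, claim~(1) follows from Theorem~\ref{theorem-up-semantics-s1so}(2): given $I$ and $\phi$, compute $\hat I$ and $\hat\phi$, decide $\hat I\modelsUP\hat\phi$, and transport the answer back along the equivalence. Claim~(2) follows from Theorem~\ref{theorem-up-semantics-s1so}(3): $\exists I.\,I\modelsUP\phi$ holds iff $\exists xy.\,xy^\omega\modelsUP\hat\phi$ holds \emph{provided} we also know that a \SOSO-witness can be massaged into one interpreting the $\var_1$-variables as singletons; but this is handled exactly as in the AS case — if $\hat\phi$ is UP-satisfiable at all then by its shape it is satisfiable by an interpretation meeting the singleton constraints, so the decision procedure of Theorem~\ref{theorem-up-semantics-s1so}(3) applied to $\hat\phi$ decides $\exists I.\,I\modelsUP\phi$. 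Claim~(3) follows the same way from Theorem~\ref{theorem-up-semantics-s1so}(4): obtain a satisfying UP sequence $xy^\omega$ for $\hat\phi$, extract from it a \SOSO-interpretation satisfying the singleton constraints, and apply $\tilde{(\cdot)}$ to get the desired UP interpretation of S1S. Claim~(4) combines Theorem~\ref{theorem-up-semantics-s1so}(5) with the reductions: assuming \RF, $\exists I.\,I\models\phi$ (AS) gives, via Fact~\ref{fact-s1s-reduction}, $\exists\sigma.\,\sigma\models\hat\phi$, hence by Theorem~\ref{theorem-up-semantics-s1so}(5) $\exists xy.\,xy^\omega\modelsUP\hat\phi$, and transporting back along the UP reduction yields a UP interpretation $I'$ with $I'\modelsUP\phi$.

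The main obstacle is not any of the four transports individually but the bookkeeping in the UP version of Fact~\ref{fact-s1s-reduction}: one must check that each syntactic clause of the \SOSO-translation (in particular the auxiliary quantification $\exists Y.\,X\lts Y$ inside $\sing X$, which consumes the extra variable from $\one$) behaves correctly when all quantifiers range over UP sequences rather than arbitrary sequences, and that the correspondence between a number $n$ and a UP-representable singleton $\set n$ is a bijection compatible with the satisfaction clauses for $x<y$ and $x\in X$. Since the UP semantics is defined to be \emph{syntactically identical} to the AS semantics except that every quantified sequence is required to be UP, and singletons are patently UP, this check is a routine adaptation of the AS proof; I would present it as a single paragraph rather than re-deriving it clause by clause. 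Hence the theorem follows with Theorem~\ref{theorem-up-semantics-s1so} and the UP-variant of Fact~\ref{fact-s1s-reduction}.
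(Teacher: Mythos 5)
Your proposal is correct and follows essentially the same route as the paper: adapt Fact~\ref{fact-s1s-reduction} to UP semantics (so that $\hat I$ and $\tilde\sigma$ preserve UP-ness) and then transport each claim through Theorem~\ref{theorem-up-semantics-s1so} (with Theorem~\ref{theorem-as-semantics-s1so} contributing for the RF case). The extra detail you give — that singletons are UP-representable and that any UP model of $\hat\phi$ automatically satisfies the singleton constraints — is exactly the bookkeeping the paper elides with ``can be adapted.''
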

\begin{proof}
  Fact~\ref{fact-s1s-reduction} can be adapted so that
  $\hat I$ is a UP interpretation if I is a UP interpretation,
  and $\tilde\sigma$ is a UP interpretation if $\sigma$
  is a UP interpretation.  
  Now Theorems~\ref{theorem-up-semantics-s1so}
  and~\ref{theorem-as-semantics-s1so}
  yield the claims.
\end{proof}

\setCoqFilename{Buechi.RamseyanPigeonholePrinciple}

\section{Ramseyan Pigeonhole Principle}

We now prepare the proofs of
the implications $\FX\to\RF$ and ${\AX\to\RF}$.
For this, we will define a proposition \RP\
we call \emph{Ramseyan pigeonhole principle}.
We will show that \RP\ is equivalent to~\RF.

We will also consider a variant $\RPc$ of \RP\
that can be obtained from \RP\ by applying
double negation and de Morgan rules.
We will show that $\RPc$ holds constructively.
% Given this, we have a constructive proof
% of the implication
% $$(\RP\toot\RPc)\to\RF$$
Now the trick consists in using \FX\ and \AX\
to prove the equivalence $\RP\toot\RPc$.
Since $\RPc$ is obtained from \RP\
by double negation and de Morgan rules,
the special instances of excluded middle
present in \FX\ and \AX\ will suffice
to show the equivalence $\RP\toot\RPc$.
For this it is important that
$\RP$ and $\RPc$ can be expressed with
the satisfaction relations for
formulas and automata.

\RP\ is based on 
a relation for sequences over finite semigroups
appearing as {merging relation} in the 
literature~\cite{buechi1973, mcNaughton66,Perrin04, riba12, Shelah75}.
The relation is used in the literature
to prove Ramseyan properties similar to \RA\ and \RF\
using excluded middle.

Given a sequence~$\sigma$ over a finite semigroup,
we define the \emph{merging relation for~$\sigma$}
as follows:
$$
\coqlink[merge]{\N{\merge{\sigma}{i}{j}}}~:=~
\exists k.~i<k \land j<k \land 
\col (\substr{\sigma}{i}{k}) = \col (\substr{\sigma}{j}{k})
$$
The numbers $i$ and $j$ act as positions of~$\sigma$.
We say that \emph{$i$ merges with~$j$ in $\sigma$}
if $\merge{\sigma}{i}{j}$, 
and that \coqlink[merge_at]{\emph{$i$ merges with $j$ at $k$ in $\sigma$}}
if $i,j<k$ and
$\col(\substr{\sigma}{i}{k})=\col(\substr{\sigma}{j}{k})$.

One easily checks that merging is an equivalence relation using the following fact.
\begin{fact}[][merge_extend]
\label{fact-merging-extend} Let~$\sigma$ be a sequence over a finite semigroup.
If~$i$ merges with~$j$ at~$k$ in~$\sigma$, then~$i$ merges with~$j$ at~$n$ in~$\sigma$ for all $n\geq k$.
\end{fact}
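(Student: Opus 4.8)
The plan is to reduce the claim to the homomorphism property $\col(xy) = \col x + \col y$ together with the obvious decomposition of substrings. Fix $i$, $j$, $k$ with $i, j < k$ and $\col(\substr{\sigma}{i}{k}) = \col(\substr{\sigma}{j}{k})$, and let $n \ge k$. If $n = k$ the conclusion is just the hypothesis, so assume $k < n$. Then $i < k < n$ and $j < k < n$, so all of $\substr{\sigma}{i}{k}$, $\substr{\sigma}{j}{k}$, $\substr{\sigma}{k}{n}$, $\substr{\sigma}{i}{n}$, $\substr{\sigma}{j}{n}$ are defined, and in particular $i, j < n$, which is half of what it means for $i$ to merge with $j$ at $n$.

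For the colour equation I would first record the elementary splitting lemma $\substr{\sigma}{a}{c} = \substr{\sigma}{a}{b}\,\substr{\sigma}{b}{c}$ whenever $a < b < c$, which is immediate from the definition of $\substr{\cdot}{\cdot}{\cdot}$ (alternatively, one proves the whole fact directly by induction on $n$). Instantiating it with $(a,b,c) = (i,k,n)$ and with $(a,b,c) = (j,k,n)$ and using that $\col$ is a semigroup morphism gives
\[
  \col(\substr{\sigma}{i}{n})
  = \col(\substr{\sigma}{i}{k}) + \col(\substr{\sigma}{k}{n})
  = \col(\substr{\sigma}{j}{k}) + \col(\substr{\sigma}{k}{n})
  = \col(\substr{\sigma}{j}{n}),
\]
where the middle step is the assumption that $i$ merges with $j$ at $k$. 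Together with $i, j < n$ this says exactly that $i$ merges with $j$ at $n$, as required.

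I do not expect a real obstacle here; the only points requiring attention are the degenerate case $n = k$, where the splitting lemma does not apply because it needs strict inequalities, and the bookkeeping that $\substr{\cdot}{\cdot}{\cdot}$ is only ever applied to pairs whose left index is strictly below the right one, which is guaranteed throughout by $i, j < k \le n$.
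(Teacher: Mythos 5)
Your proof is correct, and it is the natural argument the paper leaves implicit (no proof is given in the paper, which treats the fact as routine). You handle the boundary case $n=k$ and invoke the splitting of substrings together with the morphism property $\col(xy)=\col x+\col y$, which is exactly what is needed.
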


The following fact says that $\merge{\sigma}{i}{j}$
has only finitely many equivalence classes.

\begin{fact}[][merging_finite_equiv_classes]
\label{fact-list-contains-merging-pos}
Let $\sigma$ be a sequence over a finite semigroup $\Gamma$,
$k\ge|\Gamma|$, and
$n_0 < \cdots < n_k$. 
Then there exist numbers $i < j \leq k$ 
with $\merge{\sigma}{n_i}{n_j}$.
\end{fact}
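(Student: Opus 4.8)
The plan is to reduce the statement to the ordinary (finite) pigeonhole principle on the finite semigroup $\Gamma$. The key idea is to pick a single common merge point lying strictly beyond all of the positions $n_0 < \cdots < n_k$, so that there are $k+1$ relevant substrings rather than only $k$. Concretely, I would set $m := n_k + 1$; then $n_i < m$ for every $i$ with $0 \le i \le k$, so all substrings $\substr{\sigma}{n_i}{m}$ are well defined (recall the substring operation requires a strictly smaller left endpoint).

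Next I would consider the list of $k+1$ colors $\col(\substr{\sigma}{n_0}{m}), \ldots, \col(\substr{\sigma}{n_k}{m})$ in $\Gamma$. Since $k \ge |\Gamma|$ we have $k+1 > |\Gamma|$, so by the finite pigeonhole principle for $\Gamma$ there exist indices $i < j \le k$ with $\col(\substr{\sigma}{n_i}{m}) = \col(\substr{\sigma}{n_j}{m})$. Together with $n_i < m$ and $n_j < m$, the number $m$ then witnesses $\merge{\sigma}{n_i}{n_j}$ by the definition of the merging relation, which is precisely the claim. Note that Fact~\ref{fact-merging-extend} is not needed for this argument; it only enters when one wants to see that merging is an equivalence relation.

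The only point requiring care — and it is a mild one — is the constructive status of the finite pigeonhole principle, but this is unproblematic: $\Gamma$ is a finite type with decidable equality, so one can compute a repeated entry in any list over $\Gamma$ whose length exceeds $|\Gamma|$, and read off the indices $i < j$ from that repetition. The one genuinely load-bearing design choice is to take the merge point strictly past $n_k$ rather than at $n_k$ itself: using $n_k$ as the endpoint would leave only the $k$ substrings $\substr{\sigma}{n_i}{n_k}$ for $i < k$, which is one short of what pigeonhole needs in the tight case $k = |\Gamma|$.
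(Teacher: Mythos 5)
Your proof is correct and takes essentially the same route as the paper: pick the common merge point $m := n_k + 1$, form the $k+1$ colors $\col(\substr{\sigma}{n_i}{m})$, and apply the finite pigeonhole principle on $\Gamma$ to obtain $i < j \le k$ with equal colors, which witnesses $\merge{\sigma}{n_i}{n_j}$. Your side remarks — why the endpoint must be strictly past $n_k$, and why pigeonhole is constructively available over a finite type with decidable equality — are accurate and slightly more explicit than the paper's terse one-line proof.
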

\begin{proof}
Since there are $k+1$ many strings 
$\substr{\sigma}{n_0}{n_k +1},\dots,\substr{\sigma}{n_k}{n_k + 1}$
and at most $k$ many colors, 
there are two numbers $i < j \leq k$ such that
$\col(\substr{\sigma}{n_i}{n_k+1})=\col (\substr{\sigma}{n_j}{n_k+1})$.
\end{proof}

We now define the propositions $\RP$ and $\RPc$:
\begin{align*}
  \coqlink[RP_sigma]{\N{\RPs{\sigma}}}  
  &~:=~\exists i\,\exists^\omega j.~
    \merge{\sigma}{i}{j} 
  &\coqlink[RP]{\N{\RP}}         
  &~:=~\forall \Gamma\,\forall \sigma{:}\,\Gamma^\omega.~
    \RPs{\sigma}
  \\
  \coqlink[RPc_sigma]{\N{\RPsc{\sigma}}}
  &~:=~\neg \forall i\,\exists k\,\forall j \geq k.~
    \notmerge{\sigma}{i}{j}
  &\coqlink[RPc]{\N{\RPc }}         
  &~:=~\forall \Gamma\,\forall \sigma{:}\,\Gamma^\omega.~
    \RPsc{\sigma}
\end{align*}
Note that \RP\ states that 
every sequence over a finite semigroup 
has an infinite merging class.

\begin{fact}
\label{fact-xm-implies-rp-equiv-rpc}
Assuming \XM{}, \RP{} is equivalent to $\RPc{}$.
\end{fact}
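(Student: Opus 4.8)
The plan is to reduce the claim to a pointwise equivalence. Since $\RP$ and $\RPc$ are just the universal closures of $\RPs{\sigma}$ and $\RPsc{\sigma}$ over all finite semigroups $\Gamma$ and all $\sigma:\Gamma^\omega$, it suffices to show, assuming \XM, that $\RPs{\sigma}\toot\RPsc{\sigma}$ for every such $\sigma$. Conceptually, $\RPsc{\sigma}$ is exactly what one gets from $\RPs{\sigma}$ by expressing ``$\sigma$ has an infinite merging class'' in the doubly negated, de Morgan dual form ``not every merging class is finite'', so the two should coincide once excluded middle is available.

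First I would prove $\RPs{\sigma}\to\RPsc{\sigma}$, which is in fact constructive. Assume $\RPs{\sigma}$, so there is an $i$ with $\exists^\omega j.~\merge{\sigma}{i}{j}$. To establish $\RPsc{\sigma}=\neg\forall i\,\exists k\,\forall j\ge k.~\notmerge{\sigma}{i}{j}$, assume $\forall i\,\exists k\,\forall j\ge k.~\notmerge{\sigma}{i}{j}$ and instantiate it at our $i$, obtaining a $k$ with $\forall j\ge k.~\notmerge{\sigma}{i}{j}$. But $\exists^\omega j.~\merge{\sigma}{i}{j}$ yields some $j\ge k$ with $\merge{\sigma}{i}{j}$, a contradiction.

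For the converse $\RPsc{\sigma}\to\RPs{\sigma}$ I would use \XM. By excluded middle on $\RPs{\sigma}$ it suffices, assuming $\neg\RPs{\sigma}$, to contradict $\RPsc{\sigma}$; concretely, I would show $\neg\RPs{\sigma}\to\forall i\,\exists k\,\forall j\ge k.~\notmerge{\sigma}{i}{j}$. From $\neg\RPs{\sigma}=\neg\exists i.~(\exists^\omega j.~\merge{\sigma}{i}{j})$, Fact~\ref{fact-xm-rules}(2) gives $\forall i.~\neg(\exists^\omega j.~\merge{\sigma}{i}{j})$, that is, $\forall i.~\neg\forall k\,\exists j\ge k.~\merge{\sigma}{i}{j}$. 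Fixing $i$ and again applying \XM\ (via Fact~\ref{fact-xm-rules}(2),(3),(4) to push the negation inward through $\forall k$, then $\exists j$, then the implication, and finally to erase a double negation $\neg\neg\merge{\sigma}{i}{j}$), the only alternative to $\exists k\,\forall j\ge k.~\notmerge{\sigma}{i}{j}$ turns out to be $\exists^\omega j.~\merge{\sigma}{i}{j}$, which is excluded by hypothesis; hence $\exists k\,\forall j\ge k.~\notmerge{\sigma}{i}{j}$, as desired. This contradicts $\RPsc{\sigma}$, so $\RPs{\sigma}$ holds.

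The only genuinely nonconstructive step---and the one to be careful about---is moving a negation past the universal quantifier $\forall k$ (turning $\neg\forall k.~P(k)$ into $\exists k.~\neg P(k)$); this is precisely where \XM\ is indispensable, and everything else is routine bookkeeping with Fact~\ref{fact-xm-rules}.
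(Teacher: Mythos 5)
Your proof is correct and takes the same route as the paper, whose entire proof is ``By application of de Morgan laws and double negation''; you have simply made explicit which de Morgan steps and double-negation eliminations (and hence which uses of \XM\ via Fact~\ref{fact-xm-rules}) are involved, correctly identifying that the direction $\RPs\sigma\to\RPsc\sigma$ is constructive while the converse requires classical reasoning to push a negation past $\forall k$.
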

\begin{proof} By application of de Morgan laws and double negation.
\end{proof}
\begin{fact}[][RPc_holds]
\label{fact-rpc-holds}
$\RPc{}$ holds.
\end{fact}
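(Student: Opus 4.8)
The plan is to unfold the goal and argue by contradiction. Fix a finite semigroup $\Gamma$ and a sequence $\sigma:\Gamma^\omega$; we have to establish $\RPsc{\sigma}$, i.e. $\neg\,\forall i\,\exists k\,\forall j\ge k.~\notmerge{\sigma}{i}{j}$. So assume, towards a contradiction, that for every position $i$ there is a bound $k$ with $\notmerge{\sigma}{i}{j}$ for all $j\ge k$. From this point on the goal is $\bot$, which is a proposition; hence we may destruct the (propositional) existential witnesses supplied by this assumption without invoking any choice principle — the whole argument will stay fully constructive.

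The core step is to build, using only this assumption, a strictly increasing finite sequence of positions $n_0<n_1<\cdots<n_{|\Gamma|}$ that is pairwise non-merging in $\sigma$ (i.e. $\notmerge{\sigma}{n_i}{n_j}$ whenever $i\neq j$). I would obtain this by a straightforward induction on $m$: for every $m$ there is a strictly increasing list $n_0<\cdots<n_m$ whose entries are pairwise non-merging. The base case $m=0$ takes $n_0:=0$, the pairwise condition being vacuous. For the step, given $n_0<\cdots<n_m$ pairwise non-merging, apply the assumption to each of the finitely many positions $n_0,\dots,n_m$ to obtain bounds $k_0,\dots,k_m$, and set $n_{m+1}:=\max(n_m+1,k_0,\dots,k_m)$. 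Then $n_{m+1}>n_m$, and for each $l\le m$ we have $n_{m+1}\ge k_l$, hence $\notmerge{\sigma}{n_l}{n_{m+1}}$; by symmetry of the merging relation (which, as noted just before Fact~\ref{fact-merging-extend}, is an equivalence relation) also $\notmerge{\sigma}{n_{m+1}}{n_l}$. Thus $n_0<\cdots<n_{m+1}$ is again strictly increasing and pairwise non-merging, completing the induction.

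Instantiating this at $m=|\Gamma|$ yields positions $n_0<\cdots<n_{|\Gamma|}$ that are pairwise non-merging. But Fact~\ref{fact-list-contains-merging-pos}, applied with $k:=|\Gamma|\ge|\Gamma|$, produces indices $i<j\le|\Gamma|$ with $\merge{\sigma}{n_i}{n_j}$, directly contradicting pairwise non-mergedness. This contradiction discharges the assumption and proves $\RPsc{\sigma}$, hence $\RPc$.

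There is no deep obstacle here; the construction is routine. The points that need a little care are: keeping the goal as $\bot$ throughout so that eliminating the existential bounds is legitimate with no choice principle (this is what makes $\RPc$ — unlike $\RP$ — constructively provable); using symmetry of merging when turning the bounds coming from the assumption into non-merging facts about $n_{m+1}$; and the ``maximum over the $m{+}1$ bounds'' in the inductive step, which can either be folded into the induction or handled by a trivial auxiliary induction over the list.
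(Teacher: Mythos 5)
Your proof is correct and follows essentially the same route as the paper: assume the negation, build by induction a strictly increasing list of $|\Gamma|+1$ pairwise non-merging positions, and contradict Fact~\ref{fact-list-contains-merging-pos}. The only (cosmetic) difference is that the paper strengthens the induction invariant to $\forall i{<}k\,\forall j{\ge}n_k.~\notmerge{\sigma}{n_i}{j}$, so that each inductive step only needs to extract one new bound from the hypothesis (for $n_k$), whereas you re-extract bounds $k_0,\dots,k_m$ for all previously chosen positions at every step; both variants stay in propositions and are constructively legitimate.
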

\begin{proof}
Let $\sigma$ be a sequence over a finite semigroup $\Gamma$.
We assume 
${\forall i\,\exists k\,\forall j\geq k.~\notmerge{\sigma}{i}{j}}$ 
and derive a contradiction.
We show by induction on $k$ that for every $k$ 
there are pairwise non-merging  numbers $n_0<\cdots<n_k$
such that 
$\forall i{<}k\,\forall j{\ge}n_k.~\notmerge{\sigma}{n_i}{j}$.
Contradiction with Fact~\ref{fact-list-contains-merging-pos}.
\end{proof}

\begin{fact}[][RF_implies_RP]
\label{fact-rf-implies-rp}
\RF{} implies \RP{}.
\end{fact}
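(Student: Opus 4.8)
The plan is to extract from \RF{} a single infinite set of positions on which the colouring $\lam{ij}{\,\col(\substr{\sigma}{i}{j})}$ induced by $\sigma$ is homogeneous, and then to read off an infinite merging class with essentially no further work. Concretely, let $\sigma$ be a sequence over a finite semigroup $\Gamma$; the goal is $\RPs{\sigma}$, that is, $\exists i\,\exists^\omega j.~\merge{\sigma}{i}{j}$. Since $\sigma$ has a Ramseyan factorisation by \RF{}, Fact~\ref{fact-RF-inf-homo} yields an infinite boolean sequence $\beta$ that is homogeneous for $\lam{ij}{\,\col(\substr{\sigma}{i}{j})}$. Unfolding homogeneity, I fix the witnessing colour $c$, so that $\col(\substr{\sigma}{m}{n}) = c$ whenever $m, n \in \beta$ with $m < n$.

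Next I would choose the outer witness: since $\beta$ is infinite it is in particular nonempty, so constructive choice (Fact~\ref{fact-cc}) provides some $i \in \beta$. To verify $\exists^\omega j.~\merge{\sigma}{i}{j}$ I fix an arbitrary bound $k$ and apply infinity of $\beta$ twice — once to obtain $j \in \beta$ with $j \ge k$ and $j > i$, and once more to obtain $k' \in \beta$ with $k' > j$. Homogeneity applied to the pairs $i < k'$ and $j < k'$ (both with endpoints in $\beta$) gives $\col(\substr{\sigma}{i}{k'}) = c = \col(\substr{\sigma}{j}{k'})$, so $i$ merges with $j$ at $k'$, hence $\merge{\sigma}{i}{j}$. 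Since $k$ was arbitrary this establishes $\exists^\omega j.~\merge{\sigma}{i}{j}$, hence $\RPs{\sigma}$, and as $\sigma$ and $\Gamma$ were arbitrary we obtain \RP{}.

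I do not expect a real obstacle: the whole argument is a short unfolding of Fact~\ref{fact-RF-inf-homo}. The one place that needs care is that homogeneity must be used in its strong form — a single colour $c$ serves all pairs from $\beta$ — so that for three elements $i < j < k'$ of $\beta$ the substrings $\substr{\sigma}{i}{k'}$ and $\substr{\sigma}{j}{k'}$ genuinely receive the identical colour $c$ and thereby witness the merge. An alternative that bypasses Fact~\ref{fact-RF-inf-homo} would first upgrade the Ramseyan factorisation to one with an idempotent colour using Fact~\ref{fact-RF-idempotence} and then merge its factor boundaries $n_1 < n_2 < \cdots$, exploiting that a sum of at least one copy of an idempotent colour equals that colour; but routing through homogeneity is the shorter path.
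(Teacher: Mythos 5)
Your proof is correct and follows essentially the same route as the paper: both apply \RF{} together with Fact~\ref{fact-RF-inf-homo} to get an infinite homogeneous $\beta$, fix some $i\in\beta$, and then for each bound $k$ pick $j,l\in\beta$ with $k\le j$ and $i,j<l$ so that homogeneity gives the merge at $l$. The only cosmetic differences are that you spell out the constructive-choice step for picking $i$ and additionally require $j>i$, which is harmless but not needed.
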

\begin{proof}
Assume \RF{} and let $\sigma$ be a sequence over a finite semigroup. 
We show $\RPs\sigma$.
By  Fact~\ref{fact-RF-inf-homo}
there is an infinite boolean sequence $\beta$ such that 
$\sgsum (\substr{\sigma}{i}{j})$ is constant 
for all $i,j \in \beta$ with $i < j$.
We fix some $i\in\beta$ and some number $k$.
Then there are $j,l\in\beta$ such that 
$k\le j$ and $i,j<l$.
We now have $\merge{\sigma}{i}{j}$ since
$\col(\substr{\sigma}{i}{l})=\col(\substr{\sigma}{j}{l})$.
\end{proof}

% For the converse implication we first show 
% that \RP{} implies \IP{}.

\begin{fact}[][RP_implies_IP]
\label{fact-rp-implies-ip}
\RP{} implies \IP{}.
\end{fact}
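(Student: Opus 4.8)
The plan is to reuse the left-zero semigroup trick already employed in the proof of $\RF\to\IP$ (see Fact~\ref{fact-IP-MP}). Given a finite type $\Sigma$ and a sequence $\sigma:\Sigma^\omega$, I equip $\Sigma$ with the associative operation $\lam{ab}{a}$, turning it into a finite semigroup $\Gamma$. Under this operation the color of a string is its first symbol, so in particular $\col(\substr{\sigma}{i}{k})=\sigma i$ for all $i<k$.

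The next step is to observe that, over this semigroup, the merging relation for $\sigma$ collapses to equality of values: $\merge{\sigma}{i}{j}$ holds iff there is some $k$ with $i,j<k$ and $\sigma i=\sigma j$, and this is equivalent to $\sigma i=\sigma j$. The forward direction is immediate; for the backward direction one simply picks $k$ to be any number exceeding both $i$ and $j$ as the witness. This is the only slightly informal point and it is entirely routine.

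Now applying $\RP$ to $\Gamma$ and $\sigma$ yields $\RPs{\sigma}$, i.e.\ there is a position $i$ with $\exists^\omega j.~\merge{\sigma}{i}{j}$, hence $\exists^\omega j.~\sigma j=\sigma i$ by the previous step. Setting $a:=\sigma i$ gives $\exists^\omega n.~\sigma n=a$, which is exactly what $\IP$ demands. I do not expect any real obstacle: the statement is a direct corollary of $\RP$ together with the reduction of merging to equality of values in the left-zero semigroup, and the handling of the $\exists^\omega$ quantifier is purely bookkeeping.
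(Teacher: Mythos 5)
Your proposal is correct and follows essentially the same route as the paper's proof: equip $\Sigma$ with the left-zero operation $\lam{ab}{a}$, note that under it $\col(\substr{\sigma}{i}{k})=\sigma i$, so merging with $i$ forces equality with $\sigma i$, and apply $\RP$ to conclude. The extra remark about the backward direction of the merge-equals-equality equivalence is harmless but unnecessary; only the forward direction is used.
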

\begin{proof}
Assume \RP{} and 
let $\sigma$ be a sequence over a finite type~$\Sigma$.
We use $\lambda a b. a$ as semigroup operation for~$\Sigma$.
By $\RPs{\sigma}$ there is a number~$i$ 
and infinitely many numbers~$j$ merging with~$i$.
Let~$i$ merge with~$j$ at~$k$.
Then 
$\sigma j = 
\sgsum (\substr{\sigma}{j}{k}) = 
\sgsum (\substr{\sigma}{i}{k}) = 
\sigma i$.
Hence~$\sigma i$ occurs infinitely often in~$\sigma$.
\end{proof}

\begin{fact}[][RP_implies_RF]
\label{fact-rp-implies-rf}
\RP{} implies \RF{}.
\end{fact}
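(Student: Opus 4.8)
The plan is to reduce the claim, via Fact~\ref{fact-RF-inf-homo}, to producing an infinite boolean sequence that is homogeneous for $\lam{ij}{\col(\substr\sigma ij)}$, and then to extract such a sequence from a single merging class. So let $\sigma$ be a sequence over a finite semigroup $\Gamma$. The instance $\RPs\sigma$ of \RP\ gives a position $i$ together with infinitely many positions $j$ with $\merge\sigma ij$; each such $\merge\sigma ij$ is witnessed by a point $k$ at which $i$ and $j$ merge, and by Fact~\ref{fact-merging-extend} this remains a merging point for all larger arguments.

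The heart of the proof is an inductive construction of a strictly increasing sequence $j_0 < j_1 < j_2 < \cdots$ of positions $> i$, each a merging partner of $i$ (with a merging point $k_q$ kept on the side), satisfying the invariant $\col(\substr\sigma{j_p}{j_q}) = \col(\substr\sigma i{j_q})$ for all $p < q$. Starting from any $j_0 > i$ merging with $i$, at stage $n+1$ we pick $j_{n+1}$ to be a merging partner of $i$ larger than $j_n$ and than all of $k_0,\dots,k_n$ (possible since $i$ has infinitely many merging partners); then for every $p \le n$, because $j_{n+1}$ exceeds the merging point $k_p$ of $i$ and $j_p$, Fact~\ref{fact-merging-extend} yields $\col(\substr\sigma i{j_{n+1}}) = \col(\substr\sigma{j_p}{j_{n+1}})$, which is exactly the invariant for the new pairs. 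All choices are made by constructive choice (Fact~\ref{fact-cc}; the predicate ``$i$ and $j$ merge at $k$'' is decidable, so the relevant bounded existentials are resolvable). Writing $d_q := \col(\substr\sigma i{j_q})$, the invariant reads $\col(\substr\sigma{j_p}{j_q}) = d_q$ for all $p < q$; in particular the color of the substring between two selected positions depends only on the right endpoint.

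Now \RP\ implies \IP\ (Fact~\ref{fact-rp-implies-ip}), so the sequence $\lam q{d_q}$ over the finite type $\Gamma$ takes some value $a$ infinitely often. The positions $j_q$ with $d_q = a$ form an infinite boolean sequence $\beta$ (the defining predicate is decidable and $q \mapsto j_q$ is strictly increasing, cf.\ Fact~\ref{fact-bseq-enumeration}), and whenever $j_p, j_q \in \beta$ with $j_p < j_q$ we get $\col(\substr\sigma{j_p}{j_q}) = d_q = a$. Hence $\beta$ is homogeneous for $\lam{ij}{\col(\substr\sigma ij)}$ with color $a$, and Fact~\ref{fact-RF-inf-homo} supplies a Ramseyan factorisation of $\sigma$. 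Since $\sigma$ was arbitrary, \RF\ follows.

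The step I expect to be the real obstacle is finding the right invariant for the inductive construction: the key idea is to space the $j_q$ far enough apart that every internal color $\col(\substr\sigma{j_p}{j_q})$ collapses onto the single prefix color $\col(\substr\sigma i{j_q})$, which reduces the remaining task to one application of the infinite pigeonhole principle. Once this is in place, the bookkeeping with Fact~\ref{fact-merging-extend}, the decidability facts needed for constructive choice, and the conversion of the pigeonhole output into an infinite boolean sequence are all routine.
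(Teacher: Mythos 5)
Your proof is correct and takes essentially the same route as the paper: both reduce to Fact~\ref{fact-RF-inf-homo}, use constructive choice to build a strictly increasing sequence of merging partners of the fixed position, spaced beyond the recorded merging points so that Fact~\ref{fact-merging-extend} collapses the internal colors $\col(\substr\sigma{j_p}{j_q})$ onto the prefix colors (the paper packages this construction as the functions $f$ and $g$), and then apply \IP\ (Fact~\ref{fact-rp-implies-ip}) to the prefix colors and convert the result into an infinite homogeneous boolean sequence via Fact~\ref{fact-bseq-enumeration}.
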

\begin{proof}
  Assume \RP\ and let
  $\sigma$ be a sequence over a finite semigroup.
  We show that $\sigma$ has a Ramseyan factorisation
  using Fact~\ref{fact-RF-inf-homo}.

  By $\RPs\sigma$ we have a number $m$
  such that $\exists^\omega i.\,\merge\sigma mi$.
  Using constructive choice,
  we obtain a function $f:\nat\to\nat\times\nat$
  such that $fn$ yields two numbers $n<i<k$ such that
  $m$ merges with $i$ at $k$ in~$\sigma$
  (possible since 
  $\lam k{\,\exists i<k.~n<i\land
    \col(\substr\sigma mk)=\col(\substr\sigma ik)}$ 
  is decidable).
  
  We now obtain a strictly monotone function $g:\nat\to\nat$
  such that $m$ merges with $gi$ at $gj$ in~$\sigma$
  for all $i<j$
  (using Fact~\ref{fact-merging-extend}).

  We consider the sequence $\lam n{\col(\substr\sigma m{gn})}$.
  By \IP\ (Fact~\ref{fact-rp-implies-ip}) 
  there exist a color $a$ such that 
  $\exists^\omega n.\, \col(\substr\sigma m{gn})=a$.
  We now obtain (using Fact~\ref{fact-bseq-enumeration})
  an infinite boolean sequence $\beta$
  such that for all $n$
  $$
  n\in\beta~\toot~
  \exists k.~n=gk\land \col(\substr\sigma m{gk})=a
  $$
  Let $i,j\in\beta$ such that $i<j$.
  By Fact~\ref{fact-RF-inf-homo} 
  it suffices to show $\col(\substr\sigma ij)=a$.
  By the definition of $\beta$ 
  we have $k_i<k_j$ such that $i=g(k_i)$, $j=g(k_j)$,
  and $\col(\substr\sigma m{g(k_j)})=a$.
  Now $\col(\substr\sigma ij)=
  \col(\substr\sigma {g(k_i)}{g(k_j)})=
  \col(\substr\sigma {m}{g(k_j)})=a$.
\end{proof}

The proof of Fact~\ref{fact-rp-implies-rf}
is complicated by the fact that $\beta$
must be obtained as a computational function
in constructive type theory.
Similar constructions 
carried out in classical set theory appear 
in the literature~\cite{buechi1973, Perrin04, riba12, Shelah75}
as part of proofs of properties similar to \RF.
The properties $\RP$ and $\RPc$ 
are not made explicit in the literature.
Recall that we have given 
constructive proofs of $\RPc$ and $\RP\to\RF$,
and that there is a trivial classical proof of $\RP\toot\RPc$.

\begin{corollary}
  \XM\ implies \RF.
\end{corollary}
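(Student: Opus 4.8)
The plan is to chain together the implications already available in the excerpt. We want to show $\XM \to \RF$, and the cleanest route goes through $\RPc$ and $\RP$. Concretely, the strategy is: assume $\XM$; invoke Fact~\ref{fact-rpc-holds}, which gives $\RPc$ unconditionally; use Fact~\ref{fact-xm-implies-rp-equiv-rpc} to transfer this to $\RP$ (since under $\XM$ the two are equivalent, and we have the $\RPc$ direction); then apply Fact~\ref{fact-rp-implies-rf} to conclude $\RF$.

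In more detail, I would first note that $\RPc$ holds outright by Fact~\ref{fact-rpc-holds}, so no hypothesis is needed for that step. Next, Fact~\ref{fact-xm-implies-rp-equiv-rpc} states that, assuming $\XM$, $\RP$ and $\RPc$ are equivalent; combining this equivalence with $\RPc$ yields $\RP$. Finally, Fact~\ref{fact-rp-implies-rf} gives $\RP \to \RF$ constructively (no use of $\XM$ there), so we obtain $\RF$. Thus the whole argument is a three-line composition: $\XM \Rightarrow (\RP \leftrightarrow \RPc)$, together with the absolute $\RPc$, gives $\RP$, and $\RP$ gives $\RF$.

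There is essentially no obstacle here: every ingredient is a fact proved earlier in the excerpt, and the corollary is just their composition. The only mild subtlety worth a sentence in the writeup is that $\RPc$ is available \emph{without} any assumption (Fact~\ref{fact-rpc-holds}), so that $\XM$ is used only to upgrade $\RPc$ to $\RP$ via the de~Morgan/double-negation equivalence; all the genuine combinatorial content (the passage from the merging relation to a Ramseyan factorisation) is in Fact~\ref{fact-rp-implies-rf} and needs no excluded middle. Hence the proof is:

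\begin{proof}
  Fact~\ref{fact-rpc-holds} gives $\RPc$. Assuming \XM, Fact~\ref{fact-xm-implies-rp-equiv-rpc} yields $\RP\toot\RPc$, hence $\RP$. Now Fact~\ref{fact-rp-implies-rf} gives $\RF$.
\end{proof}
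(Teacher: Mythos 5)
Your proof is correct and follows exactly the route the paper intends: the preceding remark explicitly flags the three ingredients — $\RPc$ holds constructively (Fact~\ref{fact-rpc-holds}), $\XM$ gives $\RP\toot\RPc$ (Fact~\ref{fact-xm-implies-rp-equiv-rpc}), and $\RP\to\RF$ (Fact~\ref{fact-rp-implies-rf}) — and the corollary is just their composition. Nothing to add.
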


Note that \RA\ and \RF\ 
existentially quantify over functions 
and that this is not the case for \RP.
The proof of $\RP\to\RF$ reveals 
how in a constructive setting one
can construct a nontrivial function
from existential assumptions for numbers.

\setCoqFilename{Buechi.NecessityRF}

\section{FX implies RP}

To show that \FX\ implies \RP, 
we encode \RP{} and $\RPc{}$ into S1S and 
show that the encodings are equivalent. 
%This suffices to show \RF{} by Fact~\ref{fact-prove-rf-with-rp}. 
Recall from Section~\ref{sec-full-s1s} 
that \FX{} says that satisfaction in S1S satisfies XM. 
Assuming \FX{}, double negation and de Morgan laws hold in S1S 
and we can use universal quantification and all boolean connectives.

We choose a finite type $\var_1$ 
providing at least three distinct first-order variables.

\begin{fact}[][phi_merge_correct]
\label{fact-s1s-formula-for-merging}
Assume \FX{} and let $\sigma$ be a sequence over a finite semigroup $\Gamma$ and $x,y : \var_1$ be two distinct first-order variables.
There is an interpretation $I_\sigma$ and a formula $\varphi_{x \sim y}$ with variable types $\var_1$ and $\var_2 := \Gamma + \Gamma$
such that $\fullsat{I_\sigma[x:=i,y:=j]}{\varphi_{x \sim y}}$ if and only if  $\merge{\sigma}{i}{j}$ for all $i$ and $j$.
\end{fact}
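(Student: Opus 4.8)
The plan is to encode $\sigma$ together with a colour-tracking automaton for $\Gamma$ into S1S. Recall from the preamble of this section that, assuming \FX, we may use $\forall$ and all boolean connectives in formulas with their classical meaning; we also use that the successor relation is first-order definable from $<$.

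First I would fix $I_\sigma$. The second-order variable type is $\var_2=\Gamma+\Gamma$; in $I_\sigma$ the variable $c$ of the first copy denotes the boolean sequence that is true at a position $n$ exactly when $\sigma n=c$, so the first copy records the letters of $\sigma$ pointwise, while the second copy is interpreted arbitrarily (its variables will always be rebound inside $\varphi_{x\sim y}$). Next, for every colour $c$, I would write an auxiliary S1S formula $\psi^v_c(w)$ with free first-order variables $v,w$ expressing: ``$v<w$, and the sets of the second copy encode, on the interval $(v,w]$, a run of the colour-tracking automaton that starts afresh at $v$ and is in state $c$ at $w$''. Concretely $\psi^v_c(w)$ existentially quantifies the second copy and asserts (i) on $(v,w]$ these sets partition the positions; (ii) $v+1$ lies in the set indexed by $\sigma v$, the latter read off from the first copy; (iii) for every $n$ with $v<n<w$, if $n$ lies in the set indexed by $b$ and $\sigma n=a$, then $n+1$ lies in the set indexed by $b+a$; (iv) $w$ lies in the set indexed by $c$. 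Finally I would put
\[
  \varphi_{x\sim y}~:=~\exists z.\;\Bigl(x<z\;\land\;y<z\;\land\;\bigvee_{c:\Gamma}\bigl(\psi^x_c(z)\land\psi^y_c(z)\bigr)\Bigr),
\]
using the third first-order variable for $z$. The whole formula fits into three first-order variables and the $2|\Gamma|$ second-order variables of $\var_2$ only after aggressive recycling: the conjuncts $\psi^x_c(z)$ and $\psi^y_c(z)$ reuse the same $|\Gamma|$ sets of the second copy because their existential quantifiers have disjoint scopes, and inside $\psi^v_c(w)$ the running index of (i)--(iii), the variable standing for $n+1$, and the scratch variable for defining the successor relation all have to be drawn from the three available first-order variables, rebinding whichever are momentarily unused.

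For correctness I would prove $I_\sigma[x:=i,y:=j]\models\varphi_{x\sim y}$ iff $\merge\sigma ij$. For the ``if'' direction, given $\merge\sigma ij$ fix $k$ with $i,j<k$ and $\col(\substr\sigma ik)=\col(\substr\sigma jk)=:c$, take $z:=k$, and in the two conjuncts encode the canonical runs $n\mapsto\col(\substr\sigma in)$ on $(i,k]$ and $n\mapsto\col(\substr\sigma jn)$ on $(j,k]$; clauses (i)--(iv) then hold by the semigroup laws. For the ``only if'' direction, from a model extract $k$ with $i,j<k$ and a colour $c$ with $\psi^i_c(k)$ and $\psi^j_c(k)$; clauses (i)--(iii) pin the run encoded by $\psi^i_c(k)$ down, by induction over the positions of $(i,k]$, to the canonical colouring $n\mapsto\col(\substr\sigma in)$, so clause (iv) forces $\col(\substr\sigma ik)=c$, and likewise $\col(\substr\sigma jk)=c$; hence $\merge\sigma ij$ holds, witnessed by $k$. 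The cases $i=j$ (both conjuncts use the same canonical run) and $i>j$ (the construction is symmetric in $x$ and $y$) need no extra work.

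I expect the bookkeeping rather than any conceptual hurdle to be the main obstacle: threading $\sigma$ through the first copy of $\Gamma$ while recycling the second copy for both runs, squeezing the formula into three first-order variables --- which is precisely why the fact hypothesises at least three --- getting the off-by-one in the run indexing right, and running the induction that identifies a run satisfying (i)--(iii) with the canonical colouring. It should also be made explicit where \FX\ enters, namely in giving $\forall$ and $\lor$ their classical meaning, both when writing $\varphi_{x\sim y}$ and in the correctness argument.
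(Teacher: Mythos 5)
Your proposal is correct and matches the paper's construction essentially verbatim: $I_\sigma$ uses the first copy of $\Gamma$ to record $\sigma$ pointwise, the prefix colours on $(x,z]$ are tracked by existentially quantified sets from the second copy via uniqueness/first/step/last constraints (your ``colour-tracking run''), and the outer formula is $\exists z.\,x<z\land y<z\land\bigvee_c(\dots\land\dots)$ with successor defined from $<$, exactly as in Figure~\ref{fig-encoding-rp-in-s1s}. The only inessential deviation is that you demand a partition on the interval where the paper only demands at-most-one membership; both variants are pinned down to the canonical prefix colouring by the same induction.
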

\begin{proof}
The formula is given in Figure~\ref{fig-encoding-rp-in-s1s}. The sequence $\sigma$ is encoded as usual \cite{Perrin04}: There are free second-order variables $X_a$ for all $a : \Gamma$ and the interpretation $I_\sigma$ is defined such that $I_\sigma X_a$ is the boolean sequence containing exactly the positions at which $\sigma$ is $a$.

We provide informal explanations for the formulas in Figure~\ref{fig-encoding-rp-in-s1s} (for readability, we use S1S variables in equations):
 \begin{itemize}
 \item $\varphi_{\sgsum_x^y=c}$ says that \coqlink[phi_sum_eq_correct]{the color of $\substr{\sigma}{x}{y}$ is $c$}. The variables $Y_a$ encode the colors of $\substr{\sigma}{x} {Sx}, \dots, \substr{\sigma}{x}{y}$.
 \item $\varphi_{\mathsf{unique}}$ says that \coqlink[phi_unique_correct]{$z$ can be in at most one $Y_a$}.
 \item $\varphi_{\mathsf{first}}$ says that  \coqlink[phi_first_correct]{$\sgsum (\substr{\sigma}{x}{Sx}) = \sigma x$}.
 \item $\varphi_{\mathsf{step}}$ says that  \coqlink[phi_step_correct]{$\sgsum (\substr{\sigma}{x}{Sz}) = \sgsum(\substr{\sigma}{x}{z}) + \sigma z$ for $x < z <y$}.
 \item $\varphi_{\mathsf{last}}$ says that \coqlink[phi_last_correct]{the color of $\substr{\sigma}{x}{y}$ is $c$}.
 \end{itemize}
%
% $\varphi_{\mathsf{first}}$ that $\sgsum ( \substr{\sigma}{I x}{S(I x)}) = \sigma (I x)$,
% $\varphi_{\mathsf{step}}$ that $\sgsum (\substr{\sigma}{I x}{S(I z)}) = \sgsum (\substr{\sigma}{I x}{I z}) + \sigma ( I z)$ for $I x< I z < I y$,
% and $\varphi_{\mathsf{last}}$ that $\sgsum \substr{\sigma}{I x}{I y} = c$.
\vskip-4mm
\end{proof}

\begin{figure}
  \begin{align*}
    \coqlink[phi_merge]{\varphi_{x \sim y}}   
    &:= \exists z.~x < z \land y < z \land 
      \bigvee_{c : \Gamma} \left( 
      \varphi_{\sgsum_x^z=c} \land   \varphi_{\sgsum_y^z=c} 
      \right)
    \\
    \coqlink[phi_sum_eq]{\varphi_{\sgsum_x^y=c}}
    &:= \underset{a : \Gamma}{\raisebox{-0.5ex}{\makebox{\LARGE$\exists$}}} Y_a.~
      \varphi_{\mathsf{unique}} \land \varphi_{\mathsf{first}} \land
      \varphi_{\mathsf{step}} \land \varphi_{\mathsf{last}}
    \\
    \coqlink[phi_unique]{\varphi_{\mathsf{unique}}}
    &:= \forall z.~\bigand_{a : \Gamma} \Big(z \in Y_a \rightarrow \bigand_{b \neq a} z \notin Y_b \Big)
    \\
    \coqlink[phi_first]{\varphi_{\mathsf{first}}}  
    &:= \bigand_{a : \Gamma} \left(x \in X_a \rightarrow S x \in Y_a \right)
    \\
    \coqlink[phi_step]{\varphi_{\mathsf{step}}} 
    &:= \forall z.~x < z <y  \rightarrow  \bigand_{a,b : \Gamma} \left( z \in Y_a \to z \in X_b \to  S z \in Y_{a + b} \right)
    \\
    \coqlink[phi_last]{\varphi_{\mathsf{last}}}   
    &:= y \in Y_c
\end{align*}
\caption{Encoding of $\merge{\sigma}{i}{j}$ into S1S for Fact~\ref{fact-s1s-formula-for-merging}. 
We write $\varphi(S x)$ for $\forall x'.~ x < x' \to (\neg \exists y. ~x < y < x') \to \varphi(x')$.}
\label{fig-encoding-rp-in-s1s}
\end{figure}

\begin{fact}[][FX_implies_RP]
  \label{fact-fx-implies-rp}
  \FX{} implies \RP{}.
\end{fact}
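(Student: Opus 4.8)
The plan is to derive $\RP$ from $\FX$ by transporting, via the S1S encoding of the merging relation established in Fact~\ref{fact-s1s-formula-for-merging}, the classical equivalence $\RP\toot\RPc$ (Fact~\ref{fact-xm-implies-rp-equiv-rpc}) together with the unconditional validity of $\RPc$ (Fact~\ref{fact-rpc-holds}). Concretely, fix a finite semigroup $\Gamma$ and a sequence $\sigma:\Gamma^\omega$; we must show $\RPs\sigma$, i.e. $\exists i\,\exists^\omega j.~\merge\sigma ij$.

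First I would instantiate Fact~\ref{fact-s1s-formula-for-merging}, obtaining from $\FX$ the variable types $\var_1,\var_2$, the interpretation $I_\sigma$, and the formula $\varphi_{x\sim y}$ with the property that $\fullsat{I_\sigma[x:=i,y:=j]}{\varphi_{x\sim y}}\toot\merge\sigma ij$ for all $i,j$. Since $\FX$ gives excluded middle for S1S satisfaction, I can form a closed S1S formula $\Phi$ that expresses $\RPs\sigma$ directly: using distinct first-order variables $x,y$ (and the third variable to phrase ``infinitely often'' as $\forall$ some bound $\exists$ larger witness), set
\begin{align*}
  \Phi~:=~\exists x.~\forall k.~\exists y.~y>k\land\varphi_{x\sim y}.
\end{align*}
By Fact~\ref{fact-s1s-formula-for-merging} and unfolding the S1S semantics, $I_\sigma\models\Phi$ holds if and only if $\RPs\sigma$ holds. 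Dually, $\RPc_\sigma$ is exactly $\neg(I_\sigma\models\neg\Phi)$ after pushing negations through, so the de Morgan / double-negation manipulations used in Fact~\ref{fact-xm-implies-rp-equiv-rpc} can be replayed here, but justified by the instances of excluded middle that $\FX$ supplies for the subformulas of $\Phi$ rather than by full $\XM$.

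Then I would argue: by Fact~\ref{fact-rpc-holds}, $\RPsc\sigma$ holds unconditionally, i.e. $\neg\forall i\,\exists k\,\forall j\ge k.~\notmerge\sigma ij$. Using $\FX$ to get $I_\sigma\models\Phi\lor I_\sigma\nvDash\Phi$, I case-split. In the case $I_\sigma\nvDash\Phi$, unfolding the semantics (again with $\FX$ to eliminate the inner negations) yields $\forall i\,\exists k\,\forall j\ge k.~\notmerge\sigma ij$, contradicting $\RPsc\sigma$. Hence $I_\sigma\models\Phi$, and by the correctness of the encoding this gives $\RPs\sigma$. Since $\Gamma$ and $\sigma$ were arbitrary, $\RP$ follows.

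The main obstacle I anticipate is purely bookkeeping: verifying that the translation of $\neg\Phi$ under $\fullsat{I_\sigma}{\cdot}$ really unfolds — constructively, given only the $\FX$-instances for the relevant subformulas — to the classical de Morgan normal form $\forall i\,\exists k\,\forall j\ge k.~\notmerge\sigma ij$, matching $\RPsc\sigma$ on the nose. This is where Fact~\ref{fact-xm-rules} (the $\xm$-rules for $\neg\exists$, $\neg\wedge$, and $\neg\neg$ elimination) does the work, applied at each quantifier and connective of $\Phi$; the encoding of ``$\exists^\omega j$'' as $\forall k\,\exists y>k$ is chosen precisely so that its negation is the $\exists k\,\forall j\ge k$ shape appearing in $\RPc$. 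Once that unfolding is checked, the remainder is the short contradiction argument above against Fact~\ref{fact-rpc-holds}.
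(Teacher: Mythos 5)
Your proposal is correct and takes essentially the same route as the paper: fix $\Gamma$ and $\sigma$, invoke Fact~\ref{fact-rpc-holds} to reduce the goal to $\RPsc\sigma\to\RPs\sigma$, encode the merging relation into S1S via Fact~\ref{fact-s1s-formula-for-merging}, and use the instances of excluded middle supplied by \FX\ to push negations through the encoded formula. The paper packages the final step as showing the two S1S encodings of $\RPs\sigma$ and $\RPsc\sigma$ equivalent by de Morgan and double negation, whereas you case-split on $I_\sigma\models\Phi\lor I_\sigma\nvDash\Phi$ and unfold the second disjunct to contradict $\RPsc\sigma$; these are the same manipulation in slightly different clothing.
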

\begin{proof}
  Assume \FX\ and
  let $\sigma$ be a sequence over a finite semigroup. 
  By Fact~\ref{fact-rpc-holds}
  it suffices to show that $\RPsc\sigma$ entails $\RPs\sigma$.
  We encode $\RPsc\sigma$ and $\RPs\sigma$ in S1S using
  Fact~\ref{fact-s1s-formula-for-merging}:
  \begin{align*}
    \RPs{\sigma} 
    &\toot \fullsat{I_\sigma}{\exists x. ~\forall z. ~\exists y\geq z.~\varphi_{x \sim y}
    \\ 
    \RPsc{\sigma}} 
    &\toot \fullsat{I_\sigma}{\neg \forall x.~\exists z. ~\forall y \geq z.~
      \neg \varphi_{x \sim y}}
  \end{align*}
  Both encodings can be shown equivalent in S1S 
  using de Morgan laws and double negation.
  Thus $\RPsc\sigma$ entails $\RPs\sigma$.
\end{proof}

Siefkes~\cite{siefkes70} and Riba~\cite{riba12} 
show in a classical setting that
S1S can encode propositions similar to $\RA$ and $\RP$,
respectively.
In contrast to Riba~\cite{riba12},
we use an explicit encoding of 
propositions
$\sgsum(\substr{\sigma}{i}{j})=a$.

\section{AX implies RP}

We finally show that $\AX$ implies $\RP$.
Recall from Section~\ref{sec-main-results-so-far} 
that $\AX$ says that acceptance by automata satisfies XM.
Assuming $\AX$, we will show that $\RP$ and $\RPc$ are equivalent.
There is the difficulty that we cannot encode 
$\RP$ and $\RPc$ into automata since
this requires complement automata,
which we do not have since we do not have~$\RF$.
We solve the problem with three predicates
that can be encoded into automata without using complement
and that suffice to justify the uses of double negation
needed for the equivalence proof.

We define the helper predicates as follows,
where $\Gamma$ ranges over finite semigroups 
and $\sigma$ over sequences over $\Gamma$: 
\begin{align*}
  \coqlink[P1]{\N{\pred{1}}}
  &~:=~\lambda \Gamma \sigma i k.~
    \exists\, j \geq k.~ 
    \merge{\sigma} i j
  \\
  \coqlink[P2]{\N{\pred{2}}} 
  &~:=~\lambda \Gamma \sigma.~
    \exists i\,\exists^\omega j.~ 
    \merge{\sigma}{i}{j}
  \\
  \coqlink[P3]{\N{\pred{3}}}
  &~:=~\lambda \Gamma \sigma i.~
    \exists k.\,\neg\exists\, j \geq k.~ 
    \merge{\sigma}{i}{j}
\end{align*}
Note that $\pred{2}$ is $\lambda \Gamma \sigma. \RPs{\sigma}$.

\begin{fact}[][pred_xm_RP]
  \label{fact-xm-implies-rp-equiv-refine}
  Let \pred1, \pred2,  and \pred3 satisfy XM.
  Then $\RP$ holds.
\end{fact}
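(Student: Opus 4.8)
The plan is to reduce the claim to the constructive implication $\RPc \to \RP$ at the level of a single sequence, i.e.\ to show that $\RPsc\sigma$ entails $\RPs\sigma$ for every sequence $\sigma$ over a finite semigroup, under the assumption that $\pred1$, $\pred2$, and $\pred3$ satisfy XM. Since $\RPsc\sigma$ holds unconditionally by Fact~\ref{fact-rpc-holds}, this suffices. The reason $\pred1$, $\pred2$, $\pred3$ are exactly the right predicates is that they are the subformulas one passes through when rewriting the nested-quantifier statement $\RPsc\sigma = \neg\forall i\,\exists k\,\forall j\ge k.~\notmerge\sigma i j$ into the positive form $\RPs\sigma = \exists i\,\exists^\omega j.~\merge\sigma i j$ using de Morgan laws and double-negation elimination: $\pred3\,\Gamma\sigma i$ is $\exists k.\neg\exists j\ge k.\merge\sigma i j$, which is the inner part of the $\RPc$ statement; $\pred1\,\Gamma\sigma i k$ is $\exists j\ge k.\merge\sigma i j$, whose negation appears inside $\pred3$; and $\pred2\,\Gamma\sigma$ is literally $\RPs\sigma$.

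First I would spell out the boolean rearrangement. Observe that $\RPsc\sigma$ is $\neg\forall i.\,\pred3\,\Gamma\sigma i$. Using XM for $\pred3$ and Fact~\ref{fact-xm-rules}(4), this is equivalent to $\exists i.\,\neg\pred3\,\Gamma\sigma i$. Next, $\neg\pred3\,\Gamma\sigma i$ is $\neg\exists k.\neg\pred1\,\Gamma\sigma i k$, which by Fact~\ref{fact-xm-rules}(2) equals $\forall k.\neg\neg\pred1\,\Gamma\sigma i k$, and then by XM for $\pred1$ and Fact~\ref{fact-xm-rules}(3) equals $\forall k.\,\pred1\,\Gamma\sigma i k$, i.e.\ $\forall k\,\exists j\ge k.\merge\sigma i j$, which is precisely $\exists^\omega j.\merge\sigma i j$ by definition. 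Hence $\RPsc\sigma$ is equivalent to $\exists i\,\exists^\omega j.\merge\sigma i j = \pred2\,\Gamma\sigma = \RPs\sigma$. Combining with Fact~\ref{fact-rpc-holds}, $\RPs\sigma$ holds for every $\sigma$, so $\RP$ holds. (The XM hypothesis on $\pred2$ itself is not strictly needed for this direction but is harmless; it is listed because the companion fact relating these predicates to automata acceptance will want all three.)

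The main obstacle is not the logic but making sure each de~Morgan/double-negation step is justified by the specific instance of XM that is available, rather than by full XM — in particular, the step $\forall k.\neg\neg\pred1\,\Gamma\sigma i k \to \forall k.\,\pred1\,\Gamma\sigma i k$ uses XM for $\pred1$ under the universally quantified $k$, and the step collapsing $\neg\forall i.\pred3\cdots$ to $\exists i.\neg\pred3\cdots$ uses XM for $\pred3$ (via Fact~\ref{fact-xm-rules}(4), whose hypothesis is $\xm(\exists i.\pred3\,\Gamma\sigma i)$, which itself follows from XM for $\pred3$ by a short argument, or one can instead directly assume $\exists/\forall$-XM for $\pred3$). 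I would therefore present the argument as a short chain of equivalences, each annotated with the predicate whose decidability/XM instance licenses it, and conclude by invoking Fact~\ref{fact-rpc-holds}.
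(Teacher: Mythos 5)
Your overall plan coincides with the paper's: by Fact~\ref{fact-rpc-holds} it suffices to get from $\RPsc{\sigma}$ to $\RPs{\sigma}$ using the assumed XM instances, and your handling of the inner steps (removing $\neg\neg$ under $\forall k$ via XM of $\pred{1}$, and Fact~\ref{fact-xm-rules}(2)) is correct. However, there is a genuine gap at your very first step, and it is exactly the constructively delicate one. You pass from $\neg\forall i.\,\pred{3}\,\Gamma\sigma i$ to $\exists i.\,\neg\pred{3}\,\Gamma\sigma i$, invoking Fact~\ref{fact-xm-rules}(4) and claiming that its hypothesis ``follows from XM for $\pred{3}$ by a short argument.'' It does not: in this paper ``$\pred{3}$ satisfies XM'' means pointwise XM ($\forall\Gamma\sigma i.~\pred{3}\Gamma\sigma i\lor\neg\pred{3}\Gamma\sigma i$), and the XM of an existential quantification over a pointwise-XM predicate is an omniscience-type principle that is not constructively derivable; the inference $\neg\forall i.\,q\,i\to\exists i.\neg q\,i$ for pointwise-XM $q$ is a Markov-style principle, and the paper itself records that already \MP{} is unprovable in the underlying type theory. (Moreover, the instance of Fact~\ref{fact-xm-rules}(4) you need here has hypothesis $\xm(\exists i.\neg\pred{3}\Gamma\sigma i)$, not $\xm(\exists i.\pred{3}\Gamma\sigma i)$.)

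The repair is precisely the hypothesis you declare ``not strictly needed'': XM of $\pred{2}$. Using pointwise XM of $\pred{1}$ under the $\forall k$, $\neg\pred{3}\Gamma\sigma i$ is equivalent to $\forall k.\,\pred{1}\Gamma\sigma i k$, i.e.\ to $\exists^\omega j.\,\merge{\sigma}{i}{j}$, so $\exists i.\neg\pred{3}\Gamma\sigma i$ is equivalent to $\pred{2}\Gamma\sigma=\RPs{\sigma}$, and the existential-XM instance your step requires is literally the XM of $\pred{2}$. This is how the paper's proof is organised: it keeps the statement in the negative form $\neg\forall i.\neg(\cdots)$, uses XM of $\pred{2}$ (via Fact~\ref{fact-xm-rules}(4)) at the outermost level to relate it to the positive $\exists i$ form, XM of $\pred{1}$ for the double negation under $\forall k$, and XM of $\pred{3}$ to exchange $\exists k.\neg(\cdots)$ with $\neg\forall k.\neg\neg(\cdots)$. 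The three predicates are chosen so that every needed existential-XM instance is exactly the XM of one of them; discarding $\pred{2}$ removes the only hypothesis that lets you extract the outer existential, so as written your proof does not go through.
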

\begin{proof}
  Let $\sigma$ be a sequence 
  over a finite semigroup $\Gamma$.
  By Fact~\ref{fact-rpc-holds} it suffices to show
  that $\RPs\sigma$ and $\RPsc\sigma$ are equivalent. 
  This follows with the assumptions 
  and Fact~\ref{fact-xm-rules}:
  \begin{align*}
    &\exists i\,\forall k\,\exists j.~
      j\ge k\land\merge{\sigma}{i}{j}
    \\
    \toot~&\neg\forall i.~\neg\forall k.~\neg\neg\exists j.~
      j\ge k\land\merge{\sigma}{i}{j}
    &&\pred2,~\pred1
    \\
    \toot~&\neg\forall i\,\exists k.~\neg\exists j.~
      j\ge k\land\merge{\sigma}{i}{j}
    &&\pred3 
    \\
    \toot~&\neg\forall i\,\exists k\,\forall j.~
      j\ge k\to\notmerge{\sigma}{i}{j}
  \end{align*}
  \vskip-5mm
\end{proof}

We now encode the predicates $\pred i$
into automata and show that they satisfy XM
if \AX\ is assumed.

\begin{fact}[][sat_xm_nat_dec_exists]
  \label{fact-ax-implies-ex-pn-xm}
  Assume \AX\ and
  let $p$ be a decidable predicate on numbers.
  Then $\exists n. p n$ satisfies XM.
\end{fact}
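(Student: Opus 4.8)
The plan is to reduce the proposition $\exists n.~pn$ to Büchi acceptance of a \emph{fixed} NFA on a computable sequence, so that \AX\ directly yields the desired instance of XM without any detour through complementation. Since $p$ is decidable, I would first fix a function $f:\nat\to\bool$ with $pn\toot fn=\btrue$ and let $\beta$ be the boolean sequence $\lam n{fn}$, so that $n\in\beta\toot pn$ for all $n$; hence $\exists n.~pn$ is provably equivalent to $\exists n.~n\in\beta$, i.e.\ to $\beta$ being nonempty. This equivalence holds as an ordinary bi-implication, independently of whether nonemptiness is decidable.

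Next I would construct a two-state NFA $A$ over the input alphabet $\bool$ accepting exactly the boolean sequences that are $\btrue$ at some position. Take states $\bool$ with $\bfalse$ initial and $\btrue$ accepting, and transitions $\transition{\bfalse}{\bfalse}{\bfalse}$, $\transition{\bfalse}{\btrue}{\btrue}$, and $\transition{\btrue}{b}{\btrue}$ for $b:\bool$. Any run on a sequence $\tau$ must start in $\bfalse$, is forced to remain in $\bfalse$ while the input is $\bfalse$, is forced into $\btrue$ the first time the input is $\btrue$, and then stays in $\btrue$ forever; conversely such a run can always be built once $\tau$ is $\btrue$ somewhere. Hence $A$ has an accepting run on $\tau$ iff $\tau$ is $\btrue$ at some position, and in particular $\beta\models A\toot(\exists n.~n\in\beta)\toot(\exists n.~pn)$. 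The transition relation and the initial and accepting predicates are trivially decidable, so $A$ is a legitimate NFA, and its construction uses no complement operation.

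Finally I would apply \AX\ to $\beta$ and $A$: either $\beta\models A$, which gives $\exists n.~pn$, or $\beta\nvDash A$, which gives $\neg\exists n.~pn$; in both cases $\xm(\exists n.~pn)$ holds. There is no genuinely hard step here; the only point requiring care is that the predicate $p$ must be encoded into the \emph{input word} $\beta$ rather than into the structure of the automaton — a finite automaton cannot track the position $n$ — and that the forcing argument for $A$ must be carried out in both directions to secure the equivalence $\beta\models A\toot\exists n.~pn$.
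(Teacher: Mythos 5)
Your proposal is correct and takes essentially the same route as the paper: encode the decidable predicate $p$ as a boolean sequence $\beta$, then apply \AX\ to $\beta$ and an NFA accepting exactly the nonempty boolean sequences. The only difference is that you spell out the explicit two-state automaton, whereas the paper simply asserts its existence; this is a presentational detail, not a different argument.
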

\begin{proof}
Let $A$ be an automaton that accepts exactly all nonempty boolean sequences.
We define $\beta$ to be a boolean sequence
satisfying $n \in \beta \toot p n$.
Note that $\beta$ can be defined
because $p$ is decidable.
Then $\exists n. p n$ is equivalent to $\beta \models A$
and satisfies XM by \AX{}.
\end{proof}

\begin{fact}[][sat_xm_exists_merging]
  \label{fact-ax-implies-p1-xm}
  Assume \AX{}.
  Then \pred{1} satisfies XM.
\end{fact}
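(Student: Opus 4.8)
The plan is to reduce \pred1 to Fact~\ref{fact-ax-implies-ex-pn-xm}, which states that under \AX\ every proposition of the form $\exists n.~pn$ with $p$ decidable satisfies XM. So it suffices, for fixed $\Gamma$, $\sigma$, $i$, $k$, to produce a decidable predicate $p$ on numbers with $\pred1\,\Gamma\,\sigma\,i\,k\toot\exists l.~pl$.

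First I would unfold the definitions: $\pred1\,\Gamma\,\sigma\,i\,k=\exists j\geq k.~\merge\sigma ij$ and $\merge\sigma ij=\exists l.~i<l\land j<l\land\col(\substr\sigma il)=\col(\substr\sigma jl)$. Commuting the two existential quantifiers (and pulling the subformula $i<l$ out of the scope of $j$, on which it does not depend), $\pred1\,\Gamma\,\sigma\,i\,k$ is equivalent to $\exists l.~i<l\land\exists j.~k\leq j\land j<l\land\col(\substr\sigma il)=\col(\substr\sigma jl)$. I would then take $p:=\lambda l.~i<l\land\exists j.~k\leq j\land j<l\land\col(\substr\sigma il)=\col(\substr\sigma jl)$. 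This $p$ is decidable: $<$ and $\leq$ are decidable, equality on the finite semigroup $\Gamma$ is decidable, and the inner existential over $j$ is bounded by $l$, hence decidable. Since $\pred1\,\Gamma\,\sigma\,i\,k\toot\exists l.~pl$, Fact~\ref{fact-ax-implies-ex-pn-xm} finishes the proof.

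The point requiring care — and the reason the naive route fails — is that one cannot simply claim $\lambda j.~\merge\sigma ij$ is decidable and feed it to Fact~\ref{fact-ax-implies-ex-pn-xm} directly: the merging relation hides an unbounded search for the witness $l$, and because finite semigroups need not be cancellative, this witness cannot be replaced by a fixed bound such as $\max(i,j)+1$. The fix is precisely the quantifier swap above, which promotes $l$ to the outer (unbounded) existential and leaves $j$ as a bounded inner one; after that the argument is routine.
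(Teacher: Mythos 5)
Your proposal is correct and matches the paper's proof: the paper likewise rewrites $\pred1\,\Gamma\,\sigma\,i\,k$ as $\exists l.~i<l\land\exists j.~j<l\land j\ge k\land\col(\substr{\sigma}{i}{l})=\col(\substr{\sigma}{j}{l})$ and then applies Fact~\ref{fact-ax-implies-ex-pn-xm}, noting that the inner quantification over $j$ is bounded and hence decidable. The quantifier swap you highlight is exactly the key step in the paper's argument.
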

\begin{proof}
  First note that $\pred1\Gamma\sigma ik$ is equivalent to
  $$
  \exists l.~i<l\land\exists j.~j<l\land j\ge k\land
  \col(\substr{\sigma}{i}{l}) = \col (\substr{\sigma}{j}{l})
  $$
  Now the claim follows with Fact~\ref{fact-ax-implies-ex-pn-xm}
  since the quantification over~$j$ is bounded and thus decidable.
\end{proof}

We denote with \emph{$\dropi\sigma$} 
the sequence obtained from $\sigma$ 
by dropping the first $i$ positions.

\setCoqFilename{Buechi.RamseyanPigeonholePrinciple}
\begin{fact}[][merging_shift]
\label{fact-merging-shift}
Let $\Gamma$ be a finite semigroup, $\sigma$ be a sequence over $\Gamma$, and $i\le j$. 
Then $\merge{\sigma}{i}{j}~\toot~\merge{\dropi\sigma}{0}{(j-i)}$.
\end{fact}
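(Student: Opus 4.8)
The statement is a routine unfolding once we observe how the substring operation interacts with dropping a prefix. The plan is to reduce everything to the single arithmetic fact that, for $i\le a<b$, the substring $\substr{(\dropi\sigma)}{(a-i)}{(b-i)}$ equals $\substr{\sigma}{a}{b}$, and hence has the same $\col$-value, and then transport the existential witness $k$ across by the shift $k\mapsto k-i$.

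\textbf{Key steps.} First I would establish (or invoke as a routine substring lemma, provable by induction following the recursive definition of $\substr{\cdot}{\cdot}{\cdot}$ and of $\dropi{\cdot}$) that for all $a<b$ we have $\substr{(\dropi\sigma)}{a}{b}=\substr{\sigma}{(a+i)}{(b+i)}$, since $(\dropi\sigma)(n)=\sigma(n+i)$. In particular $\col(\substr{(\dropi\sigma)}{a}{b})=\col(\substr{\sigma}{(a+i)}{(b+i)})$. Second, I would unfold both sides of the claimed equivalence: $\merge{\sigma}{i}{j}$ asserts $\exists k.~i<k\land j<k\land\col(\substr{\sigma}{i}{k})=\col(\substr{\sigma}{j}{k})$, while $\merge{(\dropi\sigma)}{0}{(j-i)}$ asserts $\exists k'.~0<k'\land (j-i)<k'\land\col(\substr{(\dropi\sigma)}{0}{k'})=\col(\substr{(\dropi\sigma)}{(j-i)}{k'})$. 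Third, I would connect the two witnesses: given $k$ with $i<k$, set $k':=k-i$; then $0<k'$, and $(j-i)<k'$ is equivalent to $j<k$ (using $i\le j<k$, so the truncated subtractions behave as ordinary subtraction). By the substring lemma, $\col(\substr{(\dropi\sigma)}{0}{k'})=\col(\substr{\sigma}{i}{k})$ and, since $(j-i)+i=j$ because $i\le j$, $\col(\substr{(\dropi\sigma)}{(j-i)}{k'})=\col(\substr{\sigma}{j}{k})$. Conversely, given $k'$ with $0<k'$, set $k:=k'+i$ and run the same computation backwards. This yields the desired equivalence.

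\textbf{Main obstacle.} There is no real obstacle beyond careful bookkeeping with truncated subtraction on $\nat$: the hypothesis $i\le j$ is exactly what guarantees $(j-i)+i=j$ and the equivalences $i<k\toot 0<k-i$, $j<k\toot (j-i)<k-i$. The only mildly technical point is the substring-shift lemma, which is a straightforward induction on the recursive structure of $\substr{\cdot}{\cdot}{\cdot}$ and can be treated as folklore. Everything else is direct rewriting.
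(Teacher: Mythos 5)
Your proposal is correct and is essentially the only sensible proof; the paper states this fact without a written proof precisely because it amounts to the routine bookkeeping you carry out (the substring-shift identity $\substr{(\dropi\sigma)}{a}{b}=\substr{\sigma}{(a+i)}{(b+i)}$ plus transporting the witness $k$ by $\pm i$, where $i\le j$ handles truncated subtraction). Nothing to add.
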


\setCoqFilename{Buechi.NecessityRF}

\begin{fact}[][aut_exists_inf_merging_correct]
\label{fact-nfa-c}
Let $\Gamma$ be a finite semigroup. 
Then there is an NFA $A$ such that
${\sigma \models A \toot \exists^\omega j. ~\merge{\sigma}{0}{j}}$
for all sequences $\sigma$ over $\Gamma$.
\end{fact}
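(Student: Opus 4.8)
The plan is to construct $A$ by hand as an NFA over $\Gamma$ whose runs maintain two things simultaneously: the color $\col(\substr{\sigma}{0}{n})$ of the prefix consumed so far, and a single nondeterministically guessed ``merging certificate''. Concretely, I would take the states to be pairs $(c,t)$ where $c$ ranges over $\one+\Gamma$ (the running prefix color, with $\oneM$ standing for the empty prefix before any symbol has been read) and $t$ has one of three shapes: $\mathsf{idle}$, $\mathsf{ok}$, or $d:\Gamma$. The shape $d:\Gamma$ means ``currently tracking a segment $\substr{\sigma}{j}{n}$ whose accumulated color is $d$'', and $\mathsf{ok}$ is a transient shape recording that a merge has just been confirmed. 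The unique initial state is $(\oneM,\mathsf{idle})$ and the accepting states are exactly those of shape $(c,\mathsf{ok})$. On reading $a:\Gamma$ from $(c,t)$, the prefix component is updated to $a$ if $c=\oneM$ and to $c+a$ otherwise; the shape is updated as follows: from $\mathsf{idle}$ or $\mathsf{ok}$ one may stay $\mathsf{idle}$ or switch to $a$ (starting a fresh track at the current position, whose first and so far only symbol is $a$); from $d:\Gamma$ one may extend the track to $d+a$, or — provided the prefix component $c$ equals $d$ — confirm a merge by going to $\mathsf{ok}$. The point of the guard $c=d$ is that at a state $(c,d)$ reached as $\rho(k)$ the prefix component is $\col(\substr{\sigma}{0}{k})$ while the track component is $\col(\substr{\sigma}{j}{k})$, where $j$ is the position at which the track started; and whenever a state of shape $d:\Gamma$ is reached the track already has nonzero length, so $j<k$ and $0<k$. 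Hence $c=d$ is precisely a witness that $0$ merges with $j$ at $k$. All of this is definable constructively, since the only operations on $\Gamma$ used are $+$ and the decidable equality test on the finite type $\Gamma$.

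Correctness splits into two implications. For ``$\sigma\models A\Rightarrow\exists^\omega j.~\merge{\sigma}{0}{j}$'', fix an accepting run $\rho$; it visits states of shape $(\cdot,\mathsf{ok})$ at infinitely many positions $m_1<m_2<\cdots$. For each $i$, the transition into $\rho(m_i)$ must come from a state $(c,d)$ with $d:\Gamma$ and $c=d$, and chasing the $\Gamma$-shaped segment of the run that ends at $\rho(m_i-1)$ back to its start position $j_i$ yields, as explained above, an instance $\merge{\sigma}{0}{j_i}$ with witness $m_i-1$ (so in particular $j_i\le m_i-2$). The combinatorial key is that this segment cannot have been active at $\rho(m_{i-1})$, because there the shape is $\mathsf{ok}$ rather than a $\Gamma$-value; hence $j_i\ge m_{i-1}\ge j_{i-1}+2$, so $j_1<j_2<\cdots$ is an infinite set of positions merging with $0$, which is $\exists^\omega j.~\merge{\sigma}{0}{j}$.

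For the converse I would assemble an accepting run iteratively. Starting from $p_0:=0$, and assuming the run has been built up to position $p_{i-1}$, the hypothesis $\exists^\omega j.~\merge{\sigma}{0}{j}$, unfolded, asserts the existence of a pair $(j,k)$ with $p_{i-1}\le j<k$ and $\col(\substr{\sigma}{0}{k})=\col(\substr{\sigma}{j}{k})$; since this is a decidable property of $(j,k)$, we may obtain a concrete such pair $(j_i,k_i)$ by constructive choice (Fact~\ref{fact-cc}) applied to an enumeration of $\nat\times\nat$. Setting $p_i:=k_i+1$ gives computable, strictly interleaved sequences $p_{i-1}\le j_i<k_i<p_i\le j_{i+1}$, and the run that idles up to position $j_i$, starts a track there, extends it up to $k_i$, confirms the merge at $k_i$, and then repeats is a valid run of $A$ passing through accepting states at the positions $k_1+1,k_2+1,\ldots$, hence accepts $\sigma$. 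I expect the main obstacle to be precisely this last point: ensuring the backward direction produces a genuine computable run rather than a merely classical one, which is why the pair $(j,k)$ must be extracted jointly by a dovetailed search over a decidable predicate (one cannot search for $j$ first, as $\merge{\sigma}{0}{j}$ is only semi-decidable). The remaining index bookkeeping — that confirmations land at positions strictly greater than their track's start, that prefix colors are genuinely in $\Gamma$ at those positions, and that successive track starts strictly increase — is routine but needs to be checked carefully.
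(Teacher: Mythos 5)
Your construction is essentially the same as the paper's: an NFA over $\Gamma$ whose states carry the running prefix color $\col(\substr{\sigma}{0}{n})$ in one component and a nondeterministically started segment color $\col(\substr{\sigma}{j}{n})$ in the other, with a merge detected when the two agree. The only cosmetic differences are in the bookkeeping — the paper uses state type $(\one+\Gamma)\times(\one+\Gamma)$ with accepting states $(a,a)$ and a forced reset transition $\transition{(b,b)}{a}{(b{+}a,\oneM)}$, whereas you factor the "just confirmed" situation into a separate $\mathsf{ok}$ tag — and in the level of detail on the converse direction, where you correctly spell out the dovetailed search over pairs $(j,k)$ needed to extract a computable run (a point the paper leaves implicit in its informal sketch).
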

\begin{proof}
The NFA $A$ repeatedly guesses a position and verifies that it merges with $0$.
The state type of $A$ is $(\one + \Gamma) \times (\one + \Gamma)$.
The NFA $A$ computes the color of the processed prefix  $\substr{\sigma}{0}{n}$ in the first component.
After guessing a position $j$, $A$
computes the color of~$\substr{\sigma}{j}{n}$ in the second component.
Once these two colors are equal,
the guess was correct.

In the first component, $\oneM$
is only used for the initial state.
In the second, $\oneM$
encodes the phase while $A$ is guessing the next position.
Hence $(\oneM, \oneM)$ serves as initial state.
The accepting states are $(a,a)$ for all $a:\Gamma$, 
meaning that $A$ verified a guess.
For all $a$ and $b$ there are transitions 
$\transition{(\oneM, \oneM)}{a}{(a, \oneM)}$,
$\transition{(b, \oneM)}{a}{(b + a, \oneM)}$, and
$\transition{(b, \oneM)}{a}{(b + a, a)}$ to guess a position.
To verify the guess there are for all $a$ and $b \neq c$ transitions
$\transition{(b, c)}{a}{(b + a, c + a)}$ and
$\transition{(b, b)}{a},{(b + a, \oneM)}$. 

If a run of $A$ passes infinitely often through accepting states, $A$ guessed infinitely many positions merging with $0$.
Conversely, if there are infinitely many positions merging with $0$, then $A$ accepts~$\sigma$.
If $0$ merges with $j$ at position $k$, there is always another $j'> k$ merging with $0$, which $A$ can guess.
\end{proof}

\begin{fact}[][aut_inf_merging_correct]
\label{fact-nfa-p2}
Let $\Gamma$ be a finite semigroup.
Then there is an NFA $A$ such that
${\sigma \models A \leftrightarrow 
 \exists i\, \exists j^\omega.~ \merge{\dropi\sigma}{0}{j}}$
 for all sequences~$\sigma$ over $\Gamma$. 
\end{fact}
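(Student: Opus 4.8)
The plan is to reduce to Fact~\ref{fact-nfa-c} by a generic ``skip a finite prefix'' construction on NFAs. Let $B$ be the NFA obtained from Fact~\ref{fact-nfa-c} for $\Gamma$, so that $\dropi\sigma\models B$ holds iff $\exists^\omega j.~\merge{\dropi\sigma}{0}{j}$, for every $i$ and every $\sigma$. It then suffices to build an NFA $A$ over $\Gamma$ with $\sigma\models A$ iff $\exists i.~\dropi\sigma\models B$, since the latter unfolds to the desired $\exists i\,\exists^\omega j.~\merge{\dropi\sigma}{0}{j}$.

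For the construction I would take $\one+Q_B$ as the state type of $A$, where $Q_B$ is the state type of $B$ and $\oneM$ is a fresh, non-accepting ``skipping'' state serving as the sole initial state; the accepting states of $A$ are exactly those of $B$. The transitions of $A$ are: all transitions of $B$; a loop $\transition{\oneM}{a}{\oneM}$ for every $a:\Gamma$; and a jump $\transition{\oneM}{a}{q}$ whenever $B$ has an initial state $q_0$ with $\transition{q_0}{a}{q}$. The intent is that a run of $A$ first idles in $\oneM$ for $i+1$ steps, scanning $\sigma 0,\dots,\sigma i$ without constraint, and then at position $i$ jumps into $B$ so that the ensuing $B$-run consumes $\sigma$ from position $i$ onwards, i.e., reads $\dropi\sigma$.

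For correctness, in the forward direction I would take an accepting run $\rho$ of $A$ on $\sigma$; since $\oneM$ has no incoming transitions other than the loop and is not accepting, the positions where $\rho$ equals $\oneM$ form a finite initial segment $\{0,\dots,i\}$, and the step at position $i$ must be a jump, which witnesses an initial state $q_0$ of $B$. Prepending $q_0$ to the portion of $\rho$ strictly after position $i$ then yields an accepting run of $B$ on $\dropi\sigma$, so $\dropi\sigma\models B$. In the converse direction, given $i$ and (via Fact~\ref{fact-nfa-c}) an accepting run $\rho'$ of $B$ on $\dropi\sigma$, the first step of $\rho'$ is a transition of $B$ out of an initial state on $\sigma i$ and hence exactly a jump of $A$, so placing $i+1$ copies of $\oneM$ in front of $\rho'(1),\rho'(2),\dots$ gives an accepting run of $A$ on $\sigma$. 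The construction uses only finite data and is thus constructive.

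The main thing requiring care is the index bookkeeping: the jump must fire precisely at position $i$ so that the simulated $B$-run reads $\dropi\sigma$ and not some shift of it, and one must note that the non-accepting idling phase can never be accepting on its own, which is exactly what forces an accepting run of $A$ to commit to a concrete prefix length $i$. One could alternatively bake the skipping phase directly into the automaton of Fact~\ref{fact-nfa-c}, but the generic left-shift construction above keeps the argument cleaner.
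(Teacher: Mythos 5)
Your proposal is correct and follows essentially the same route as the paper: take the NFA $B$ of Fact~\ref{fact-nfa-c} and prepend a nondeterministic ``skip a finite prefix'' phase so that $A$ guesses $i$ and then simulates $B$ on $\dropi\sigma$; you merely spell out the state type, the jump transitions, and the run-splitting/index bookkeeping that the paper's two-sentence proof leaves implicit. The only point worth noting is that extracting the prefix length $i$ from an accepting run uses that state equality is decidable and that acceptance yields a position outside the (non-accepting) skipping state, which is constructively unproblematic.
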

\begin{proof}
Let~$B$ be the NFA from Fact~\ref{fact-nfa-c} for~$\Gamma$.
Then~$A$ is obtained from~$B$ as follows:~$A$ guesses~$i$
by reading the first $i$ positions of~$\sigma$ and
then transitions into the initial state of~$B$.
\end{proof}

\begin{fact}[][sat_xm_exists_index_inf_merging]
  \label{fact-ax-implies-p2-xm}
  Assume \AX{}. Then \pred{2} satisfies XM.
\end{fact}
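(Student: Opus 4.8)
The plan is to reduce $\pred2$ to automaton acceptance via Fact~\ref{fact-nfa-p2} and then invoke \AX. Fix a finite semigroup $\Gamma$ and a sequence $\sigma$ over $\Gamma$; we must show $\pred2\Gamma\sigma\lor\neg\pred2\Gamma\sigma$. Fact~\ref{fact-nfa-p2} provides an NFA $A$ with $\sigma\models A\toot\exists i\,\exists^\omega j.~\merge{\dropi\sigma}{0}{j}$, so it suffices to show that $\pred2\Gamma\sigma$, i.e.\ $\exists i\,\exists^\omega j.~\merge\sigma ij$, is equivalent to $\exists i\,\exists^\omega j.~\merge{\dropi\sigma}{0}{j}$. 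The conclusion then follows immediately from \AX\ applied to the acceptance statement $\sigma\models A$, since that automaton-acceptance property carries all the needed excluded middle.

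For the equivalence I would fix $i$ and prove that $\exists^\omega j.~\merge\sigma ij$ is equivalent to $\exists^\omega j.~\merge{\dropi\sigma}{0}{j}$; quantifying over $i$ then gives the claim. Since only finitely many $j$ satisfy $j<i$, the left statement is equivalent to $\exists^\omega j\ge i.~\merge\sigma ij$, and by Fact~\ref{fact-merging-shift} (applicable because $j\ge i$) this reads $\exists^\omega j\ge i.~\merge{\dropi\sigma}{0}{(j-i)}$. Reindexing $j\mapsto j-i$ converts this to $\exists^\omega j.~\merge{\dropi\sigma}{0}{j}$, using that shifting an index set by the constant $i$ preserves and reflects infiniteness. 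Conversely, given $\exists^\omega j.~\merge{\dropi\sigma}{0}{j}$, each such $j$ yields $i\le i+j$ and, again by Fact~\ref{fact-merging-shift}, $\merge\sigma i{(i+j)}$, and $j\mapsto i+j$ is again an infiniteness-preserving shift.

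The argument is essentially bookkeeping, and Facts~\ref{fact-nfa-p2} and~\ref{fact-merging-shift} do the real work. The only point requiring a little care is the interaction of the index shift with the $\exists^\omega$ quantifier: one direction needs that $j\ge i$ may be assumed without loss of generality, the other that $\dropi\sigma$ is evaluated at all its indices (those $\ge 0$). I do not expect a genuine obstacle here; the proof is short once the reduction to $\sigma\models A$ is set up.
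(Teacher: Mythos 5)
Your proof is correct and follows the same route as the paper: reduce to automaton acceptance via Fact~\ref{fact-nfa-p2}, invoke \AX, and establish the equivalence with the $\dropi\sigma$ form by first restricting to $j\ge i$ and then applying Fact~\ref{fact-merging-shift} with the index shift $j\mapsto j-i$. The only difference is that you spell out the bookkeeping around $\exists^\omega$ and the reindexing a bit more explicitly than the paper's compressed chain of equivalences.
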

\begin{proof}
  Let~$\sigma$ be a sequence over a finite semigroup $\Gamma$.
  By Fact~\ref{fact-nfa-p2} it suffices to show that
  $\pred{2}\Gamma \sigma$ is equivalent to 
  $\exists i\, \exists j^\omega.~\merge{\dropi\sigma}{0}{j}$:
  \begin{align*}
    &\exists i\, \exists^\omega j.~
      \merge{\sigma}{i}{j}\\
    \toot~&\exists i\, \exists^\omega j.~
      j \geq i \land ~\merge{\sigma}{i}{j} \\
    \toot~&\exists i\, \exists j^\omega.~
      j \geq i \land \merge{\dropi\sigma}{0}{(j-i)}
      && \text{Fact~\ref{fact-merging-shift}} \\
    \toot~&\exists i\, \exists j^\omega.~
      \merge{\dropi\sigma}{0}{j}
  \end{align*}
  \vskip-5mm
\end{proof}

\begin{fact}[][aut_forall_sums_different]
\label{fact-nfa-p3}
Let $\Gamma$ be a finite semigroup.
Then there is an NFA $A$ such that
${\sigma \models A \toot \exists k.~\forall j \geq k.~\notmerge{\sigma}{0}{j}}$
for all sequences $\sigma$ over $\Gamma$.
\end{fact}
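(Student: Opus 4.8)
The plan is to construct $A$ directly, in the spirit of the automata for Facts~\ref{fact-nfa-c} and~\ref{fact-complement-automaton}: $A$ nondeterministically guesses a threshold $k$ and then deterministically tracks enough information to certify that no position $j\ge k$ merges with $0$. For a run that has guessed $k$, at position $m$ the automaton remembers the prefix color $\col(\substr\sigma0m)$ together with the finite set $S_m:=\mset{\col(\substr\sigma jm)}{k\le j<m}$ of suffix colors of the guessed positions seen so far. Since $\col$ is a semigroup morphism, reading a letter $a$ sends this pair to $(\,\col(\substr\sigma0m)+a,\ \mset{g+a}{g\in S_m}\cup\set a\,)$. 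Accordingly I would take the state type to be $\bool\times(\one+\Gamma)\times2^\Gamma$: a phase bit marking before/after the guess, the prefix color ($\oneM$ standing for the empty prefix at position $0$), and the pending set ($\eset$ before the guess). The initial state is $(\bfalse,\oneM,\eset)$; a before-state reading $a$ either stays before, updating the prefix color, or commits, moving into the after-phase with pending set $\set a$ (so committing while reading $\sigma_k$ makes the guessed positions exactly $k,k{+}1,\dots$); an after-state transitions deterministically as above. The accepting states are exactly the after-states $(\btrue,c,S)$ with $c\notin S$.

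The crux is a short computation using that $\col$ is a morphism and Fact~\ref{fact-merging-extend}: the property $c\in S$ is absorbing along after-states, i.e.\ if $\col(\substr\sigma jm)=\col(\substr\sigma0m)$ for some $j<m$ then the same holds with $m$ replaced by $m{+}1$ — which is exactly the statement that once $j$ merges with $0$ it keeps merging with $0$ at every later point. Consequently a run that commits at $k$ visits accepting states infinitely often if and only if it never reaches a bad after-state, i.e.\ $\col(\substr\sigma0m)\notin S_m$ for all $m>k$; runs that never commit stay in non-accepting before-states. A direct argument (needing no choice) then shows that $\col(\substr\sigma0m)\notin S_m$ for all $m>k$ is equivalent to $\forall j\ge k.~\notmerge\sigma0j$: if some $j\ge k$ merges with $0$ via a witness $l>j$, then $l>k$ and $\col(\substr\sigma jl)\in S_l$ with $\col(\substr\sigma jl)=\col(\substr\sigma0l)$; conversely any $m>k$ with $\col(\substr\sigma0m)\in S_m$ exhibits some $j\in[k,m)$ merging with $0$ at $m$.

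Putting these together yields $\sigma\models A$ iff some committed-at-$k$ run is accepting iff $\exists k.~\forall j\ge k.~\notmerge\sigma0j$, which is the claim; note $k=0$ is automatically ruled out since $\merge\sigma00$ always holds and committing while reading $\sigma_0$ lands in the bad after-state $(\btrue,\sigma_0,\set{\sigma_0})$. For the forward direction I would extract the commit point from the accepting run and use absorption to upgrade ``infinitely often accepting'' to ``never bad after the commit''. For the backward direction I would take the $k$ handed over by the hypothesis (without loss of generality $k\ge1$), run in before-mode through $\sigma_0,\dots,\sigma_{k-1}$, commit on $\sigma_k$, and then proceed deterministically, checking via the equivalence above that every after-state visited is accepting.

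The main obstacle I anticipate is pinning down the Büchi acceptance condition and the boundary bookkeeping: the target property $\exists k.~\forall j\ge k.~\notmerge\sigma0j$ has an ``eventually always'' shape that a Büchi condition captures only because $c\in S$ is absorbing, so the heart of the argument is the upward-closedness of merging in its witness (Fact~\ref{fact-merging-extend}) together with the off-by-one care aligning ``commit while reading $\sigma_k$'' with ``threshold $k$''. Everything else is routine: finiteness of $\bool\times(\one+\Gamma)\times2^\Gamma$ and decidability of the transition, initial, and accepting predicates follow at once from finiteness of $\Gamma$ and $2^\Gamma$ and computability of $\col$ and of set membership.
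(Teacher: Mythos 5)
Your construction is correct and matches the paper's in all essentials: guess the threshold $k$, then deterministically maintain both the prefix color $\col(\substr\sigma0m)$ and the set of suffix colors $\mset{\col(\substr\sigma jm)}{k\le j<m}$, using that both components update by right-addition when a letter is read. The only mechanical divergence is in how ``bad'' is handled: the paper restricts its transitions to states $(b,s)$ with $b\notin s$, so a run that would reach a bad state simply gets stuck (and all post-guess states are accepting), whereas you keep the transition relation total on the after-phase, mark bad states non-accepting, and invoke the absorption property (via Fact~\ref{fact-merging-extend}) to show a single bad visit dooms the run. These are equivalent; your extra $\bool$ phase bit is likewise just a cosmetic re-encoding of the paper's use of $\oneM$ in the second coordinate to mark the pre-guess phase.
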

\begin{proof}
The automaton~$A$ first guesses~$k$ and
then asserts that all greater positions
do not merge with~$k$.
The state type is ${(\one + \Gamma) \times (\one + 2^\Gamma)}$.
The first component is used to compute the
color of $\substr{\sigma}{0}{n}$ for $n > 0$ and
the second component to compute the 
set ${\mset{\sgsum (\substr{\sigma}{j}{n})}{k \leq j < n}}$
of colors of all suffixes of $\substr{\sigma}{k}{n}$.
The initial state is $(\oneM, \oneM)$.

To verify that the guessed position~$k$ was
correct,~$A$ needs to ensure that the color
of $\substr{\sigma}{0}{n}$ is never equal to
the color of a suffix $\substr{\sigma}{j}{n}$
with $k\leq j < n$ (as then $j$ merges with $0$).
If that is the case,~$A$ gets stuck and cannot
continue running. Hence all states $(a, s)$
are accepting for $a:\Gamma$ and $s:2^\Gamma$. 

There are transitions
${\transition{(\oneM, \oneM)}{a}{(a, \oneM)}}$,
${\transition{(b, \oneM)}{a}{(b + a, \oneM)}}$, and
${\transition{(b, \oneM)}{a}{(b+a,\set{a})}}$
for all ${a:\Gamma}$ to guess $k$.
To verify the guess there are
for all $a, b, c : \Gamma$ and $s: 2^\Gamma$ with $b \notin s$
transitions
${\transition{(b,s)}{a}{(b + a, \set{a}\cup\mset{c + a }{ c \in s})}}$.
\end{proof}

\begin{fact}[][sat_xm_exists_not_merging]
  \label{fact-ax-implies-p3-xm}
  Assume \AX{}.
  Then \pred{3} satisfies XM.
\end{fact}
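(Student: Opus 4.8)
The plan is to reduce $\pred{3}$ to acceptance of an NFA, exactly in the spirit of Facts~\ref{fact-ax-implies-p1-xm} and~\ref{fact-ax-implies-p2-xm} for $\pred{1}$ and $\pred{2}$, and then to apply $\AX$. The new ingredient over those two facts is the automaton provided by Fact~\ref{fact-nfa-p3}, which only speaks about merging with position~$0$; so the reduction will also need a shift of $\sigma$ by~$i$ using Fact~\ref{fact-merging-shift}.

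First I would fix a finite semigroup $\Gamma$, a sequence $\sigma$ over $\Gamma$, and a position $i$, and unfold $\pred{3}\,\Gamma\,\sigma\,i$ to $\exists k.~\neg\exists j\ge k.~\merge{\sigma}{i}{j}$. Since $\neg\exists$ is constructively equivalent to $\forall\neg$ (Fact~\ref{fact-xm-rules}(2)), this is $\exists k.~\forall j\ge k.~\notmerge{\sigma}{i}{j}$, and because the inner formula only gets weaker as $k$ grows, the existential may be restricted to $k\ge i$ without loss of generality. Now I would perform the index shift: for $k\ge i$ the positions $j\ge k$ are exactly the $j=i+j'$ with $j'\ge k-i$, and for such $j$ Fact~\ref{fact-merging-shift} gives $\merge{\sigma}{i}{j}\toot\merge{\dropi{\sigma}}{0}{j'}$. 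Chaining these equivalences, $\pred{3}\,\Gamma\,\sigma\,i$ is equivalent to $\exists k'.~\forall j'\ge k'.~\notmerge{\dropi{\sigma}}{0}{j'}$ (take $k'=k-i$ in one direction and $k=k'+i$ in the other). By Fact~\ref{fact-nfa-p3} applied to $\Gamma$ there is an NFA $A$ with $\tau\models A\toot\exists k'.~\forall j'\ge k'.~\notmerge{\tau}{0}{j'}$ for every $\tau$; instantiating $\tau:=\dropi{\sigma}$ shows that $\pred{3}\,\Gamma\,\sigma\,i$ is equivalent to $\dropi{\sigma}\models A$. Finally $\AX$ gives $\dropi{\sigma}\models A\lor\dropi{\sigma}\nvDash A$, and transporting along the equivalence yields $\pred{3}\,\Gamma\,\sigma\,i\lor\neg\pred{3}\,\Gamma\,\sigma\,i$, as required.

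The hard part will be purely the bookkeeping in the shift step: keeping the lower bound of the quantifier aligned (in particular the reduction to $k\ge i$, so that Fact~\ref{fact-merging-shift} applies to every $j$ in range) and keeping the de~Morgan step $\neg\exists\toot\forall\neg$ constructive rather than appealing to $\AX$ or $\XM$ prematurely. Everything else is a routine chain of equivalences followed by a single use of $\AX$.
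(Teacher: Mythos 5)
Your proposal is correct and follows essentially the same route as the paper: reduce $\pred{3}\,\Gamma\,\sigma\,i$ via the constructive $\neg\exists\toot\forall\neg$ step (Fact~\ref{fact-xm-rules}), shift indices with Fact~\ref{fact-merging-shift} after bounding the quantifier so that $j\ge i$ holds, and then apply $\AX$ to the automaton of Fact~\ref{fact-nfa-p3} on $\dropi{\sigma}$. The only cosmetic difference is that you restrict the existential to $k\ge i$ while the paper replaces $k$ by $k+i$; these are the same bookkeeping step.
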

\begin{proof}
  Let~$\sigma$ be a sequence over a finite semigroup $\Gamma$
  and $i$ be a number.
  By Fact~\ref{fact-nfa-p3} it suffices to show that
  $\pred{3}\Gamma \sigma i$ is equivalent to 
  $\exists k \,\forall j \geq k.~\notmerge{\dropi\sigma}{0}{j}$:
  \begin{align*}
     &\exists k.~\neg \exists j \geq k.~\merge{\sigma}{i}{j}\\
     \toot~&\exists k \,\forall j \geq k.
         ~\notmerge{\sigma}{i}{j}
         && \text{Fact~\ref{fact-xm-rules}}\\
     \toot~&\exists k \, \forall j \geq (k + i).
         ~\notmerge{\sigma}{i}{j}\\
     \toot~&\exists k \,\forall j \geq (k + i).
         ~\notmerge{\dropi\sigma}{0}{(j-i)}
         && \text{Fact~\ref{fact-merging-shift}}\\
     \toot~&\exists k \,\forall j \geq k.
         ~\notmerge{\dropi\sigma}{0}{j}
  \end{align*}
  \vskip -5mm
\end{proof}

\begin{fact}[][AX_implies_RP]
\label{fact-ax-implies-rp}
\AX{} implies \RP{}.
\end{fact}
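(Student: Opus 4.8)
The plan is to simply assemble the pieces that have already been put in place. All the real work has been done in the preceding facts: Fact~\ref{fact-xm-implies-rp-equiv-refine} reduces \RP\ to the statement that the three helper predicates \pred1, \pred2, and \pred3 satisfy XM, and Facts~\ref{fact-ax-implies-p1-xm}, \ref{fact-ax-implies-p2-xm}, and~\ref{fact-ax-implies-p3-xm} each establish, under the assumption \AX, that one of these predicates satisfies XM. So the proof is just a chaining step.

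Concretely, I would assume \AX. Then Fact~\ref{fact-ax-implies-p1-xm} gives that \pred1 satisfies XM, Fact~\ref{fact-ax-implies-p2-xm} gives that \pred2 satisfies XM, and Fact~\ref{fact-ax-implies-p3-xm} gives that \pred3 satisfies XM. Feeding these three XM properties into Fact~\ref{fact-xm-implies-rp-equiv-refine} yields \RP\ directly.

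There is essentially no obstacle at this level: the conceptual difficulty was already absorbed into the design of \pred1, \pred2, \pred3 as predicates encodable into automata \emph{without} complementation, and into the de~Morgan/double-negation manipulations of Fact~\ref{fact-xm-implies-rp-equiv-refine}. The only thing to be mildly careful about is that the finite semigroup $\Gamma$ and the sequence $\sigma$ remain arbitrary throughout, so the XM instances supplied must be the fully universally quantified versions; since Facts~\ref{fact-ax-implies-p1-xm}, \ref{fact-ax-implies-p2-xm}, and~\ref{fact-ax-implies-p3-xm} are stated for arbitrary $\Gamma$ and $\sigma$, this matches the hypothesis of Fact~\ref{fact-xm-implies-rp-equiv-refine} exactly, and the implication goes through immediately.

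\begin{proof}
  Assume \AX.
  By Facts~\ref{fact-ax-implies-p1-xm}, \ref{fact-ax-implies-p2-xm},
  and~\ref{fact-ax-implies-p3-xm}, the predicates
  \pred1, \pred2, and \pred3 satisfy XM.
  Hence \RP\ holds by Fact~\ref{fact-xm-implies-rp-equiv-refine}.
\end{proof}
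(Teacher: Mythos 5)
Your proof is correct and matches the paper's argument exactly: both assemble Facts~\ref{fact-ax-implies-p1-xm}, \ref{fact-ax-implies-p2-xm}, and~\ref{fact-ax-implies-p3-xm} to discharge the hypotheses of Fact~\ref{fact-xm-implies-rp-equiv-refine}. No differences to report.
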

\begin{proof}
Follows with Facts~\ref{fact-xm-implies-rp-equiv-refine},
\ref{fact-ax-implies-p1-xm}, 
\ref{fact-ax-implies-p2-xm}, and
\ref{fact-ax-implies-p3-xm}.
\end{proof}

\begin{theorem}[][main_all_equiv]
\label{theorem-all-equivalences}
\FX{}, \AX{}, \AC{}, \AU{}, \RF{}, \RA{}, and \RP{} are pairwise equivalent.
\end{theorem}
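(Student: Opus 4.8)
The plan is to assemble the implications proved throughout the paper into a single cycle passing through all seven propositions; since logical equivalence is transitive, it suffices to exhibit a chain of implications that closes up. First I would collect the equivalences that are already available as such: \RF{} and \RA{} are equivalent by Facts~\ref{fact-ra-implies-rf} and~\ref{fact-rf-implies-ra}; \RF{} and \AC{} are equivalent by Theorem~\ref{theorem-rf-equiv-ac}; \AC{} and \AU{} are equivalent by Theorem~\ref{theorem-ac-equiv-au}; and \RF{} and \RP{} are equivalent by Facts~\ref{fact-rf-implies-rp} and~\ref{fact-rp-implies-rf}. At this stage \RA{}, \RF{}, \AC{}, \AU{}, and \RP{} all lie in one equivalence class, so the only remaining work is to fold \FX{} and \AX{} into it.

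For \FX{}: \RF{} implies \FX{} by Corollary~\ref{corollary-rf-implies-fx}, and \FX{} implies \RP{} by Fact~\ref{fact-fx-implies-rp}; since \RP{} is already equivalent to \RF{}, this makes \FX{} equivalent to \RF{} as well. For \AX{}: \AC{} implies \AX{} by Fact~\ref{fact-ac-implies-ax}, and \AX{} implies \RP{} by Fact~\ref{fact-ax-implies-rp}; as both \AC{} and \RP{} are equivalent to \RF{}, so is \AX{}. Hence all seven propositions are pairwise equivalent. Concretely one can read off the cycle $\RA{}\to\RF{}\to\AC{}\to\AX{}\to\RP{}\to\RF{}$ together with $\RF{}\to\FX{}\to\RP{}$ and $\AC{}\leftrightarrow\AU{}$, which covers every pair.

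The main obstacle is not in this final assembly, which is pure bookkeeping, but in the substantive lemmas it draws on — in particular the encodings behind $\FX{}\to\RP{}$ (Fact~\ref{fact-fx-implies-rp}, via the S1S formula for the merging relation) and $\AX{}\to\RP{}$ (Fact~\ref{fact-ax-implies-rp}, via the three automata for the helper predicates), as well as the constructive proof of $\RP{}\to\RF{}$ (Fact~\ref{fact-rp-implies-rf}), where the homogeneous boolean sequence had to be produced as a genuine computational function. Given all of these, the theorem follows immediately.
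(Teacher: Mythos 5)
Your assembly matches the paper's own proof: it cites the same ingredients (Facts~\ref{fact-ra-implies-rf}, \ref{fact-rf-implies-ra}, \ref{fact-ac-implies-ax}, Theorems~\ref{theorem-rf-equiv-ac} and~\ref{theorem-ac-equiv-au}, Corollary~\ref{corollary-rf-implies-fx}, and Facts~\ref{fact-rp-implies-rf}, \ref{fact-fx-implies-rp}, \ref{fact-ax-implies-rp}) and closes the same cycle of implications; the additional use of Fact~\ref{fact-rf-implies-rp} is harmless but not needed. The proof is correct and essentially identical to the paper's.
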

\begin{proof}
Follows with
Facts \ref{fact-ra-implies-rf},
\ref{fact-rf-implies-ra},
and~\ref{fact-ac-implies-ax},
Theorems~\ref{theorem-rf-equiv-ac} and~\ref{theorem-ac-equiv-au},
Corollary~\ref{corollary-rf-implies-fx},
and Facts~\ref{fact-rp-implies-rf},
\ref{fact-fx-implies-rp},
and~\ref{fact-ax-implies-rp}.
\end{proof}

Note that each of the propositions 
in Theorem~\ref{theorem-all-equivalences} 
follows from $\XM$ 
(since $\FX$ follows from $\XM$)
and is unprovable constructively 
(since $\RF$ implies $\IP$ and $\MP$, Fact~\ref{fact-IP-MP}).

\section{Final Remarks}

In this paper we have studied the reduction
of S1S to Büchi automata in Coq's constructive type theory.
We have worked with two different semantics,
AS semantics and UP semantics.
For UP semantics, we showed without assumptions
that Büchi's complement operation is correct
and that S1S is decidable and classical.
For AS semantics, we obtained these results
assuming RF (a weak version of Ramsey's theorem
following with excluded middle).
We showed that the assumption RF is 
strictly necessary for AS semantics
since (1) it is constructively unprovable
and (2) it is entailed by
each of the following properties:
Complement automata exist, 
automaton acceptance satisfies XM, and
formula satisfaction satisfies XM.

AS semantics is the canonical semantics 
for S1S and Büchi automata in the literature.
Our results show that AS semantics
does not work constructively.
To make it work we need to assume RF.
While RF is a consequence of excluded middle,
it seems unlikely that RF entails excluded middle.

UP semantics admits only 
ultimately periodic sequences $xy^\omega$
specified by two strings $x$ and $y$.
It is not surprising that UP semantics
works constructively, and that 
UP semantics agrees with AS semantics 
if RF is assumed.

Doczkal and Smolka~\cite{DoczkalSmolka2014}
give a purely constructive development
of the temporal logic CTL.  
To make this possible, they admit only
finite transition systems as models.
Using tableau methods,
they prove decidability and 
show soundness and completeness
of a standard Hilbert proof system.
This way they establish 
in a purely constructive way that
the constructive semantics agrees
with the standard semantics of CTL
as given by the Hilbert system.
Assuming XM and dependent choice, 
they also show that 
the standard path semantics of CTL
agrees with the constructive semantics.

Sound and complete proof systems
for S1S exist~\cite{riba12,Shelah75,siefkes70}.
We expect that 
soundness and completeness for UP semantics
can be shown constructively.
For AS semantics, 
RF will be necessary for soundness.

In this paper, we have only considered
Büchi's complement operation~\cite{buechi1962,thomas99}.
We expect that other complement operations,
in particular complementation by transformation
to deterministic Muller automata~\cite{mcNaughton66,thomas99},
can also be verified for AS semantics given RF.
Recall that we have shown that 
RF is needed for the verification
of every complement operation.

As it comes to future work,
we plan to extend the constructive analysis 
of automata-based decision methods 
from S1S to further logics such as 
LTL, CTL, and S2S.

\bibliography{s1s}
\bibliographystyle{abbrv}

\end{document}